\newtheorem{theorem}{Theorem} 
\newtheorem{corollary}{Corollary} 
\newtheorem{lemma}{Lemma}
\newcommand{\tr}{\operatorname{tr}}
\renewcommand{\Re}{\operatorname{Re}}
\newcommand{\id}{\mathbbm{1}}
\newcommand{\Exp}{\mathop{\mathbb{E}}}
\newcommand{\Var}{\mathop{\mathrm{Var}}}
\newcommand{\EPR}{\ket{\mathrm{EPR}}}
\renewcommand{\Pr}{\mathrm{Pr}}
\begin{document}

\title{Relieving the post-selection problem by quantum singular value transformation}

\author{Hong-Yi Wang}
\email{hywang@princeton.edu}
\affiliation{Princeton Quantum Initiative, Princeton University, Princeton, NJ 08544, USA}


\date{\today}

\begin{abstract}
    Quantum measurement is a fundamental yet experimentally challenging ingredient of quantum information processing. Many recent studies on quantum dynamics focus on expectation values of nonlinear observables; however, their experimental measurement is hindered by the post-selection problem---namely, the substantial overhead caused by uncontrollable measurement outcomes. In this work, we propose a post-selection--free experimental strategy based on a fully quantum approach. The key idea is to deterministically simulate the post-selected quantum states by applying quantum singular value transformation (QSVT) algorithms. For pure initial state post-selection, our method is a generalization of fixed-point amplitude amplification to arbitrary projective measurements, achieving an optimal quadratic speedup. We further extend this framework to mixed initial state post-selection by applying linear amplitude amplification via QSVT, which significantly enhances the measurement success probability. However, a deterministic quantum algorithm for preparing the post-selected mixed state is generally impossible because of information-theoretic constraints imposed by quantum coding theory. Additionally, we introduce a pseudoinverse decoder for measurement-induced quantum teleportation. This decoder possesses the novel property that, when conditioned on a successful flag measurement, the decoding is nearly perfect even in cases where channel decoders are information-theoretically impossible. Overall, our work establishes a powerful approach for measuring novel quantum dynamical phenomena and presents quantum algorithms as a new perspective for understanding quantum dynamics and quantum chaos.
\end{abstract}

\maketitle

\section{Introduction}
\label{sec:intro}

Understanding the capability of measurements is one of the cornerstones of the framework of quantum mechanics. Traditionally, we regard the expectation values of Hermitian linear observables as measurable quantities. Hence, the density matrix encapsulates all information about linear observables. However, a more general description of an ``undetermined'' state is to model it as a random ensemble of state vectors, each occurring with some probability. Recent developments in quantum simulators---with the ability to perform site-resolved and trajectory-resolved manipulations---have enabled efficient access to certain non-linear observables (see, e.g., Refs.~\cite{Brydges2019probing,PhysRevLett.124.010504,PhysRevX.12.011018,Bluvstein2024}). This advancement in quantum technologies motivates a deeper understanding of the properties of state ensembles.

State ensembles have emerged as a powerful tool for studying the probabilistic nature of quantum physics. In particular, randomly distributed pure states play important roles in quantum complexity theory \cite{PRXQuantum.2.030316} and black hole physics \cite{HaydenPreskill2007,PhysRevD.90.126007}. They can also be used to demonstrate quantum advantage \cite{Arute2019,PhysRevLett.127.180501} and to benchmark quantum devices \cite{Choi2023}. A natural way to generate a state ensemble is through measurement: the state randomly collapses into different trajectories corresponding to the measurement outcomes. Many interesting phenomena have been identified as unique properties of such measurement-induced state ensembles, also known as \emph{projected ensembles}. For example, starting from a many-body wave function, measuring a subsystem of qubits can induce a state design (i.e., an approximate Haar-random state ensemble) on the remaining subsystem \cite{Choi2023,PRXQuantum.4.010311,PhysRevLett.128.060601,Ippoliti2022solvablemodelofdeep}. Another significant example is the \emph{measurement-induced phase transition} (MIPT) \cite{PhysRevX.9.031009, PhysRevB.99.224307, PhysRevLett.125.030505, PhysRevX.10.041020, PhysRevB.101.104302, PhysRevB.101.104301}, where the transition is reflected in nonanalytic behavior of average entanglement entropy, decodability, learnability, etc. This stands in sharp contrast to traditional phase transitions, which are physically manifested in the discontinuity or nonanalyticity of linear observables, known as order parameters. In contrast, MIPT can only be detected through non-linear observables.

This work addresses a ubiquitous challenge in measuring non-linear observables of a projected ensemble: the post-selection problem. The core issue is that to characterize a state, one must prepare and measure it multiple times. In a projected ensemble, the probability of obtaining the same state even twice is low, necessitating many repetitions of the experiment. Consequently, post-selection leads to an exponential overhead in sample complexity (in the number of measured qubits), posing a severe bottleneck in experimentally detecting MIPTs and other measurement-induced phenomena. Previously, the mainstream approach to circumvent the post-selection problem has been to replace the non-linear observable with a quantum-classical correlation \cite{PhysRevLett.125.070606, PhysRevLett.130.220404, PRXQuantum.5.030311}. This strategy has been successfully employed in experiments to demonstrate MIPTs \cite{Noel2022,Hoke2023,kamakari2024scalablecrossentropy}, but fundamentally depends on the ability to classically simulate the quantum dynamics.

In this work, we propose a genuinely quantum solution to the post-selection problem, which does not rely on classical simulation at all. The key insight is that, whenever we have a state preparation protocol with low probability (i.e., preparation via post-selection), we can transform it into a deterministic preparation protocol, albeit at the cost of a deeper circuit. The simplest case involves post-selection on a pure state, which can be achieved using Grover's search algorithm \cite{grover1996fast}. For the best stability, we implement fixed-point amplitude amplification (FPAA) \cite{PhysRevLett.113.210501} within the framework of quantum singular value transformation (QSVT) \cite{GilyenQSVT,PRXQuantum.2.040203}. At a high level, QSVT serves as a unifying quantum algorithm for many tasks with known quantum advantage, and thus underpins all algorithms proposed in this work. In our context, the post-selected state is naturally embedded as a block of the unitary evolution matrix. This structure, known as block encoding, is central to many quantum algorithms including QSVT. Notably, unlike other uses of QSVT that require deliberate construction of a block encoding for the target matrix, here the block encoding arises naturally from the evolution combined with the measurement projector. This FPAA algorithm provides an optimal square-root speedup over naive post-selection, akin to Grover's speedup argument \cite{BBBV1997}. With FPAA, we can thus perform experiments without incurring sample-complexity overhead from post-selection, with the tradeoff being the need for longer state preparation circuits.

Our approach becomes even more interesting when extended to mixed initial states. The question is: can we transform a mixed state into its corresponding post-selected state deterministically? Unfortunately, there is generally no deterministic quantum algorithm to do so. However, QSVT remains valuable because it allows for significant enhancement of the state preparation success probability. While an information-theoretic bound limits the maximum achievable success probability, linear amplitude amplification (LAA) by QSVT can yield an $O(1)$ success probability for generic mixed states. By tuning a single parameter in LAA, we can continuously trade off between high success probability and high fidelity of state transformation. The performance of our LAA algorithm depends on how ``uninformative'' the measurement is about the pure states hidden within the mixed state. Roughly, the less information the measurement reveals about the pure state, the higher the success probability. In the extreme case of a completely uninformative measurement, our algorithm becomes deterministic.

The above-mentioned task of mixed state post-selection bears strong structural similarity to measurement-induced quantum state teleportation \cite{PhysRevLett.70.1895,PhysRevLett.132.030401,Antonini2022}. It follows that the QSVT-based algorithm also inspires a novel and intuitive decoder for quantum teleportation. We identify the mapping from the initially encoded state to the post-selected state as a non-unitary linear map that comes naturally with a block encoding description. In this framework, our decoder effectively amounts to applying the Moore--Penrose pseudoinverse, which is known to be implementable via QSVT. The decoder's performance is qualitatively similar to that of the LAA algorithm in mixed state post-selection. First, a deterministic (quantum channel) decoder does not exist unless the decoupling condition is satisfied \cite{hayden2008decoupling}. However, a unique feature of our pseudoinverse decoder is that, even when the decoupling condition is violated, the decoding fidelity can still approach one. This is because our algorithm sacrifices success probability in favor of higher decoding fidelity. This decoder favors uninformative measurements as well. Furthermore, this decoder can also be directly applied to the decoding task in approximate quantum error correction codes. In this application, the violation of the decoupling condition is guaranteed to be small, and it follows that our pseudoinverse decoder can succeed with $O(1)$ probability. 

Finally, all the quantum algorithms introduced above lead us to a conjectured tradeoff relation in terms of post-selected state complexity. This arises from a comparison between the quantum-classical correlation approach and our amplitude amplification approach. We conjecture that, to overcome the post-selection problem, one must either classically simulate all quantum trajectories or implement complex quantum state preparation circuits. Furthermore, such a tradeoff may be a general feature across a wide class of state transformation problems, as the decoding task similarly exhibits a dichotomy between classical simulation approaches and quantum-intensive approaches like our pseudoinverse decoder.

The remainder of this paper is organized as follows. In Sec.~\ref{sec:previous}, we introduce the post-selection problem, review quantum-classical correlation approaches, and discuss their inherent limitations. Section~\ref{sec:FPAA} presents our QSVT-based amplitude amplification algorithm, which deterministically simulates post-selected pure states with optimal quadratic speedup. In Sec.~\ref{sec:protocol}, we propose a post-selection--free experimental protocol that uses fixed-point amplitude amplification to efficiently estimate nonlinear observables. Section~\ref{sec:sequential} extends our framework to sequential measurements by recasting mid-circuit projective measurements into a final post-selection scheme. Section~\ref{sec:mixed} generalizes our approach to mixed state post-selection via linear amplitude amplification. In Sec.~\ref{sec:teleportation}, we develop a pseudoinverse decoder for measurement-induced teleportation that achieves near-perfect decoding fidelity. Finally, Sec.~\ref{sec:tradeoff} proposes a tradeoff relation between classical simulability and quantum complexity, and Sec.~\ref{sec:summary} summarizes our findings while outlining promising directions for future research.

\section{The post-selection problem: previous attempts}
\label{sec:previous}

Let us begin with a minimalistic version of our problem. On a quantum device, we prepare a state $\ket{\psi}$ and subsequently perform a projective measurement described by the projectors $\{\Pi_m\}$. The post-measurement state corresponding to outcome $m$ is given by
\begin{equation}
    \ket{\psi_m} = \frac{\Pi_m\ket{\psi}}{\sqrt{p_m}} = \frac{\Pi_m\ket{\psi}}{\sqrt{ \braket{\psi|\Pi_m|\psi} }},
\end{equation}
where $p_m=\braket{\psi|\Pi_m|\psi}$. This is illustrated in Fig.~\ref{fig:post_to_AA}. Although we depict the measurement as a computational basis measurement on a subsystem in Fig.~\ref{fig:post_to_AA}(b), as we will continue to do throughout, this is not essential: all approaches and reasoning apply equally to general projectors of arbitrary rank acting on the full system. The measurement converts the initial state into a random ensemble of pure states, called the projected ensemble, which we denote as a collection of probability--state pairs:
\begin{equation}    \label{eq:proj_ensemble}
    \mathcal{E} = \{ p_m, \ket{\psi_m} \}_m   .
\end{equation}
The setup of measurement-induced phase transitions is different in that measurements occur mid-circuit, but it is always possible to push all measurements to the final time (see Sec.~\ref{sec:protocol}), thereby interpreting such ensembles as projected ensembles described above. The procedure used to generate this ensemble is inherently probabilistic: while one can sample from the ensemble, the specific outcome obtained in each experimental run cannot be controlled.

A state ensemble contains more information than the density matrix. If we study the ensemble average of linear observables, these are merely properties of the density matrix. In contrast, we are particularly interested in non-linear observables, which are exclusive properties of a state ensemble. For example, a frequently studied non-linear observable is the subsystem purity:
\begin{equation}    \label{eq:purity}
    P_{Am} = \tr_{A} \left( \rho_{Am} \right)^2   ,
\end{equation}
where $\rho_{Am} = \tr_{\bar{A}} \ket{\psi_m} \bra{\psi_m}$. We are ultimately interested in the ensemble average $P_{A} = \sum_m p_m P_{Am}$. To obtain an unbiased estimate of $P_{Am}$, one must prepare the state $\psi_m$ at least twice---for instance, by a SWAP test between two copies of $\rho_{Am}$. In contrast, any quantity like $\braket{\psi_m | O | \psi_m}$ cannot be an unbiased estimate of purity. However, if many qubits are measured, each probability $p_m$ becomes exponentially small (in the number of qubits measured), so once we obtain $\psi_m$, we cannot expect to encounter it again in a feasible timeframe. This experimental conundrum in measuring non-linear observables is referred to as the \emph{post-selection problem}.

\begin{figure*}[t]
    \centering
    \includegraphics[width=\linewidth]{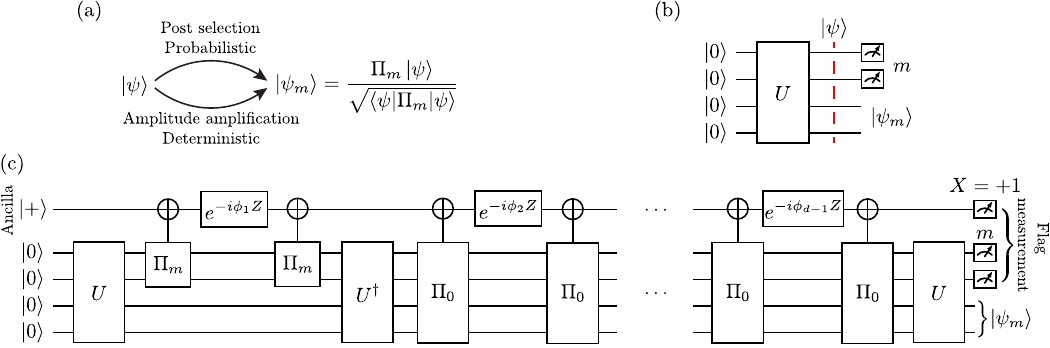}
    \caption{Illustration of the main idea. (a) When there exists a measure-and-post-select protocol to map $\ket{\psi}$ to $\ket{\psi_m}$, it can be replaced by a deterministic protocol via amplitude amplification. (b) Circuit diagram for $\ket{\psi}$ state preparation and post-selection. (c) The fixed-point amplitude amplification circuit that equivalently produces $\ket{\psi_m}$.}
    \label{fig:post_to_AA}
\end{figure*}

To overcome the post-selection problem, the currently prevailing approach is to replace the non-linear quantity with a \emph{quantum-classical correlation}. This idea is inspired by the cross-entropy benchmark (XEB), used as a proxy for state-preparation fidelity in quantum supremacy experiments \cite{Boixo2018,Arute2019,RevModPhys.95.035001}. Again, we illustrate this idea using the example of measuring the ensemble-averaged purity. Crudely describing the strategy, each time we obtain a measurement outcome $m$, instead of recording $\tr(\rho_{Am}^2)$, we record the value
\begin{equation}    \label{eq:QCcorrelation}
    \tr(\rho_{Am}^Q \rho_{Am}^C)    ,
\end{equation}
where $\rho_{Am}^Q$ is from the quantum experiment, and $\rho_{Am}^C$ is obtained from classical simulation of the same dynamics \cite{PhysRevLett.125.070606,Hoke2023,PRXQuantum.5.030311}. As long as the classical simulation is accurate, each record provides an unbiased estimate of the ensemble-averaged purity. Hence, all measurement records are useful, and the post-selection problem is circumvented. Classical simulation can be performed using exact matrix evaluation \cite{Hoke2023}, the stabilizer formalism \cite{PhysRevLett.130.220404,kamakari2024scalablecrossentropy}, or approximate methods such as matrix-product states (MPS) \cite{PRXQuantum.5.030311} or machine learning \cite{Kim2025talk, You2025toappear}. A practical method to measure Eq.~\eqref{eq:QCcorrelation} is to further replace $\rho_{Am}^Q$ with a classical shadow \cite{Huang2020}, which serves as an unbiased estimator of $\rho_{Am}^Q$ and is based on single-shot measurements. Other non-linear quantities beside the purity can be estimated in a similar manner.

A serious disadvantage of the quantum-classical correlation approach is its heavy reliance on the ability to classically simulate quantum dynamics. However, if the dynamics can be classically simulated, the necessity of conducting the quantum experiment is diminished---except perhaps as a consistency check between quantum and classical descriptions. Therefore, to perform experiments with broader applicability, one must be able to quantum-simulate the post-measurement state itself: upon obtaining a measurement outcome $m$, we should be able to reproduce the state $\psi_m$ multiple times. This motivates the use of the quantum algorithm of amplitude amplification, which we introduce in the next section.

In addition to quantum-classical correlation, another existing approach to mitigate post-selection overhead is to study non-unitary dynamics that are space--time dual to unitary dynamics \cite{PhysRevLett.126.060501,Hoke2023}. In such models, by realizing the dual unitary evolution, one can access subsystem purities without any explicit post-selection in the experiment. However, the class of non-unitary circuits that are dual to unitary ones is quite limited. Moreover, even within this class, many observables beyond purity are not dual to post-selection--free observables. Therefore, it remains highly desirable to develop a general method for simulating post-selected states.

\section{Amplitude amplification by QSVT}
\label{sec:FPAA}

As motivated above, our goal is to replace the probabilistic measurement protocol that maps $\ket{\psi} \to \ket{\psi_m}$ with a deterministic one. In this section, we restrict the discussion to pure states subjected to projective measurements. Mixed-state post-selection will be addressed in Secs.~\ref{sec:mixed} and \ref{sec:teleportation}, where we demonstrate that a similar approach applies, albeit with certain caveats.

This computational task is known as amplitude amplification. Initially, the state $\ket{\psi}$ has a (typically small) overlap with the target subspace defined by $\Pi_m$,
\begin{equation}
    p_m = \braket{\psi | \Pi_m | \psi} \ll 1.
\end{equation}
Our objective is to apply a sequence of unitary operations to $\ket{\psi}$ such that the resulting state attains an overlap close to unity with the post-measurement state $\ket{\psi_m}$, which resides in the subspace defined by $\Pi_m$. In this way, amplitude amplification deterministically simulates the post-measurement state.

In fact, this goal can be achieved using Grover's search algorithm---famous for its quadratic speedup over classical algorithms \cite{grover1996fast,BBBV1997}. We briefly review Grover's algorithm to motivate the subsequent approaches. The algorithm repeatedly applies two unitary reflection operators, $R_m$ and $R_i$, to the initial state $\ket{\psi}$:
\begin{equation}
    R_i = 2\ket{\psi}\bra{\psi} - \id = 2U \ket{0}\bra{0} U^\dagger - \id,
\end{equation}
\begin{equation}
    R_m = 2\Pi_m - \id   .
\end{equation}
The net effect of $R_i R_m$ is to rotate the state $\ket{\psi}$ towards $\ket{\psi_m}$. The rotation angle is approximately $2\sqrt{p_m}$ in the small-$p_m$ limit. Note that the rotation always occurs within the plane spanned by $\ket{\psi}$ and $\ket{\psi_m}$. Consequently, $O(1/\sqrt{p_m})$ iterations are required to reach $\ket{\psi_m}$. In Appendix~\ref{app:lowerbound}, we provide a proof that this quadratic speedup is optimal for amplitude amplification tasks with general multi-rank target projectors.

Although it is tempting to directly apply the above strategy to overcome the post-selection problem, Grover's algorithm has a few disadvantages. First, it requires prior knowledge of the rotation angle---equivalent to knowing $p_m$ in advance---to determine the appropriate number of iterations. Over-rotating or under-rotating due to incorrect iteration counts significantly reduces the fidelity with $\ket{\psi_m}$, a problem previously referred to as the ``soufflé problem'' \cite{Brassard1997}. Second, the final infidelity cannot be reduced arbitrarily, as the iteration count must be an integer, and there may not exist an integer that corresponds to an exact $\pi/2$ rotation. Both of these issues are resolved by the FPAA algorithm \cite{PhysRevLett.113.210501}, which can achieve any desired fidelity as long as the initial overlap exceeds a certain threshold. Furthermore, to integrate this approach with mixed-state post-selection in subsequent sections, we implement FPAA using quantum singular value transformation (QSVT) \cite{PRXQuantum.2.040203}. This implementation is as efficient as the original FPAA and offers greater interpretability and generalizability.

QSVT aims to apply matrix functions on possibly non-unitary or even non-square matrices, which we denote as $M$. How is such an oracle $M$ implemented in quantum circuits? This is achieved through a strategy called block encoding. The matrix $M$ is embedded as a block within a unitary matrix $U$:
\begin{equation}
    U = 
    \kbordermatrix{
    & \Pi &  \\
    \tilde{\Pi} & M & \cdot \\
    & \cdot & \cdot
  }
\end{equation}
Equivalently, we can write $M = \tilde{\Pi} U \Pi$, where $\Pi$ and $\tilde{\Pi}$ are projectors. We will frequently use the relative phase gate between the linear subspace $\operatorname{span}(\Pi)$ and its complement, defined as 
\begin{equation}    \label{eq:rel_rot}
    \Pi_{\phi} = e^{i\phi} \Pi + e^{-i\phi} (\id-\Pi)
    = e^{i\phi (2\Pi - \id)}   ,
\end{equation}
and similarly for $\tilde{\Pi}$. As a remark, with one ancilla qubit, $\Pi_\phi$ can be synthesized using two $\mathrm{C_{\Pi}NOT}$ gates and a single-qubit ancilla rotation $e^{-i \phi Z}$ between them (see Fig.~\ref{fig:post_to_AA}(c) and Appendix~\ref{app:QSVT} for details).

Singular value transformations are defined as follows. Let the singular value decomposition (SVD) of $M$ be $\sum_i s_i \ket{u_i} \bra{v_i}$. For an odd polynomial $P_d(x)$, its action on a matrix is defined as
\begin{equation}
    P_d(M) = \sum_i P_d(s_i) \ket{u_i} \bra{v_i}   .
\end{equation}
For example, if $P_d(x) = x^3+x$, then $P_d(M) = M M^\dagger M + M$. Actions of even polynomials can also be defined with slight modifications, but we will not use them in this work. One of the main results of QSVT is that any real odd polynomial satisfying $|P_d(x)| \leq 1$ for $x \in [-1,1]$ can be implemented via the so-called \emph{alternating phase modulation sequence}:
\begin{equation}    \label{eq:QSVT_odd}
    \begin{aligned}
        U_{\vec{\phi}} &\equiv \tilde{\Pi}_{\phi_1} U 
        \prod_{k=1}^{(d-1)/2} \left( \Pi_{\phi_{2k}} U^\dagger \tilde{\Pi}_{\phi_{2k+1}} U \right) \\
        &= \kbordermatrix{
        & \Pi &  \\
        \tilde{\Pi} & \tilde{P}_d(M) & \cdot \\
        & \cdot & \cdot
    },
    \end{aligned}
\end{equation}
where $\tilde{P}_d$ is a complex polynomial such that $\Re \tilde{P}_d = P_d$. The phase sequence $\vec{\phi}$ can be determined entirely from the polynomial $P_d$, independent of the specific matrix $M$. There exist efficient and numerically stable classical algorithms for computing $\vec{\phi}$ \cite{PRXQuantum.2.040203,Dong_2021}. Although QSVT only implements polynomial transformations, it is important to note that many useful functions can be approximated by polynomials.

Within the QSVT framework, Grover's search algorithm emerges as a special case, where the reflections $R_i$ and $R_m$ are both of the form $\Pi_{\pi/2}$ up to a global phase. In fact, Grover's algorithm corresponds to realizing Chebyshev polynomials. However, we can design improved polynomial transformations by carefully choosing the phase sequences \footnote{As a remark, in Fig.~\ref{fig:post_to_AA}(c) and all subsequent QSVT implementations, we realize an odd degree-$d$ polynomial but use only $(d-1)$ applications of $(\Pi_0)_\phi$ and $(\Pi_m)_\phi$. This is because the final $(\Pi_m)_\phi$ acts like a global phase within the $\Pi_m$ subspace and can thus be omitted.}. 

We now interpret amplitude amplification as the search for a particular transformation on an encoded matrix. The encoded matrix is
\begin{equation}   \label{eq:post_block_enc}
    M = \Pi_m U \Pi_0 = \Pi_m \ket{\psi}\bra{0} = \sqrt{p_m} \ket{\psi_m} \bra{0}   ,
\end{equation}
where $\Pi_0 = \ket{0}\bra{0}$ is the projector onto the initial state. $M$ has only one nonzero singular value, $\sqrt{p_m}$, and the goal is to amplify it to a value close to $1$, since if it becomes $1$, we can apply $\ket{\psi_m}\bra{0}$ on $\ket{0}$ to get $\ket{\psi_m}$. Figure~\ref{fig:post_to_AA}(c) shows the circuit diagram for QSVT acting on $M$ as defined in Eq.~\eqref{eq:QSVT_odd}. Whereas Grover's algorithm implements the transformation $x \to T_d(x)$ (the Chebyshev polynomial of the first kind, which is highly oscillatory), a more suitable alternative is an approximate sign function. Specifically, we seek an odd polynomial satisfying:
\begin{enumerate}[(1)]
    \item $\left| P_d(x) \right| \leq 1$ for $x \in [-1,1]$;
    \item $1 - P_d(x) \leq \delta$ for $x \in [\sqrt{p^*}, 1]$, where $p^*$ is a (typically small) positive number.
\end{enumerate}
As long as $p_m \geq p^*$, the FPAA succeeds with high probability. Crucially, this method does not require prior knowledge of the exact value of $p_m$. As a classic result in polynomial approximation theory, there exists a polynomial of degree $d = O\left(\frac{1}{\sqrt{p^*}} \log \frac{1}{\delta}\right)$ satisfying the above conditions \cite{GilyenQSVT,Low2017QSP}. Hence, the query complexity (i.e., the number of queries to the block encoding $U$ and $U^\dagger$) is also $O\left(\frac{1}{\sqrt{p^*}} \log \frac{1}{\delta}\right)$. Compared to the query complexity of Grover's algorithm, FPAA via QSVT achieves optimal scaling with respect to $p^*$ as well, albeit with a slightly larger constant factor, since Grover's iteration achieves the fastest possible convergence (see Appendix~\ref{app:lowerbound}). On the other hand, FPAA has the advantage of offering precise control over the error $\delta$ through moderately higher-degree polynomials.

Finally, we address countermeasures for the residual error $\delta$ in FPAA. For general QSVT algorithms, as indicated by Eq.~\eqref{eq:QSVT_odd}, the outcome is $P_d(M)\ket{\varphi}$ only if $\ket{\varphi} \in \operatorname{span}(\Pi)$ and the final state is post-selected onto $\tilde{\Pi}$. We refer to the measurements that indicate success in QSVT as \emph{flag measurements}. In Fig.~\ref{fig:post_to_AA}(c), in addition to the $\Pi_m$ measurement, there is a flag measurement on the ancilla qubit requiring the $\ket{+} = (\ket{0} + \ket{1})/\sqrt{2}$ state. This arises from the procedure of extracting the real part, similar to the Hadamard test (see Appendix~\ref{app:QSVT} for details). However, FPAA differs in that the flag measurement is, in principle, unnecessary: the probability of flag failure is small, bounded as $1 - P_d(\sqrt{p_m})^2 < 2\delta$. Nevertheless, one may still perform the flag measurement to ensure that the post-selected state is precisely $\ket{\psi_m}$.

\section{Post-selection-free experiment}
\label{sec:protocol}

With the FPAA algorithm in hand, we now present a strategy to measure non-linear observables of a projected ensemble.

As before, we denote the projected ensemble of interest as in Eq.~\eqref{eq:proj_ensemble}. Let us consider the task of measuring an observable that depends on up to the $k$th moment:
\begin{equation}    \label{eq:non_linear_O}
    \braket{\mathcal{O}} = \sum_m p_m \tr \left( \left( \ket{\psi_m}\bra{\psi_m} \right)^{\otimes k} \mathcal{O} \right)    ,
\end{equation}
where $\mathcal{O}$ is a Hermitian operator acting on $k$ copies of the system. The goal is to estimate $\braket{\mathcal{O}}$ with high accuracy. Ignoring the post-selection problem for a moment, if we want to estimate 
\begin{equation}
    \braket{\mathcal{O}}_m = \tr \left( \left( \ket{\psi_m}\bra{\psi_m} \right)^{\otimes k} \mathcal{O} \right)
\end{equation}
for a single state $\ket{\psi_m}$, there exist unbiased estimators that use $k$ copies of the state. Let $\hat{\mathcal{O}}$ denote an experimental protocol that takes as input $k$ copies of $\ket{\psi_m}$ and outputs an unbiased estimate of $\braket{\mathcal{O}}_m$ via measurement and classical post-processing. For example, if the observable is subsystem purity, commonly used approaches include the SWAP test \cite{PhysRevLett.87.167902} and randomized measurement protocols \cite{PhysRevLett.108.110503,Huang2020,Elben2023toolbox}---each with its own advantages and limitations. Any of these methods can be incorporated into the post-selection--free experiment described below.

We begin by fixing the parameters $p^*$ and $\delta$ in the FPAA algorithm and calculating the corresponding QSVT phase sequence. Roughly, $p^*$ should be smaller than most of the $p_m$ values, and $\delta$ should be chosen to be smaller than but of the same order as the allowed error for estimating $\braket{\mathcal{O}}$. In particular, since decreasing $\delta$ is computationally inexpensive while reducing $p^*$ incurs a significantly higher query complexity, it is crucial to obtain a theoretical estimate of the distribution of $p_m$ whenever possible. The post-selection--free experimental protocol proceeds as follows:
\begin{enumerate}[(i). ]
    \item Apply the original probabilistic protocol (i.e., measurement) to randomly sample a post-selected state $\ket{\psi_m}$ and record its classical outcome $m$.
    \item Prepare another copy of $\ket{\psi_m}$ using FPAA \footnote{In general, if the experimental protocol $\hat{\mathcal{O}}$ requires coherent measurements across all copies, then we must maintain $k$ copies in $k$ identical quantum registers. However, if no coherent measurement is required---as in the case of randomized or adaptive measurements---then one quantum register can be reused to prepare $\ket{\psi_m}$ multiple times.}. Perform the flag measurement; if it fails, return to step (i).
    \item Repeat step (ii) $(k-1)$ times to obtain $k$ copies of the state $\ket{\psi_m}$.
    \item Apply the protocol $\hat{\mathcal{O}}(\ket{\psi_m}^{(1)}, \dots , \ket{\psi_m}^{(k)})$ to obtain an unbiased estimate of $\braket{\mathcal{O}}_m$.
    \item Repeat from step (i), and finally average over many $m$'s to obtain an estimate of $\braket{\mathcal{O}}$.
\end{enumerate}

In this strategy, the flag measurement serves as a sanity check for the chosen parameters $p^*$ and $\delta$ in the FPAA algorithm. To explain this, we focus on the case of quadratic observables, i.e., $k=2$ in Eq.~\eqref{eq:non_linear_O}, and assume that $\|\mathcal{O}\|_\infty = 1$, which is true for the purity operator. Due to the flag measurements, each state $\ket{\psi_m}$ is sampled with probability $p_m \cdot P_d(x)^2$ rather than $p_m$. This introduces a sampling bias in the estimate of $\braket{\mathcal{O}}$, as states with low $p_m$ are underrepresented. It is not difficult to show that this bias is upper bounded by $2\delta \Pr(p_m \geq p^*) + \Pr(p_m < p^*)$. The failure probability of the flag measurement is also bounded by the same expression. This justifies our principle for selecting appropriate values of $p^*$ and $\delta$.

Thus, by performing flag measurements, we can monitor and adjust $p^*$ and $\delta$ to control the total estimation error. Concretely, suppose the target accuracy for estimating $\braket{\mathcal{O}}$ is $\varepsilon$. One source of error arises from statistical fluctuations across different $m$ values, which, by the central limit theorem, scale as $\Var_m \braket{\mathcal{O}}_m/\sqrt{N}$, where $N$ is the number of $\braket{\mathcal{O}}_m$ samples. This form of error can only be suppressed by increasing the number of samples, which we do not further address here. Instead, we focus on the error due to under-sampling low-probability states. By monitoring the frequency of failed flag measurements, we obtain an empirical failure rate $f_{\mathrm{fail}}$. If $f_{\mathrm{fail}} \gtrsim \varepsilon$, it indicates that either $p^*$ or $\delta$ is too large. In that case, one must reduce $p^*$ or $\delta$ and repeat the experiment.


\section{Sequential measurements}
\label{sec:sequential}

\begin{figure}
    \centering
    \includegraphics[width=\linewidth]{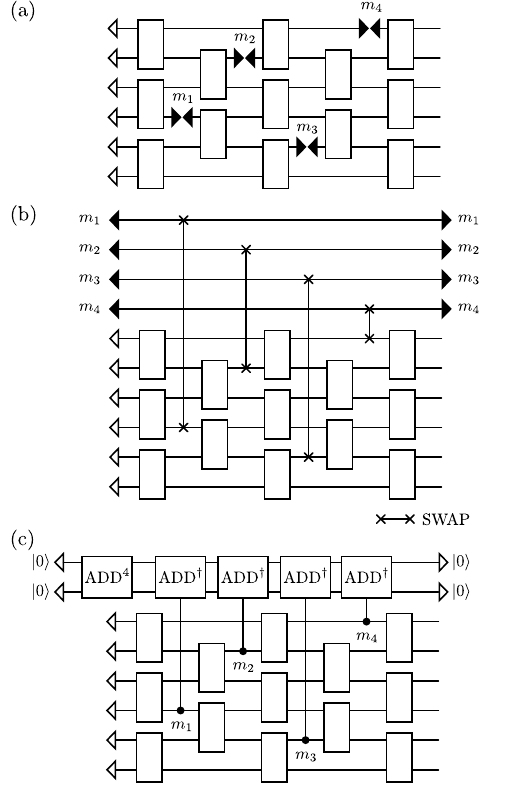}
    \caption{(a) Example of a quantum circuit with two-qubit gates and mid-circuit projective measurements. White triangles denote $\ket{0}$ states, and black triangles represent measurement outcome states $\ket{m_j}$, where $m_j = 0$ or $1$, $j = 1,2,\dots, N_{\mathrm{meas}}$. In the figure, $N_{\mathrm{meas}} = 4$. (b) and (c) depict different block encodings of (a). (b) The equivalent circuit with all post-selections postponed to the final time by introducing $N_{\mathrm{meas}}$ ancilla qubits. (c) The compression gadget circuit, which requires only $\left\lceil \log_2 N_{\mathrm{meas}} \right\rceil$ ancilla qubits acting as a counter.}
    \label{fig:mipt_block_enc}
\end{figure}

MIPT (in terms of entanglement, teleportation, learnability, etc.) is among the most well-studied phenomena of projected ensembles. However, in many MIPT setups, the circuit is hybrid in nature, meaning that measurements occur at different times interleaved by unitary gates. This structure renders the amplitude amplification approach from Sec.~\ref{sec:FPAA} not directly applicable. Thus, we seek methods to convert such sequential mid-circuit measurements into measurements at the final time.

To be concrete, consider a hybrid circuit with post-selected outcomes as shown in Fig.~\ref{fig:mipt_block_enc}(a). Let the number of single-qubit projective measurements be $N_{\mathrm{meas}}$. For each measurement, the outcome $m_j = 0$ or $1$ corresponds to the Kraus operator $\ket{m_j}\bra{m_j}$. The resulting tensor network, such as that depicted in Fig.~\ref{fig:mipt_block_enc}(a), yields the unnormalized post-selected state $\sqrt{p_m} \ket{\psi_m}$, where $m$ is a collective label for the bitstring $(m_1, m_2, \dots, m_{N_{\mathrm{meas}}})$. As in Sec.~\ref{sec:FPAA}, we view this state as a $1 \times D$ matrix, where $D$ is the Hilbert space dimension. In the language of quantum algorithms, we seek block encodings for this post-selected state transformation.

One method of block encoding is shown in Fig.~\ref{fig:mipt_block_enc}(b), where $N_{\mathrm{meas}}$ ancilla qubits are introduced, and all post-selections are postponed to the final time. In fact, Figs.~\ref{fig:mipt_block_enc}(a) and (b) are entirely equivalent from a tensor network perspective. The key identity is:
\begin{equation}
    \vcenter{\hbox{\includegraphics{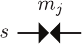}}} = \vcenter{\hbox{\includegraphics{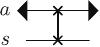}}},
\end{equation}
where the labels $a$ and $s$ denote the ancilla and system qubits, respectively. Explicitly, this equality reads:
\begin{equation}
    \ket{m_j} \bra{m_j}_s
    = \left( \bra{m_j}_a \otimes \id_s \right) \mathrm{SWAP} \left( \ket{m_j}_a \otimes \id_s \right).
\end{equation}

A less straightforward yet more resource-efficient block encoding uses the so-called ``compression gadget'' method \cite{low2019interactionpicture,Fang_2023}. Instead of requiring $N_{\mathrm{meas}}$ ancilla qubits, this method uses only $\left\lceil \log_2 N_{\mathrm{meas}} \right\rceil$ ancilla qubits. It is particularly advantageous for deep hybrid circuits with large $N_{\mathrm{meas}}$. The compression gadget circuit is shown in Fig.~\ref{fig:mipt_block_enc}(c). Here, we introduce an $N_{\mathrm{meas}}$-dimensional ancilla system (or equivalently, $\left\lceil \log_2 N_{\mathrm{meas}} \right\rceil$ qubits) acting as a coherent counter. On the counter system, we define the ADD gate:
\begin{equation}
    \mathrm{ADD} \ket{i} = \ket{i+1} \mod N_{\mathrm{meas}},
\end{equation}
and its controlled version:
\begin{equation}
    \begin{aligned}
         &\phantom{=} \mathrm{C}_{m_j} \mathrm{ADD}
            = \vcenter{\hbox{\includegraphics{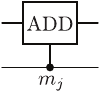}}}   \\
            &= \mathrm{ADD} \otimes \ket{m_j}\bra{m_j} + \id \otimes \ket{\overline{m_j}} \bra{\overline{m_j}}.
    \end{aligned}
\end{equation}
The exponential saving in ancilla qubits is possible because we do not need to record the full bitstring $(m_1, m_2, \dots, m_{N_{\mathrm{meas}}})$; it suffices to verify whether all measurement outcomes are successful, i.e., whether the string matches a specific target string. It is not hard to show that the counter being in the $\ket{0}$ state at the end is equivalent to the success of all measurements.

The idea of the compression gadget is not limited to sequential measurements; it also applies to ``compressing'' simultaneous measurements. As a starting point, observe that the gadget still functions if the unitary gates between measurements are removed. Specifically, consider the case where many disconnected qubits are measured at the same time. Even though the single-qubit measurements are local, the corresponding controlled-phase gate $\Pi_\phi$, defined in Eq.~\eqref{eq:rel_rot}, becomes a non-local operation over all the measured qubits, which can be difficult to implement. However, by using the compression gadget circuit together with QSVT, each measured qubit interacts only with the counter system, which then indirectly mediates the entangling operations among the measured qubits. This renders the implementation of controlled-phase gates much more tractable.

In the subsequent sections, we will consider mixed initial states, for which the block encoding becomes a many-to-many dimensional mapping rather than one-to-many. Nonetheless, both the SWAP-based block encoding and the compression gadget can be straightforwardly generalized to the mixed-state case.




\section{Simulation of mixed state measurement}
\label{sec:mixed}

It is natural to further investigate whether our amplitude amplification approach also applies to mixed-state post-selection. Specifically, the task in this section is: given a mixed state $\rho$, find a deterministic quantum algorithm (i.e., a unitary or quantum channel) that maps it to
\begin{equation}
    \rho_m = \frac{1}{p_m} \Pi_m \rho \Pi_m
\end{equation}
with high fidelity, where $p_m = \tr(\rho \Pi_m)$. Unfortunately, it turns out that in general, such a transformation cannot be realized by a unitary operator or channel. Nevertheless, using QSVT, we develop an algorithm that can significantly amplify the success probability of post-selection.

\begin{figure}
    \centering
    \includegraphics[width=\linewidth]{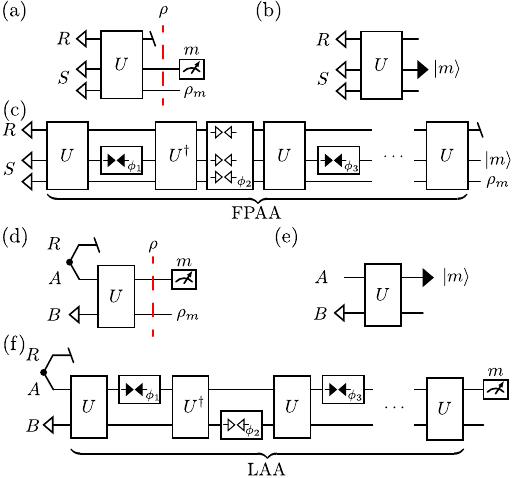}
    \caption{(a--c) illustrate the case where the mixed state $\rho$ is prepared via purification. (a) The state preparation protocol and the post-selected state $\rho_m$. (b) Tensor diagram of the corresponding block-encoded matrix. (c) FPAA circuit simulating the post-selected state. (d--f) illustrate the case where the mixed state is prepared without access to a purification. (d) The state preparation protocol and the resulting post-selected state $\rho_m$. (e) Tensor diagram of the corresponding block-encoded matrix. (f) LAA circuit simulating the post-selected state, conditioned on the flag measurement $\Pi_m$ being successful. In (c) and (f), the $\Pi_\phi$ gates are composed of $\mathrm{C_{\Pi}NOT}$ gates and ancilla qubit rotations. For clarity, the internal structure of $\Pi_\phi$ and the ancilla qubit are omitted.}
    \label{fig:mixed}
\end{figure}

We discuss two different setups of mixed state post-selection. Consider a mixed initial state $\rho$ with eigendecomposition $\rho = \sum_a \lambda_a \ket{\psi_a}\bra{\psi_a}$ on system $S$. The first setup assumes access to a purification. That is, a pure state $\ket{\Psi}_{RS}$ such that $\rho = \tr_R \ket{\Psi}\bra{\Psi}_{RS}$, and a unitary $U$ such that $U \ket{0}_R \ket{0}_S = \ket{\Psi}_{RS}$. The mixed state is then prepared by evolving $\ket{0}_R \ket{0}_S$ under $U$ and tracing out $R$. In this scenario, the post-measurement state can be simulated in the same way as in the pure-state case (see Fig.~\ref{fig:mixed}(a--c)): apply the FPAA sequence using $\ket{\Psi}\bra{\Psi}_{RS}$-controlled and $\Pi_m$-controlled phase gates, then trace out $R$ at the end.

However, this setup is often unrealistic. In practice, the reference system $R$ (e.g., an environment) may be too complicated to manipulate, or is intentionally hidden from the experimenter. So for the rest of this section, we consider a scenario where only the system $S$ can be manipulated. We adopt the following specific preparation protocol:
\begin{equation}    \label{eq:mix_init_state}
    \rho = U \left( \frac{\id_A}{d_A} \otimes \ket{0}\bra{0}_B \right) U^\dagger,
\end{equation}
where the system $S$ is bipartitioned into $A$ and $B$. With a fictitious reference system $R$ such that $d_R = d_A$, the state can be purified as $\ket{\Psi}_{RS} = \frac{1}{\sqrt{d_R}} \sum_{a=1}^{d_R} \ket{a}_R \otimes \ket{\psi_a}_S$. This construction fixes the eigenvalues of the density matrix, making them uniform and unadjustable. The corresponding circuit is shown in Fig.~\ref{fig:mixed}(d).

The state preparation together with the projector defines a block-encoded matrix to which we can apply QSVT, as in the pure-state case. Let $\Pi_B = \id_A \otimes \ket{0}\bra{0}_B$ be the projector onto the initial state. Then the block-encoded matrix is (see Fig.~\ref{fig:mixed}(e)):
\begin{equation}    \label{eq:mix_block_enc}
    \begin{aligned}
        M &= \Pi_m U \Pi_B \\
          &= \sum_a \Pi_m U \ket{a_A 0_B} \bra{a_A 0_B} \\
          &= \sum_a \Pi_m \ket{\psi_a} \bra{a_A 0_B} \\
          &= \sum_a \sqrt{p_{am}} \ket{\psi_{am}} \bra{a_A 0_B},
    \end{aligned}
\end{equation}
where we defined $\ket{\psi_a} = U \ket{a_A 0_B}$, $\sqrt{p_{am}} \ket{\psi_{am}} = \Pi_m \ket{\psi_a}$, and $p_{am} = \braket{\psi_a | \Pi_m | \psi_a}$. If any $p_{am} = 0$, we arbitrarily define $\ket{\psi_{am}}$ to be orthogonal to the rest; this has no physical consequence.

Unlike the pure-state case, $M$ now has multiple nonzero singular values, and Eq.~\eqref{eq:mix_block_enc} is not generally a singular value decomposition (SVD), since the $\ket{\psi_{am}}$'s need not be orthogonal. The latter problem can be fixed by a unitary change of basis on subsystem $A$. Let $\ket{a'} = \sum_a w_{aa'} \ket{a}$ with $w$ unitary. Then
\begin{equation}
    \braket{\psi_{a'} | \Pi_m | \psi_{b'}} = \sum_{a,b} w^*_{aa'} w_{bb'} \braket{\psi_a | \Pi_m | \psi_b}.
\end{equation}
Since the matrix $p_{ab} = \braket{\psi_a | \Pi_m | \psi_b}$ is Hermitian, we can diagonalize it and thus choose a basis where the $\ket{\psi_{am}}$'s are orthogonal. Henceforth, we work under the assumption that Eq.~\eqref{eq:mix_block_enc} is an SVD. So on one hand, $p_{am}$ is the probability of getting outcome $m$ when the initial state is $\ket{\psi_a}$. On the other hand, $\sqrt{p_{am}}$'s are the singular values of $M$.

The post-measurement state, written in the basis $\{ \ket{a}_R \ket{\psi_{am}}_S \}$, is
\begin{equation}    \label{eq:mix_post_state}
    \ket{\Psi_m} = \frac{1}{\sqrt{d_R p_m}} \sum_{a=1}^{d_R} \sqrt{p_{am}} \ket{a}_R \ket{\psi_{am}}_S,
\end{equation}
where $p_m = \sum_a p_{am}/d_R$. Meanwhile, let us look at the final state when the initial state undergoes a QSVT circuit associated with $x\to f(x)$, where $f(x)$ is an odd polynomial. Note that if the initial state is $\ket{a_A 0_B}$, it is mapped to $f(\sqrt{p_{am}}) \ket{\psi_{am}}$. Thus, the QSVT output state is
\begin{equation}    \label{eq:mix_qsvt_state}
    \ket{\Psi_{\mathrm{QSVT}}} = \frac{1}{\sqrt{d_R p_{\mathrm{QSVT}}}} \sum_{a=1}^{d_R} f(\sqrt{p_{am}}) \ket{a}_R \ket{\psi_{am}}_S,
\end{equation}
with normalization factor
\begin{equation}    \label{eq:pQSVT}
    p_{\mathrm{QSVT}} = \frac{1}{d_R} \sum_{a=1}^{d_R} f(\sqrt{p_{am}})^2.
\end{equation}
Notably, the QSVT circuit only outputs the desired transformation when the final state lies in the subspace defined by $\Pi_m$. So unlike pure state amplitude amplification which succeeds almost for sure, for a general QSVT the flag measurement is necessary (see Fig.~\ref{fig:mixed}(f)). The quantity $p_{\mathrm{QSVT}}$ represents the post-selection success probability.

Comparing Eqs.~\eqref{eq:mix_post_state} and \eqref{eq:mix_qsvt_state}, we see that QSVT simulates the measurement effect when $f(x) = x/\sqrt{p^*}$, with $p^*$ a constant. We can use a small $p^*$ to amplify the post-selection probability---this is known as linear amplitude amplification (LAA) \cite{PRXQuantum.2.040203}, or uniform singular value amplification \cite{GilyenQSVT}. To simulate $\ket{\Psi_m}$ faithfully, $p^*$ must be at least $p_{\max} = \max_a p_{am}$. However, larger $p^*$ leads to smaller success probability, so it is also acceptable to use $p^*<p_{\max}$, thereby a bit of state fidelity is sacrificed in exchange for higher $p_{\mathrm{QSVT}}$. The circuit diagram for LAA by QSVT is shown in Fig.~\ref{fig:mixed}(f). 

What is the query complexity of LAA by QSVT? Based on an efficient polynomial approximating the linear transformation given in Refs.~\cite{low2017hamiltoniansimulationuniformspectral,GilyenQSVT}, we deduce that LAA can be realized with query complexity $O(\frac{1}{\sqrt{p^*}} \log\frac{1}{\sqrt{p^*} \delta})$, where $\delta$ is the allowed \emph{multiplicative} error in realizing $x/\sqrt{p^*}$. The reason why we need to bound the multiplicative error but not just the additive error is that only the former can lead to a state-independent lower-bound on the fidelity (Theorem \ref{thm:infidelity_mix} in Appendix \ref{app:polyerror}). Therefore, the LAA algorithm also has a square root advantage over naive post-selection.

The rest of this section analyzes the performance of the LAA algorithm. In addition to aiming for high fidelity between $\ket{\Psi_{\mathrm{QSVT}}}$ and $\ket{\Psi_m}$, it is also desirable to achieve a high success probability for the flag measurement. The fidelity of the state produced by the QSVT algorithm is given by
\begin{equation}    \label{eq:FQSVT_mix}
    F_{\mathrm{QSVT}} 
    = \left| \braket{\Psi_{\mathrm{QSVT}} | \Psi_m} \right|^2
    = \frac{ \left( \sum_a \sqrt{p_{am}} f(\sqrt{p_{am}}) \right)^2 }{ d_R^2 p_m p_{\mathrm{QSVT}} },
\end{equation}
where the success probability $p_{\mathrm{QSVT}}$ is defined in Eq.~\eqref{eq:pQSVT}.

Alternatively, one may consider the overall fidelity by incorporating the success probability. That is, instead of post-selecting on $\Pi_m$ after the $U_{\vec{\phi}}$ circuit, we treat it as a quantum channel that outputs $\ket{\Psi_{\mathrm{QSVT}}}$ with probability $p_{\mathrm{QSVT}}$ and some other state with probability $1 - p_{\mathrm{QSVT}}$. Although the latter may still have a positive contribution to the fidelity, we ignore this contribution to obtain a lower bound. Accordingly, we define
\begin{equation}    \label{eq:Foverall_mix}
    F_{\mathrm{overall}} = p_{\mathrm{QSVT}} F_{\mathrm{QSVT}}.
\end{equation}

If we want an algorithm with unit overall fidelity, i.e., $p_{\mathrm{QSVT}} = F_{\mathrm{QSVT}} = 1$, a necessary condition is that $p_{am}$ be independent of $a$. We refer to such a measurement operator $\Pi_m$ as an \emph{uninformative measurement}, since it prevents the measurer from gaining any information about the purifying system $R$. This condition has rich physical implications and can be understood from multiple perspectives.

First, the uninformative condition coincides with the decoupling condition in quantum coding theory, which will be discussed in detail in the context of the decoding task in Sec.~\ref{sec:teleportation}. Another perspective is obtained by comparing $F_{\mathrm{overall}}$ with an information-theoretic upper bound on the fidelity achievable by any quantum channel. The following statement can be proved using Uhlmann's theorem: there exists a quantum channel $\mathcal{D}_S$ such that $F\left( \mathcal{D}_S(\Psi_{RS}), \Psi_{RS|m} \right) \geq 1 - \varepsilon$ if and only if $F(\Psi_R, \Psi_{R|m}) \geq 1 - \varepsilon$ (Theorem~\ref{thm:decoupling_sim} in Appendix~\ref{app:Uhlmann}). Using the fact that the eigenvalues of $\Psi_{R|m}$ are $p_{am}/(d_R p_m)$, we can express this upper bound, which we refer to as the Uhlmann fidelity, as
\begin{equation}    \label{eq:FUhlmann}
    \begin{aligned}
        F_{\mathrm{Uhlmann}}
        &= F\left( \frac{\id_R}{d_R}, \Psi_{R|m} \right) \\
        &= \frac{1}{d_R} \left( \tr \sqrt{\Psi_{R|m}} \right)^2 \\
        &= \frac{1}{d_R^2 p_m} \left( \sum_a \sqrt{p_{am}} \right)^2.
    \end{aligned}
\end{equation}

From this, we conclude that a deterministic algorithm exists to simulate the post-selected state if and only if $p_{am}$ is constant for all $a$, at fixed $m$. LAA with $p^* = p_m$ is an example of such an algorithm. Moreover, when all $p_{am} = p_m$, the value of $f(x)$ at other points becomes irrelevant. In this case, using the FPAA polynomial is preferable, as it requires less knowledge about the value of $p_m$.

\begin{figure}
    \centering
    \includegraphics[width=\linewidth]{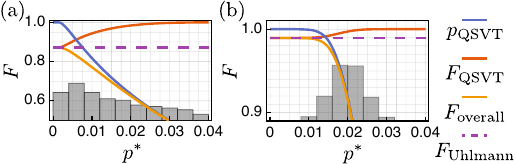}
    \caption{Fidelity and success probability of the LAA-by-QSVT algorithm in simulating mixed state post-selection. The histograms show the frequency distribution of $p_{am}$ values. (a) A 14-qubit state is initialized with seven maximally mixed qubits, followed by a Haar-random unitary. Eight qubits are then measured in the computational basis to obtain $p_{am}$ values, which are known to follow the Jacobi distribution. (b) A hypothetical distribution of 2048 independent and identically distributed (i.i.d.) $p_{am}$ values sampled from a normal distribution.}
    \label{fig:mix_fidelity}
\end{figure}

To further support the fidelity analysis above, we ``benchmark'' the LAA algorithm through numerical simulations. In computing the fidelity and success probability shown in Fig.~\ref{fig:mix_fidelity}, we assume the singular value transformation $f(x)$ takes the following ideal form:
\begin{equation}    \label{eq:LAAfunc}
    f(x) = 
    \begin{cases}
        -1 & \text{for } -1 \leq x < -\sqrt{p^*}, \\
        x/\sqrt{p^*} & \text{for } -\sqrt{p^*} \leq x < \sqrt{p^*}, \\
        1 & \text{for } \sqrt{p^*} \leq x \leq 1.
    \end{cases}
\end{equation}
Since reducing $\delta$ is much cheaper than reducing $p^*$, we take the limit $\delta \to 0$ and work with this simplified function, leaving $p^*$ as the sole tunable parameter. We then examine how the various fidelity measures discussed earlier depend on $p^*$.

From Fig.~\ref{fig:mix_fidelity}, we make the following observations.

First and foremost, when $p^* \geq p_{\max}$, the fidelity of the output state $F_{\mathrm{QSVT}}$ reaches 1, as expected. Interestingly, even when $p^*$ is set such that only a small fraction of $p_{am}$ values exceed it, the fidelity can still remain close to 1.

Second, the algorithm generally performs better for less informative measurements, as illustrated in Fig.~\ref{fig:mix_fidelity}(b), compared to more informative ones in Fig.~\ref{fig:mix_fidelity}(a). In the former case, the $p_{am}$ values provide little information about which $\ket{a}_R$ is present in the initial state, which aligns with the notion of an uninformative measurement. This behavior is a direct manifestation of Theorem~\ref{thm:decoupling_sim} in Appendix~\ref{app:Uhlmann}. Although the achievable fidelity and success probability are lower in the more informative case of Fig.~\ref{fig:mix_fidelity}(a), the success probability $p_{\mathrm{QSVT}}$ can still remain $O(1)$ while $F_{\mathrm{QSVT}}$ stays close to 1 (e.g., by choosing $p^* \in [0.02, 0.04]$). In general, the LAA algorithm is practically useful only if $p_{\mathrm{QSVT}} = O(1)$. Whether this holds depends on the distribution of $p_{am}$, and must be assessed on a case-by-case basis. A general lower bound on $p_{\mathrm{QSVT}}$ is provided in Theorem~\ref{thm:tailbound_mix} in Appendix~\ref{app:pQSVTbound}. Roughly speaking, it supports the claim that $p_{\mathrm{QSVT}}$ is non-vanishing as long as no individual $p_{am}$ is significantly larger than the average $p_m$.

Lastly, although the primary goal is to keep $F_{\mathrm{QSVT}}$ close to 1, we may also ask: what is the optimal function $f(x)$ that maximizes the overall fidelity $F_{\mathrm{overall}}$? As shown in both Fig.~\ref{fig:mix_fidelity}(a) and (b), the overall fidelity is maximized as $p^* \to 0$, corresponding to $f(x) = \operatorname{sgn}(x)$---that is, the FPAA algorithm with a sufficiently small threshold. Actually, this can be proven analytically: from Eq.~\eqref{eq:Foverall_mix}, it is clear that $F_{\mathrm{overall}}$ reaches $F_{\mathrm{Uhlmann}}$ if and only if $|f(\sqrt{p_{am}})| = 1$ for all $p_{am}$.

\section{Measurement-induced teleportation}
\label{sec:teleportation}

\begin{figure}
    \centering
    \includegraphics[width=\linewidth]{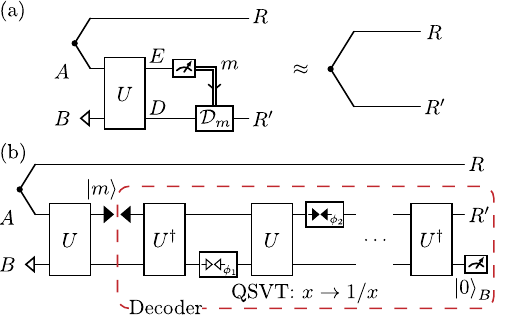}
    \caption{(a) Setup for measurement-induced teleportation and decoding. Here, $\mathcal{D}_m$ is a quantum channel that maps subsystem $D$ to $R'$, where $R'$ is isomorphic to $R$. (c) The pseudoinverse decoder, where the internal structure of the $\Pi_{\phi}$ gates is omitted for clarity.}
    \label{fig:tel}
\end{figure}

Closely related to the mixed-state post-selection setup discussed above, we study another interesting application of QSVT: decoding protocols for quantum teleportation. 

Our setup is shown in Fig.~\ref{fig:tel}. Suppose Alice encodes $d_R$-dimensional quantum information into a larger Hilbert space and retains a record of this information in a reference Hilbert space $R$. The pure state on $RS$ is
\begin{equation}
    \ket{\Psi} = \sum_{a=1}^{d_R} \ket{a}_R \otimes \ket{\psi_a}_R    ,
\end{equation}
where $\ket{\psi_a} = U\ket{a_A 0_B}$. The system $S$ is then measured by a projective measurement $\Pi_m$, and Bob must decode Alice's encoded information from the post-measurement state and the classical outcome $m$. 

In the spirit of block encoding, the encoding defines a linear transformation (up to a normalization factor due to state collapse) from the encoded state $\ket{a}$ to $M\ket{a} = \sqrt{p_{am}} \ket{\psi_{am}}$, where the matrix $M$ is the same as that used in Sec.~\ref{sec:mixed}; see Fig.~\ref{fig:mixed}(e). A mathematically straightforward decoder is to apply the Moore--Penrose pseudoinverse of $M$. Suppose for now that $M$ is injective. The pseudoinverse can then be defined as
\begin{equation}
    M^+ = \sum_{a=1}^{d_R} \frac{1}{\sqrt{p_{am}}} \ket{a} \bra{\psi_{am}}   .
\end{equation}
Ideally, using $M^+$ as a decoder leads to the overall state transformation $M^+ M = \sum_a \ket{a}\bra{a} = \id_R$, thus achieving perfect decoding. Taking practical considerations into account, we only require that the singular values $x$ greater than a small threshold (denoted as $\sqrt{p^*}$) be mapped to $1/x$. Therefore, if $M$ is injective, setting $p^* \leq p_{\min} = \min_{a} p_{am}$ yields a perfect decoder. To maximize the algorithm's success probability and minimize complexity, the ideal choice for $p^*$ is simply $p_{\min}$. However, if some $p_{am}$ values are zero (or extremely small), then inversion becomes impossible (or computationally hard) for the corresponding dimensions. For instance, if one $p_{am}$ is extremely small, then in encoding $\ket{\varphi}_A$ into $M\ket{\varphi}_{AB}$, the corresponding amplitude of $\ket{\psi_{am}}$ is nearly zero, making it difficult to accurately recover the amplitude of $\ket{a}$ in the original state. 

How can this pseudoinverse be implemented in quantum circuits? Once again, the answer is QSVT: we view $M^\dagger = \Pi_B U^\dagger \Pi_m$ as the encoded matrix, and then design a QSVT algorithm to realize the functional transformation $x \to \sqrt{p^*}/x$ (with the numerator imposed by the constraint $|f(x)| \leq 1$). In practice, this function is replaced by its polynomial approximation. Regarding query complexity, it is known that there exists a polynomial approximation of the inverse function with additive error $\delta$, and the required degree is $O(\frac{1}{\sqrt{p^*}} \log \frac{1}{\delta})$ \cite{GilyenQSVT}. Crucially, however, we must upper-bound the \emph{multiplicative} error to ensure that the decoding fidelity $F_{\mathrm{QSVT}}$ in Eq.~\eqref{eq:FQSVT_tel} remains close to $1$ (Theorem~\ref{thm:infidelity_tel} in Appendix~\ref{app:polyerror}). Altogether, this implies that there exists a polynomial of degree $O\left( \frac{1}{\sqrt{p^*}} \log\left( \sqrt{p_{\max}/p^*} / \delta \right) \right)$ that agrees with $\sqrt{p^*}/x$ on the interval $[\sqrt{p^*}, \sqrt{p_{\max}}]$ and has a multiplicative error less than $\delta$.

We highlight the advantages of our pseudoinverse decoder. The most striking feature is that the decoding fidelity can approach $1$ even when the decoupling condition is violated, i.e., when decoding via quantum channels is information-theoretically impossible. The key insight is that we sacrifice determinism in exchange for high decoding fidelity. Another appealing feature is that the decoder requires very few ancilla qubits. In fact, it only requires the ancillae needed for the block encoding and one additional qubit for the $\Pi_{\phi}$ gates.

To further justify these advantages, we compare our decoder with other teleportation decoders. Ref.~\cite{utsumi2024explicit} proposed explicit decoders for decoding after a noise channel. Their general strategy is to first construct a Yoshida--Kitaev decoder \cite{yoshida2017efficient}, which succeeds upon post-selection onto a Bell state, and then replace the post-selection with FPAA using QSVT. In Appendix~\ref{app:decoder}, we adapt the decoders from Ref.~\cite{utsumi2024explicit} to our teleportation decoding setting. These FPAA decoders and our pseudoinverse decoder each have their own advantages. Specifically, the FPAA decoder is a deterministic decoder and achieves the optimal overall fidelity, as can be seen by explicitly expanding Eq.~\eqref{eq:Foverall_tel}. However, when the decoupling condition is not satisfied, the resulting state does not faithfully represent the initially encoded state. In contrast, although our pseudoinverse decoder succeeds with lower probability, it is designed to ensure that the decoding fidelity remains close to $1$ even in scenarios where the decoupling condition is violated.

Now, we analyze the performance of the pseudoinverse decoder in terms of fidelity and success probability. The state on the full system ($RS = RAB = RED$) before the decoder is
\begin{equation}
    \ket{\Psi_m} = \frac{1}{\sqrt{d_R p_m}} \sum_{a=1}^{d_R} \sqrt{p_{am}} \ket{a}_R \ket{\psi_{am}}_S .
\end{equation}
The QSVT circuit (see the dashed box in Fig.~\ref{fig:tel}(c)) implements the matrix $\sum_{a} f(\sqrt{p_{am}}) \ket{a}_{R'} \bra{\psi_{am}}_{S}$ followed by state collapse. Hence, the state after decoding becomes
\begin{equation}
    \begin{aligned}
        &\phantom{=} \ket{\Psi_{\mathrm{QSVT}}}_{RR'}   \\
        &= \frac{1}{\sqrt{p_{\mathrm{QSVT}}}} \sum_{a} f(\sqrt{p_{am}}) \ket{a}_{R'} \bra{\psi_{am}}_{S} \ket{\Psi_m}   \\
        &= \frac{1}{\sqrt{d_R p_m p_{\mathrm{QSVT}}}} \sum_{a=1}^{d_R} \sqrt{p_{am}} f(\sqrt{p_{am}}) \ket{a}_R \ket{a}_{R'}    ,
    \end{aligned}
\end{equation}
where the algorithm success probability appears as the normalization factor:
\begin{equation}    \label{eq:pQSVT_tel}
    p_{\mathrm{QSVT}} = \frac{1}{d_R p_m} \sum_{a=1}^{d_R} p_{am} f(\sqrt{p_{am}})^2    .
\end{equation}
The decoding fidelity is defined as the fidelity between $\ket{\Psi_{\mathrm{QSVT}}}_{RR'}$ and $\EPR_{RR'}$, and is given by
\begin{equation}    \label{eq:FQSVT_tel}
    F_{\mathrm{QSVT}}
    = \left| \braket{ \mathrm{EPR} | \Psi_{\mathrm{QSVT}}}_{RR'} \right|^2
    = \frac{ \left( \sum_a \sqrt{p_{am}} f(\sqrt{p_{am}}) \right)^2 }{ d_R^2 p_m p_{\mathrm{QSVT}} }    .
\end{equation}
We remark that this equation appears identical to Eq.~\eqref{eq:FQSVT_mix} in Sec.~\ref{sec:mixed}, but the expression for $p_{\mathrm{QSVT}}$ differs.

In addition, we may also consider the fidelity of our decoder as a channel. Similar to the discussion in Sec.~\ref{sec:mixed}, the fidelity of the corresponding channel decoder is lower-bounded by
\begin{equation}    \label{eq:Foverall_tel}
    F_{\mathrm{overall}} = p_{\mathrm{QSVT}} F_{\mathrm{QSVT}}   .
\end{equation}
On the other hand, from an information-theoretic perspective, the fidelity of all possible channel decoders is bounded by $F(\id_R/d_R , \Psi_{R|m})$ (see Theorem~\ref{thm:decoupling} in Appendix~\ref{app:Uhlmann} for the precise statement). This bound, denoted as $F_{\mathrm{Uhlmann}}$, has the same expression as in Eq.~\eqref{eq:FUhlmann}. Especially, the Uhlmann bound implies that the decoding fidelity can be $1$ if and only if all $p_{am}$'s are constant for fixed $m$---that is, the condition for uninformative measurement. 

\begin{figure}
    \centering
    \includegraphics[width=\linewidth]{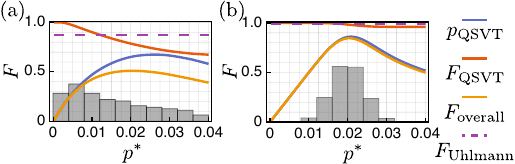}
    \caption{Decoding fidelity and success probability of the pseudoinverse decoder. The distribution of probabilities $p_{am}$ is generated in the same way as in Fig.~\ref{fig:mix_fidelity}.}
    \label{fig:decoding_fidelity}
\end{figure}

We numerically calculate $p_{\mathrm{QSVT}}$, $F_{\mathrm{QSVT}}$, $F_{\mathrm{overall}}$, and $F_{\mathrm{Uhlmann}}$ for the same representative examples as those studied in Fig.~\ref{fig:mix_fidelity}. The results for the pseudoinverse decoder are shown in Fig.~\ref{fig:decoding_fidelity}. For simplicity, in generating the data in Fig.~\ref{fig:decoding_fidelity}, we assume that the polynomial approximation error $\delta$ is suppressed to zero, and we adopt the following ideal form of $f(x)$:
\begin{equation}    \label{eq:inv_func}
    f(x) = 
    \begin{cases}
        \sqrt{p^*}/x & \text{if } \sqrt{p^*} < |x| \leq 1, \\
        x/\sqrt{p^*} & \text{if } |x| \leq \sqrt{p^*}.
    \end{cases}
\end{equation}
This function now has only one tunable parameter, $p^*$. We plot the fidelity and the success probability of the QSVT decoder as functions of $p^*$. The plots confirm that when $p^*$ is smaller than the majority of $p_{am}$ values, the decoding fidelity becomes sufficiently close to $1$. However, as $p^*$ decreases, the success probability drops linearly, indicating the need to balance these two competing factors. The optimal choice of $p^*$ depends on the acceptable level of infidelity and on the distribution of $p_{am}$, which generally requires a case-by-case analysis. Nevertheless, we provide a lower bound for $p_{\mathrm{QSVT}}$ in terms of the $p_{am}$ distribution in Theorem~\ref{thm:tailbound_tel} in Appendix~\ref{app:pQSVTbound}. The comparison between Figs.~\ref{fig:decoding_fidelity}(a) and (b) shows that our algorithm performs better when the distribution of $p_{am}$ is less informative about the initial state $\ket{a}_R$. Indeed, Fig.~\ref{fig:decoding_fidelity}(b) achieves close-to-unity $F_{\mathrm{QSVT}}$ with a significantly higher success probability $p_{\mathrm{QSVT}}$.

Finally, we point out that our pseudoinverse decoder can be applied to the decoding task of approximate quantum error correction codes as well. Specifically, we interpret the linear mapping $M$ from $A$ to $D$ as Kraus operator in the entire encoding-plus-noise channel, and a subsequent syndrome measurement $\Pi_m$ indicates which error (i.e., Kraus operator) has happened. Thus, the task of decoding $A$ from $D$ appears identical to the above-mentioned teleportation decoding task. We formulate the condition for approximate quantum codes as an approximately satisfied decoupling condition:
\begin{equation}
    \frac{1}{2}\left\| \Psi_{R|m} - \frac{\id_R}{d_R} \right\|_1 \leq \varepsilon'. 
\end{equation}
In terms of the probabilities $p_{am}$, this reads
\begin{equation}
    \frac{1}{2d_R} \sum_{a=1}^{d_R} \left| \frac{p_{am}}{p_m} - 1 \right| \leq \varepsilon'  .
\end{equation}
This upper-bounds the deviation of all $p_{am}$'s from their average. From this, we can prove that the success probability $p_{\mathrm{QSVT}} = O(1)$ as long as the allowed infidelity in $F_{\mathrm{QSVT}}$ is larger than $\epsilon'$ (Corollary~\ref{cor:AQEC_bound_pQSVT} in Appendix \ref{app:pQSVTbound}). Therefore, the pseudoinverse decoder is guaranteed to be useful for approximate quantum codes.

\section{Tradeoff between classical and quantum complexity}
\label{sec:tradeoff}

In this section, we propose a general tradeoff relation between classical and quantum complexity arising from post-selection.

Let us take a step back to Sec.~\ref{sec:previous} and consider quantum--classical correlation approaches. These approaches eliminate the sample-complexity overhead, at the expense of having to classically simulate each quantum trajectory that appears in the experiment. In contrast, this work introduces amplitude amplification to remove the sample-complexity overhead without relying on classical simulation. The price we pay is that the state preparation circuit becomes $O(1/\sqrt{p_m})$ deeper. This leads us to the observation that, among sample complexity, extent of classical simulability, and quantum query complexity, at least one must be large. 

To formulate a fair comparison between classical and quantum approaches, one may consider the task of estimating the ensemble average of some non-linear observable under fixed error tolerance and a fixed number of samples. In this setting, it is possible to define a ``combined complexity'' that includes a term reflecting the extent to which a classical algorithm can simulate the quantum dynamics, and a term representing quantum query complexity. The tradeoff relation would then be manifested as a lower bound on this combined complexity.

A similar tradeoff is also observed in decoding problems. Let us consider decoders for measurement-induced teleportation. On one hand, we proposed QSVT-based decoders with complexity $\tilde{O}(1/\sqrt{p^*})$. On the other hand, if the encoding circuit is Clifford, a simple correction operation dependent on the measurement outcome suffices to recover the injected state—this is a standard technique in measurement-based quantum computation \cite{Briegel2009,Wei_2021}. Although sample complexity does not explicitly enter here, we still observe a tradeoff between classical simulability and quantum query complexity.

While not the main focus of this paper, such tradeoff also arise in the decoding problem of the Hayden--Preskill protocol \cite{HaydenPreskill2007}. Specifically, the Yoshida--Kitaev decoder \cite{yoshida2017efficient} is effective when the black hole dynamics is sufficiently scrambling, so that deterministic decoding becomes possible via the random decoupling mechanism. This construction is Grover-based, thus requiring many simulations of the black hole evolution. But later, Ref.~\cite{yoshida2022recovery} considered the Hayden--Preskill protocol with a Clifford black hole, where decoding reduces to an error correction operation conditioned on syndrome measurements. Thus, the Hayden--Preskill protocol also exemplifies a potential quantum--classical complexity tradeoff.

Summarizing the above examples, they all point to the principle that the less classical understanding we have, the more quantum operations we must perform; and vice versa. This principle is more broadly applicable than just measurement- and post-selection--related problems; it pertains to a wide class of state transformation tasks, resembling the framework of the Uhlmann transformation problem \cite{bostanci2023unitary}, although it does not contain classical simulations. It is tempting to formalize this tradeoff in terms of a quantum--classical combined complexity, which could be lower-bounded for general state transformation tasks.

\section{Summary and Outlook}
\label{sec:summary}

Overall, we have developed a quantum algorithmic framework that relieves the post-selection problem in measuring ensemble averages of non-linear observables. For pure initial states, our approach demonstrates that probabilistic state preparation can be deterministically simulated using QSVT-based amplitude amplification, thereby yielding post-selected pure states with an optimal quadratic speedup. By leveraging the FPAA algorithm, our method eliminates the need for exponential sampling when estimating expectation values of non-linear observables. Our method can be useful in many mainstream topics involving projected ensembles, including the estimation of entropy-like quantities as proxies for MIPT, and the testing of deep thermalization \cite{PhysRevLett.128.060601,Ippoliti2022solvablemodelofdeep}.

Our method also extends to mixed-state post-selection. In this case, instead of fixed-point amplitude amplification, linear amplitude amplification via QSVT must be used. Due to the lack of access to a purification of the mixed state, our LAA algorithm carries the caveat of being unable to deterministically simulate the effect of post-selection. Nevertheless, LAA can significantly boost the success probability of state preparation.

We further introduced a QSVT-based decoder for measurement-induced quantum teleportation. Mathematically, this decoder corresponds to applying the pseudoinverse of the encoding matrix, and is therefore referred to as the pseudoinverse decoder. It achieves near-perfect recovery of encoded information, even in scenarios where conventional quantum channel decoders fail due to the violated decoupling condition. The success probability of our pseudoinverse decoder is model-dependent and generally favors measurements that are less informative about the initial encoded state. Moreover, our pseudoinverse decoder can be directly applied to approximate quantum error correction code, in which case the deviation from the decoupling condition is small and thus the decoder succeeds with finite probability. 


Our work sheds light on various topics in quantum complexity theory and quantum gravity. In the realm of quantum complexity theory, our quantum algorithmic solutions to post-selection problems suggest a tradeoff between classical simulability and quantum query complexity. Regarding quantum gravity, several models of the universe or black holes involve post-selection---for instance, the black hole final state model \cite{Horowitz_2004}, the ``python's lunch'' geometry \cite{brown2019pythonslunch}, and quantum theories of de Sitter space. Our QSVT-based algorithms may offer a new perspective for analyzing the capability and complexity of measurements in such space-times with final state(s).

Finally, we list several future directions worth exploring.

\begin{enumerate}
    \item One of the most intriguing open questions is to rigorously formulate or provide further evidence for the tradeoff between quantum and classical complexity, as conjectured in Sec.~\ref{sec:tradeoff}. Regarding practical applications, since achieving Grover's speedup is challenging on near-term noisy quantum devices \cite{Preskill2018qc,Chen2023}, a promising direction is to develop hybrid quantum--classical strategies to address the post-selection problem. Specifically, one could consider quantum dynamics where partial information is classically simulable, while Grover-like quantum iterations are still necessary to fully eliminate the post-selection sampling overhead. Such models could drastically reduce the required circuit depth, enabling implementation on near-term devices.
    \item Another direction is to investigate the performance of our mixed-state simulation and teleportation decoding algorithms under typical random dynamics. As the circuit depth of a random quantum circuit increases, the probability distribution is known to transition from concentration to anti-concentration \cite{Hangleiter2018anticoncentration,PRXQuantum.3.010333}. Our results (especially Theorem~\ref{thm:tailbound_tel}) on the condition for $p_{\mathrm{QSVT}}=O(1)$ is analogous to some definitions of anti-concentration, but it would be interesting to quantitatively study how our algorithms behave across this crossover.
    \item Our quantum algorithms for mixed-state post-selection and teleportation decoding do not necessarily saturate the Uhlmann bound. It is an appealing to design new algorithms that achieve $F \geq 1-\varepsilon$ with $p \approx F_{\mathrm{Uhlmann}}$. Additionally, the quantum complexity-theoretic perspective of the decodable channel problem \cite{bostanci2023unitary} may shed light on the fundamental limits of computationally tractable decoding protocols.
    \item With respect to the pseudoinverse decoder, many quantum algorithms for solving linear systems beyond QSVT could, in principle, be adapted into similar decoders (see, e.g., \cite{HHL2009,costa2021optimalscalingquantumlinear,low2024quantumlinear}). However, such algorithms must operate solely with samples of the initial state, without access to the state preparation protocol. Exploring whether improved decoding protocols could be derived from these algorithms remains an open question.
\end{enumerate}

\begin{acknowledgments}
I thank Ehud Altman, Yulong Dong, Samuel J. Garratt, Matteo Ippoliti, Vedika Khemani, Yaodong Li, Yuan Su, Jinzhao Wang and Yi-Zhuang You for discussion. I am particularly grateful to Sarang Gopalakrishnan, Xiao-Liang Qi and Haifeng Tang for discussion and collaboration on related projects. 
This work was conceived during my visit at Stanford Institute for Theoretical Physics (SITP). I am grateful for the hospitality of SITP. 
\end{acknowledgments}

\appendix

\section{Quantum circuit notations}
\label{sec:notation}

For intuitive presentation of certain quantum circuit equations, we use circuit diagrams in place of symbolic expressions when appropriate. In this section, we fix notations and introduce relevant definitions. 

By convention, time flows either from bottom to top or from left to right in our circuit diagrams. Lines represent the transmission of quantum states, while double lines denote classical signals. Square boxes indicate unitary operations. Pure states are represented by triangles; in particular, a white triangle denotes the computational basis state $\ket{0}$, which is assumed to be easily preparable:
\begin{equation}    \label{eq:computational_basis}
    \ket{0}_A = 
    \vcenter{\hbox{\includegraphics{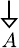}}}  .
\end{equation}

Uppercase Latin letters denote subsystems of a larger system (e.g., $A$), and the same letter is used for the corresponding Hilbert space when the meaning is clear from context. The Hilbert space dimension of $A$ is denoted by $d_A$. Expressions such as $AB$, $ABC$, etc., represent the union of several subsystems, whose Hilbert space is the tensor product of the constituent subspaces. We use $\id$ to denote the identity matrix, or $\id_A$ and $\id_{d_A}$ to clarify its underlying Hilbert space or dimension when necessary. 

The following notation is used for the Einstein--Podolsky--Rosen (EPR) state, also known as the maximally entangled state between isometric subsystems $A$ and $A'$:
\begin{equation}    \label{eq:EPR}
    \EPR_{AA'}
    = \vcenter{\hbox{\includegraphics{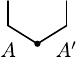}}}
    = \frac{1}{\sqrt{d_A}} \sum_{a=1}^{d_A} \ket{a}_A \ket{a}_{A'}   ,
\end{equation}
where, if the lines are interpreted as tensor contractions, the black dot can be understood as a constant factor $1/\sqrt{d_A}$. 

Next, we introduce our convention for mixed states. A general quantum state is represented mathematically by a density matrix. In circuit diagrams, pure states may still be denoted using vector-like notations, such as in Eq.~\eqref{eq:computational_basis} or Eq.~\eqref{eq:EPR}, when no confusion arises. It is understood that the actual density matrix is the projector onto the corresponding pure state.

Starting from a pure state $\ket{\psi}_{AB}$, we use $\rho_A$ to denote the reduced density matrix of subsystem $A$:
\begin{equation}
    \rho_A = \tr_{B} \rho_{AB} .
\end{equation}
The partial trace operation is depicted by a slash terminating the traced line. For example,
\begin{equation}
    \tr_B \left( \ket{\psi}_{AB} \bra{\psi}_{AB} \right) = \vcenter{\hbox{\includegraphics{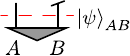}}} ,
\end{equation}
which, if explicitly drawing the two copies, corresponds to the following tensor diagram:
\begin{equation}
    \vcenter{\hbox{\includegraphics{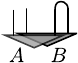}}}
\end{equation}

Regarding measurements and state overlaps, we use \includegraphics{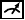} to represent a projective measurement apparatus, usually in the computational basis. When the measurement outcome is $m$, its effect on a state is to apply the corresponding projector $\Pi_m$, followed by normalization according to Born's rule.

The purity of a density matrix $\rho$ is defined as $P(\rho) = \tr \rho^2$. The second Rényi entropy is related to the purity by $S^{(2)}(\rho) = -\log P(\rho)$. 

Finally, the fidelity between two states is defined as
\begin{equation}
    F(\rho, \sigma) = \left( \tr \sqrt{\sqrt{\rho} \sigma \sqrt{\rho}} \right)^2 .
\end{equation}
For pure states, we also use the notation $F(\ket{\psi}, \ket{\varphi})$, which should be understood as $F(\ket{\psi}\bra{\psi}, \ket{\varphi}\bra{\varphi})$. In particular, the fidelity between two pure states admits a simple expression:
\begin{equation}
    F(\ket{\psi}, \ket{\varphi}) = \left| \braket{\psi|\varphi} \right|^2 .
\end{equation}

\section{Additional information on QSVT}
\label{app:QSVT}

In this Appendix, we provide additional information on QSVT, focusing mainly on the implementation details and the approach to determine the phase sequence. 

\subsection{The $\Pi_\phi$ gate}

A crucial operation in QSVT algorithms is the $\Pi$-controlled phase gate:
\begin{equation}
    \Pi_{\phi} = e^{i\phi} \Pi + e^{-i\phi} (\id-\Pi)   .
\end{equation}
To realize this gate with an arbitrary angle, a standard method is to introduce one ancilla qubit and the $\Pi$-controlled NOT gate, which is defined as
\begin{equation}
    \mathrm{C_{\Pi}NOT} = X\otimes \Pi + \id_2 \otimes (\id-\Pi)  . 
\end{equation}
Now, observe the following circuit:
\begin{equation}    \label{eq:CNOT_phi_CNOT}
    \vcenter{\hbox{\includegraphics{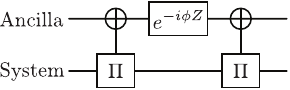}}}
\end{equation}
It equals
\begin{equation}
    \begin{aligned}
        &\phantom{=} \mathrm{C_\Pi NOT} \left( e^{-i\phi Z} \otimes \id \right) \mathrm{C_\Pi NOT} \\
        &= e^{i\phi Z} \otimes \Pi + e^{-i\phi Z} \otimes (\id-\Pi)   \\
        &= \ket{0}\bra{0} \otimes \Pi_{\phi} + \ket{1}\bra{1} \otimes \Pi_{-\phi}   .
    \end{aligned}
\end{equation}
Thus, by initializing the ancilla qubit in the $\ket{0}$ state, Eq.~\eqref{eq:CNOT_phi_CNOT} leaves the ancilla unchanged, and the effect on the system is exactly $\Pi_{\phi}$.

\subsection{Real and complex polynomials}

For a complex odd polynomial satisfying more restrictive conditions \cite{GilyenQSVT} than those preceding Eq.~\eqref{eq:QSVT_odd}, we have
\begin{equation}
    P_d(M) = \tilde{\Pi} U_{\vec{\phi}} \Pi,
\end{equation} 
where
\begin{equation}
    U_{\vec{\phi}} = \tilde{\Pi}_{\phi_1} U 
    \prod_{k=1}^{(d-1)/2} \left( \Pi_{\phi_{2k}} U^\dagger \tilde{\Pi}_{\phi_{2k+1}} U \right)  .
\end{equation}

If our goal is to realize a real polynomial transformation, for instance $\Re P(x)$, we can achieve this by superposing the transformations $P(x)$ and $P^*(x)$. In fact, this superposition can be implemented with almost no additional effort. Specifically, each gate $\Pi_\phi$ and $\tilde{\Pi}_\phi$ can be realized using Eq.~\eqref{eq:CNOT_phi_CNOT}. Depending on whether the ancilla qubit is initialized in the $\ket{0}$ or $\ket{1}$ state, the system evolves according to $U{\vec{\phi}}$ or $U_{-\vec{\phi}}$, respectively.

Suppose the initial state is $\ket{+}\otimes \ket{\psi}$, with $\ket{+} = (\ket{0}+\ket{1})/\sqrt{2}$. Then, after the alternating phase modulation sequence, the state becomes
\begin{equation}
\frac{1}{\sqrt{2}} \left( \ket{0}\otimes U_{\vec{\phi}}\ket{\psi} + \ket{1}\otimes U_{-\vec{\phi}}\ket{\psi} \right).
\end{equation}

Finally, post-selecting by the projector $\ket{+}\bra{+} \otimes \tilde{\Pi}$, the final state becomes
\begin{equation}
    \frac{\Re P(M) \ket{\psi}}{\| \Re P(M) \ket{\psi} \|}   ,
\end{equation}
with the success probability given by the square of the denominator.

In summary, a real polynomial transformation can be implemented by initializing an ancilla qubit in the $\ket{+}$ state, applying the phase modulation circuit, and finally post-selecting the ancilla qubit on $\ket{+}$ and the system on $\tilde{\Pi}$.





\subsection{From polynomial to phase sequence}

Given a desired functional transformation, the next step is to obtain an appropriate polynomial approximation. To realize the polynomial in QSVT, we need to map valid (as defined before Eq.~\eqref{eq:QSVT_odd}) polynomials to corresponding phase sequences.

The algorithm we use to determine phase sequences is based on Ref.~\cite{Dong_2021}. The code for this algorithm is available in the \texttt{QSPPACK} (MATLAB) and \texttt{pyqsp} (Python) repositories on GitHub.

We briefly describe the algorithm as follows. The key insight is that the phase sequence for QSVT is identical to that for quantum signal processing (QSP) in the reflection convention. For QSP, we formulate the problem of determining phase factors $\vec{\phi} = (\phi_0, \phi_1, \dots, \phi_d)$ as an optimization task aimed at minimizing the distance between $\Re U_{\vec{\phi}}$, the function realized by QSP, and the desired polynomial $P_d(x)$. Compared to other numerical algorithms that primarily rely on iterative methods, this optimization-based approach is more efficient and numerically stable, since it avoids the pitfalls associated with high-precision computations. The computational cost of the algorithm scales as $d^2$, where $d$ is the degree of the polynomial, and it can determine phase sequences for polynomials of degree up to $d=10^4$ using standard double-precision arithmetic.

\section{Complexity lower bound for general amplitude amplification}
\label{app:lowerbound}

In this Appendix, we prove that amplitude amplification from a pure initial state to the projector space $\Pi_m$ requires at least $\Omega(1/\sqrt{p_{m}})$ queries, treating the unitary operators $\exp(i \phi \Pi_m)$ as oracles (equivalent to $\Pi_{\phi/2}$ defined in Eq.~\eqref{eq:rel_rot} up to a global phase). Therefore, the best quantum algorithm achieves a quadratic speedup over classical algorithms, which require $\Omega(1/p_m)$ queries---essentially repeated sampling until the outcome $m$ is observed. This conclusion, as well as the proof, shares similarities with the optimality proof for Grover's search algorithm \cite{BBBV1997}. However, to the best of our knowledge, no previous work has explicitly demonstrated the optimality of this square-root speedup for amplitude amplification involving general multi-rank projectors.

\begin{theorem} \label{thm:AAoptimality}
Let $\ket{\psi}$ be a pure state and $\Pi_m$ a projector acting on the same Hilbert space of dimension $D$, with $\operatorname{rank}\Pi_m = d_m$. Suppose that $\braket{\psi | \Pi_m | \psi} = p_m$, where $0 < p_m < 1$, and that one can implement the unitaries $O_\phi = e^{i\phi \Pi_m}$ as oracles. If a quantum algorithm uses the oracle $k$ times to generate a state $\ket{\psi_m^{(k)}}$ satisfying $\braket{\psi_m^{(k)} | \Pi_m | \psi_m^{(k)}} \geq 1-\varepsilon$ for $\varepsilon \ll 1$, then 
\begin{equation}
    k = \Omega\Bigl(\min\{ p_m^{-1/2}, (d_m/D)^{-1/2} \}\Bigr).
\end{equation}
\end{theorem}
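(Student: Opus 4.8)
The plan is to follow the hybrid argument of Bennett--Bernstein--Brassard--Vazirani, adapted to the case of a projector of rank $d_m$ rather than a single marked item. First I would set up the ``no-oracle'' reference dynamics: let $\ket{\psi_m^{(k)}}$ be the state produced by the real algorithm after $k$ calls to $O_\phi$ (interleaved with arbitrary fixed unitaries and an ancilla register), and let $\ket{\tilde\psi^{(k)}}$ be the state produced by the \emph{same} algorithm with every oracle $O_\phi$ replaced by the identity. The key quantity is the total deviation $\sum_{j=1}^{k} \bigl\| (O_{\phi_j}-\id)\ket{\psi^{(j-1)}} \bigr\|$, and the standard hybrid telescoping gives
\begin{equation}
    \bigl\| \ket{\psi_m^{(k)}} - \ket{\tilde\psi^{(k)}} \bigr\|
    \;\leq\; \sum_{j=1}^{k} \bigl\| (O_{\phi_j}-\id)\ket{\psi^{(j-1)}} \bigr\|.
\end{equation}
Since $O_\phi - \id = (e^{i\phi}-1)\Pi_m$, we have $\| (O_{\phi_j}-\id)\ket{\chi} \| = |e^{i\phi_j}-1|\cdot \|\Pi_m \ket{\chi}\| \leq 2\,\|\Pi_m\ket{\chi}\|$, so everything reduces to controlling how much weight the intermediate states can place in the range of $\Pi_m$ when the oracle is effectively absent.

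Second, I would bound $\|\Pi_m\ket{\psi^{(j-1)}}\|$. Here the subtlety relative to Grover is the rank: the state starts with $\|\Pi_m\ket\psi\| = \sqrt{p_m}$, and in the ordinary BBBV argument each oracle call can increase this weight by at most a constant amount, yielding the $\Omega(1/\sqrt{p_m})$ bound. But a rank-$d_m$ projector gives the algorithm a second route — it can ``fill'' the projector subspace directly, and with $d_m$ available dimensions the weight can grow to at most $\sqrt{d_m/D}$ in the worst case even without the oracle doing useful work, or more precisely, the ancilla-free/oracle-free dynamics cannot distinguish directions inside $\mathrm{range}(\Pi_m)$ so the best it can do toward the target is governed by the overlap of a generic reachable subspace with $\Pi_m$, which scales like $d_m/D$. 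This is exactly why the bound is the \emph{minimum} of $p_m^{-1/2}$ and $(d_m/D)^{-1/2}$: the first term is the genuine amplitude-amplification obstruction, the second is the trivial ``if the target is large, you don't need the oracle at all'' regime. Concretely I would argue that if $k \ll (D/d_m)^{1/2}$ then the no-oracle final state $\ket{\tilde\psi^{(k)}}$ (which is oracle-independent, hence cannot have been ``steered'' toward the random-looking $\Pi_m$) satisfies $\braket{\tilde\psi^{(k)}|\Pi_m|\tilde\psi^{(k)}} \lesssim d_m/D + O(k\sqrt{p_m})$ or similar; combined with the hybrid bound above and the hypothesis $\braket{\psi_m^{(k)}|\Pi_m|\psi_m^{(k)}}\geq 1-\varepsilon$, a triangle inequality in the $\Pi_m$-weight forces $k = \Omega(\min\{p_m^{-1/2},(d_m/D)^{-1/2}\})$.

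Third, to make the ``cannot be steered'' step rigorous I would average over a random choice of $\Pi_m$ — e.g., $\Pi_m = V \Pi_0 V^\dagger$ for Haar-random $V$ with $\Pi_0$ a fixed rank-$d_m$ projector containing $\ket\psi$'s image appropriately, matching the constraint $\braket{\psi|\Pi_m|\psi}=p_m$ — so that for the oracle-free algorithm, $\Exp_V \|\Pi_m \ket{\psi^{(j-1)}}\|^2$ is pinned near $d_m/D$ by symmetry, and a Markov/averaging argument produces a \emph{particular} hard instance. This randomization-over-instances device is the technical heart and, I expect, the main obstacle: one must choose the ensemble of $\Pi_m$'s carefully so that it is simultaneously (i) compatible with the fixed value $p_m$, (ii) rich enough that no fixed oracle-free algorithm can anticipate it, and (iii) amenable to a clean second-moment computation. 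Once the right ensemble is identified, the remaining estimates are the routine hybrid-argument bookkeeping; the comparison with the classical $\Omega(1/p_m)$ lower bound and the resulting quadratic-speedup statement then follow immediately, as does consistency with the FPAA upper bound $O(p^{*-1/2}\log(1/\delta))$ established in Sec.~\ref{sec:FPAA}.
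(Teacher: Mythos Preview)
Your proposal is essentially the paper's proof: both use the BBBV hybrid against an oracle-free reference, bound the per-step deviation by $2\|\Pi_m\ket{\cdot}\|$, and average over an ensemble of projectors compatible with the constraint $\braket{\psi|\Pi_m|\psi}=p_m$ to pin the oracle-free overlap near $\max\{p_m,d_m/D\}$. The paper resolves what you flag as the ``main obstacle'' by taking the ensemble to be conjugation of any fixed admissible $\Pi_m$ by Haar-random elements of $\mathrm{U}(1)\oplus\mathrm{U}(D-1)$ (the $\mathrm{U}(1)$ acting on $\operatorname{span}\{\ket{\psi}\}$), which automatically preserves $p_m$ and yields the closed form $\Exp_{\Pi_m}\Pi_m = \frac{d_m-p_m}{D-1}\id + \frac{Dp_m-d_m}{D-1}\ket{\psi}\bra{\psi}$; the two regimes $p_m\gtrless d_m/D$ then drop out from the sign of the second coefficient.
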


\begin{proof}
Without loss of generality, let the state $\ket{\psi_m^{(k)}}$ be generated from the following unitary operator:
\begin{equation}
    \ket{\psi_m^{(k)}} = U_k O_{\phi_k} \dots U_2 O_{\phi_2} U_1 O_{\phi_1} \ket{\psi^{(0)}}   .
\end{equation}
We compare the deviation of the true algorithm from a ``fake algorithm''. The fake algorithm erases all oracles in the above formula and obtains the state
\begin{equation}
    \ket{\psi^{(k)}} = U_k \dots U_2 U_1 \ket{\psi^{(0)}}   .
\end{equation}
We consider the deviation
\begin{equation}
    D_k = \Exp_{\Pi_m} \left| \psi_m^{(k)} - \psi^{(k)} \right|^2   ,
\end{equation}
where the average is over some projector ensemble to be specified below. We prove the theorem in two steps. First, to meet the fidelity requirement the value of $D_k$ must be finite. The second step is to bound the increment of $D_k$ with $k$. Regarding the ensemble average, because the theorem holds for any $\Pi_m$ satisfying $\braket{\psi | \Pi_m | \psi} = p_m$, the statements in terms of $D_k$ must hold for any choice of ensemble satisfying the same condition. 

The gist of the first step is that $\ket{\psi_m^{(k)}}$ is almost in the linear space spanned by $\Pi_m$, but $\ket{\psi^{(k)}}$ is a fixed state, so their distance is finite on average. To rigorously prove it, consider the following triangle inequality:
\begin{equation}    \label{eq:DtoEF}
\begin{aligned}
    D_k &= \Exp_{\Pi_m} \left| \psi^{(k)}_m - \psi^{(k)} \right|^2    \\
    &= \Exp_{\Pi_m} \left| \left( \psi^{(k)} - \Pi_m\psi^{(k)}_m \right) - \left( \psi^{(k)}_m - \Pi_m\psi^{(k)}_m \right) \right|^2  \\
    &\geq \left( \sqrt{F_k} - \sqrt{E_k} \right)^2  ,
\end{aligned}
\end{equation}
where
\begin{align}
    E_k &= \Exp_{\Pi_m} \left| \psi^{(k)}_m - \Pi_m\psi^{(k)}_m \right|^2  ,  \\
    F_k &= \Exp_{\Pi_m} \left| \psi^{(k)} - \Pi_m\psi^{(k)}_m \right|^2  .
\end{align}
$E_k$ is at most the allowed infidelity:
\begin{equation}
    E_k = \Exp_m \braket{ \psi^{(k)}_m | \id-\Pi_m | \psi^{(k)}_m } \leq \varepsilon    ,
\end{equation}
whereas
\begin{equation}    \label{eq:Fk_mid}
    F_k \geq \Exp_{\Pi_m} \left| (\id-\Pi_m) \psi^{(k)} \right|^2
    = 1 - \braket{\psi^{(k)} | \Exp_m \Pi_m | \psi^{(k)}}   .
\end{equation}
The above inequality holds because the right-hand side is the shortest possible distance between the off-plane vector $\ket{\psi^{(k)}}$ and any vector in the plane $\Pi_m$. 

We are left with a calculation of $\Exp_{\Pi_m} \Pi_m$. Generally, any projector satisfying $\braket{\psi | \Pi | \psi} = p_m$ reads
\begin{equation}
    \Pi_m =  
    \begin{bmatrix}
        p_m & v^\dagger \\
        v & \Pi_{\perp}
    \end{bmatrix},
\end{equation}
where the first dimension corresponds to the basis $\ket{\psi}$ by convention. This matrix is written in a block form: $v$ is $1\times (D-1)$, and $\Pi_\perp$ is $(D-1)\times (D-1)$. Also, $\tr \Pi_\perp = d_m-p_m$. To capture the properties of the average that does not depend on any specific projector, we conjugate the projector by $\mathrm{U}(1) \oplus \mathrm{U}(D-1)$ and average over the Haar measure, since this conjugation leaves the first entry invariant. The average gives
\begin{equation}
    \begin{aligned}
        \Exp_{\Pi_m} \Pi_m 
        &= \Exp_{\Pi_m} \Exp_{\substack{g_1 \in \mathrm{U(1)}\\ g_\perp \in \mathrm{U}(D-1)}}
        \begin{bmatrix}
            g_1 & 0 \\ 0 & g_\perp
        \end{bmatrix}
        \begin{bmatrix}
            p_m & v^\dagger \\
            v & \Pi_{\perp} 
        \end{bmatrix}
        \begin{bmatrix}
            g_1^\dagger & 0 \\ 0 & g_\perp^\dagger
        \end{bmatrix}   \\
        &= \Exp_{\Pi_m} \Exp_{\substack{g_1 \in \mathrm{U(1)}\\ g_\perp \in \mathrm{U}(D-1)}}
        \begin{bmatrix}
            p_m & g_1 v^\dagger g_\perp^\dagger \\
            g_\perp v g_1^\dagger & g_\perp \Pi_{\perp} g_\perp^\dagger
        \end{bmatrix}   \\
        &= \Exp_{\Pi_m}
        \begin{bmatrix}
            p_m & 0 \\
            0 & \frac{\tr\Pi_\perp}{D-1} \id_\perp
        \end{bmatrix}   \\
        &= \begin{bmatrix}
            p_m & 0 \\
            0 & \frac{d_m-p_m}{D-1} \id_\perp
        \end{bmatrix}.
    \end{aligned}
\end{equation}
Therefore, 
\begin{equation}    \label{eq:avg_proj}
    \Exp_{\Pi_m} \Pi_m = \frac{d_m-p_m}{D-1} \id + \frac{Dp_m-d_m}{D-1} \ket{\psi}\bra{\psi}   .
\end{equation}

Inserting Eq.~\eqref{eq:avg_proj} back into Eq.~\eqref{eq:Fk_mid}, we have
\begin{equation}
    F_k = 1 - \frac{d_m-p_m}{D-1} - \frac{Dp_m-d_m}{D-1} \left| \braket{\psi^{(k)} | \psi} \right|^2   .
\end{equation}
We see distinct behaviors depending on the sign of $p_m-d_m/D$: if $p_m \geq d_m/D$, $F_k$ is lower-bounded by $1-p_m$ by taking $\left| \braket{\psi^{(k)} | \psi} \right|^2 = 1$; if $p_m < d_m/D$, $F_k$ is lower-bounded by $1 - \frac{d_m-p_m}{D-1}$ by taking $\left| \braket{\psi^{(k)} | \psi} \right|^2 = 0$. We will discuss their implications on the actual algorithm at the end, but in both cases $F_k$ is finite. Therefore, $D_k$ must be finite.

In the second step, we show $\sqrt{D_{k+1}}-\sqrt{D_k} \leq 2\sqrt{\max\{ p_m, d_m/D \}}$, such that by induction $D_k \leq 4 k^2 \max\{p_m, d_m/D\}$. We start from expanding $D_{k+1}$:
\begin{equation}
    \begin{aligned}
        D_{k+1} &= \Exp_{\Pi_m} \left| \psi^{(k+1)}_m - \psi^{(k+1)} \right|^2    \\
        &= \Exp_{\Pi_m} \left| \psi^{(k)}_m - O_{-\phi_{k+1}} \psi^{(k)} \right|^2  .
    \end{aligned}
\end{equation}
Using the triangle inequality, we find
\begin{equation}    \label{eq:Dk+1}
    \begin{aligned}
        \sqrt{D_{k+1}} - \sqrt{D_k} 
        &\leq \sqrt{ \Exp_{\Pi_m} \left| (O_{-\phi_{k+1}} - \id) \psi^{(k)} \right|^2 }   \\
        &= \sqrt{ \Exp_{\Pi_m} \left| (e^{-i\phi_{k+1}}-1) \Pi_m \psi^{(k)} \right|^2 } \\
        &\leq 2\sqrt{ \Exp_{\Pi_m} \braket{\psi^{(k)} | \Pi_m | \psi^{(k)}} }   ,
    \end{aligned}
\end{equation}
where in the last line the inequality is because $\left| e^{-i\phi_{k+1}}-1 \right| \leq 2$. Now, the average over projectors appears again. We plug in Eq.~\eqref{eq:avg_proj} to get
\begin{equation}
    \sqrt{D_{k+1}} - \sqrt{D_k} \leq 2\sqrt{ \frac{d_m-p_m}{D-1} + \frac{Dp_m-d_m}{D-1} |\braket{\psi^{(k)} | \psi }|^2 }   .
\end{equation}
There are again two cases as in the discussion for $F_k$. If $p_m\geq d_m/D$, 
\begin{equation}
    \sqrt{D_{k+1}} - \sqrt{D_k} \leq 2\sqrt{ p_m }   ;
\end{equation}
if $p_m<d_m/D$,
\begin{equation}
    \sqrt{D_{k+1}} - \sqrt{D_k} \leq 2\sqrt{ \frac{d_m-p_m}{D-1} }   .
\end{equation}
For the latter case, assuming $D\gg 1$ to omit $1$ and $p_m$ on the right-hand side, we approximately have
\begin{equation}
    \sqrt{D_{k+1}} - \sqrt{D_k} \leq 2\sqrt{ d_m/D }   .
\end{equation}
This proves the second step. Combining both steps yields the lower bound stated in the theorem.
\end{proof}

From Eq.~\eqref{eq:Dk+1}, we observe that the maximal increment of $\sqrt{D_k}$ occurs for $\phi_k = \pi$, reducing precisely to Grover's algorithm. Other phase sequences slow the convergence by a constant factor but can provide beneficial properties, such as fixed-pointness.

Finally, we discuss the implications of the sign of $p_m-d_m/D$. It is not hard to notice that the case $p_m>d_m/D$ is similar to the situation of Grover's algorithm, where one can take advantage of the large overlap between $\ket{\psi}$ and $\Pi_m$ to quickly approach the target space by using $\exp(i \pi \Pi_m)$ and $\exp(i \pi \ket{\psi}\bra{\psi})$. While the bound $\Omega((d_m/D)^{-1/2})$ in the case $p_m<d_m/D$ is seemingly smaller than the Grover's bound, it can be understood as follows. One can replace the state $\ket{\psi}$ by a Haar random state $\ket{\psi'}$, which satisfies $ \braket{\psi' | \Pi_m | \psi'} \geq d_m/D$ with high probability. Then we do Grover's algorithm between $\ket{\psi'}$ and $\Pi_m$ and thus the query complexity is reduced to $\Omega((d_m/D)^{-1/2})$. Notably, if we restrict the initial state to be $\ket{\psi}$ and the intermittent unitaries to be $U_j = \exp (i\phi_j \ket{\psi}\bra{\psi})$, as in Grover's or QSVT algorithm, then we must have $\left| \braket{\psi^{(k)} | \psi} \right|^2 = 1$ and thus the query complexity becomes $\Omega(1/\sqrt{p_m})$ regardless of the sign of $p_m-d_m/D$.

\section{Infidelity and polynomial approximation error}
\label{app:polyerror}

In this Appendix, we provide a detailed analysis of the required accuracy of a polynomial approximation to achieve a desired fidelity for the task of mixed-state post-selection discussed in Sec.~\ref{sec:mixed}, and separately for the task of teleportation decoding discussed in Sec.~\ref{sec:teleportation}. In particular, we explain why, for both tasks, it is necessary to bound the multiplicative error rather than the additive error.

\begin{theorem} \label{thm:infidelity_mix}
    Assume the notations in Sec.~\ref{sec:mixed}. If, for the range of $p_{am}$, the QSVT algorithm realizes the transformation
    \begin{equation}    \label{eq:LAA_w_error}
        f\left( x \right) = \frac{x}{\sqrt{p^*}} \left( 1+\delta(x) \right)    ,
    \end{equation}
    where $|\delta(x)|\leq \delta <1$, then $F_{\mathrm{QSVT}} \geq 1-\delta^2$. 
\end{theorem}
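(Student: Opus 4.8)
The plan is to plug the error-corrupted transformation $f(x) = \frac{x}{\sqrt{p^*}}(1+\delta(x))$ into the fidelity formula~\eqref{eq:FQSVT_mix} and show that the multiplicative structure of the error forces $F_{\mathrm{QSVT}}$ to stay close to $1$ regardless of the $p_{am}$ distribution. First I would write $f(\sqrt{p_{am}}) = \frac{\sqrt{p_{am}}}{\sqrt{p^*}}(1+\delta_a)$ with the shorthand $\delta_a \equiv \delta(\sqrt{p_{am}})$, so that $\sqrt{p_{am}}\, f(\sqrt{p_{am}}) = \frac{p_{am}}{\sqrt{p^*}}(1+\delta_a)$ and $f(\sqrt{p_{am}})^2 = \frac{p_{am}}{p^*}(1+\delta_a)^2$. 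Substituting into~\eqref{eq:FQSVT_mix}, the common factors of $1/p^*$ cancel between numerator and denominator (the denominator carries $p_{\mathrm{QSVT}} = \frac{1}{d_R}\sum_a f(\sqrt{p_{am}})^2$), leaving
\begin{equation}
    F_{\mathrm{QSVT}} = \frac{\left( \sum_a p_{am}(1+\delta_a) \right)^2}{\left( \sum_a p_{am} \right)\left( \sum_a p_{am}(1+\delta_a)^2 \right)},
\end{equation}
where I used $d_R p_m = \sum_a p_{am}$. This is exactly a Cauchy--Schwarz ratio.

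The key step is then to recognize the right-hand side as $\cos^2\theta$ for the angle between two vectors in the inner-product space $\ell^2$ weighted by the measure $p_{am}$: take $u_a = \sqrt{p_{am}}$ and $w_a = \sqrt{p_{am}}\,(1+\delta_a)$, so that $\langle u, w\rangle = \sum_a p_{am}(1+\delta_a)$, $\|u\|^2 = \sum_a p_{am}$, and $\|w\|^2 = \sum_a p_{am}(1+\delta_a)^2$, and the displayed expression is $\langle u,w\rangle^2/(\|u\|^2\|w\|^2)$. Writing $w = u + \eta$ with $\eta_a = \sqrt{p_{am}}\,\delta_a$, one has $1 - F_{\mathrm{QSVT}} = \sin^2\theta = \|\eta_\perp\|^2/\|w\|^2$ where $\eta_\perp$ is the component of $\eta$ orthogonal to $u$. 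Since $\|\eta_\perp\|^2 \leq \|\eta\|^2 = \sum_a p_{am}\delta_a^2 \leq \delta^2 \sum_a p_{am}$ and $\|w\|^2 = \sum_a p_{am}(1+\delta_a)^2 \geq \sum_a p_{am}(1-\delta)^2 \cdot$ — here I need to be a little careful, since a naive bound $\|w\|^2 \geq (1-\delta)^2\sum_a p_{am}$ would only give $1-F_{\mathrm{QSVT}} \leq \delta^2/(1-\delta)^2$, which is weaker than claimed.

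To get the sharp bound $F_{\mathrm{QSVT}} \geq 1 - \delta^2$, the cleaner route is to bound $\sin^2\theta$ using the component decomposition directly: $\|\eta_\perp\|^2 = \|\eta\|^2 - \langle \eta, \hat u\rangle^2$ where $\hat u = u/\|u\|$, and $\|w\|^2 = \|u + \eta\|^2 = \|u\|^2 + 2\langle u,\eta\rangle + \|\eta\|^2 \geq \|u\|^2 + 2\langle u,\eta\rangle + \langle\hat u,\eta\rangle^2$, and then verify that $\|\eta_\perp\|^2 / \|w\|^2 \leq \|\eta\|^2/\|u\|^2 \leq \delta^2$ by cross-multiplying and checking the resulting inequality reduces to $(\langle\hat u,\eta\rangle + \|u\|)^2 \geq 0$ or something comparably trivial after using $\|\eta\| \le \delta\|u\|$. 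I expect this elementary manipulation — confirming that the orthogonal projection genuinely removes enough of $\eta$ that the bound is $\delta^2$ rather than $\delta^2/(1-\delta)^2$ — to be the only real obstacle; everything else is direct substitution. An alternative and perhaps slicker finish is to note $F_{\mathrm{QSVT}} = \cos^2\theta(u,w) \geq \cos^2\theta(u,\eta\text{-}\mathrm{worst})$ and bound the angle between $u$ and $w$ by observing $w = u(1 + \delta_a)$ componentwise with $|\delta_a|\le\delta$, so the angle is at most $\arcsin\delta$ by a standard perturbation estimate; I would present whichever is shortest once the calculation is done.
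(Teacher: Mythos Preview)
Your approach is correct and reaches the claimed bound, but it is genuinely different from the paper's argument. The paper introduces the scalars $\Delta_1 = \frac{1}{d_R p_m}\sum_a p_{am}\delta_a$ and $\Delta_2 = \frac{1}{d_R p_m}\sum_a p_{am}\delta_a^2$ (which in your notation are $\langle u,\eta\rangle/\|u\|^2$ and $\|\eta\|^2/\|u\|^2$), rewrites $F_{\mathrm{QSVT}} = (1+\Delta_1)^2/(1+2\Delta_1+\Delta_2)$, and then minimizes over $\Delta_1$ by calculus: the derivative vanishes at $\Delta_1=-\Delta_2$, giving $F_{\mathrm{QSVT}}\ge 1-\Delta_2\ge 1-\delta^2$. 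Your route instead interprets the same ratio as $\cos^2\theta(u,w)$ with $w=u+\eta$ and bounds $\sin^2\theta=\|\eta_\perp\|^2/\|w\|^2$ via the angle-perturbation inequality $\sin^2\theta\le\|\eta\|^2/\|u\|^2$; carrying out the cross-multiplication you sketched, the inequality reduces to the perfect square $(\langle u,\eta\rangle+\|\eta\|^2)^2\ge 0$, so it holds unconditionally and the bound $\|\eta\|\le\delta\|u\|$ enters only at the very last step. Both arguments land on exactly the same intermediate estimate $1-F_{\mathrm{QSVT}}\le\Delta_2\le\delta^2$. The paper's calculus minimization is slightly more mechanical; your geometric reading gives a cleaner reusable lemma and makes transparent why the sharp bound is $\delta^2$ rather than $\delta^2/(1-\delta)^2$ --- your worry on that point is unfounded once the cross-multiplication is actually done.
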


\begin{proof}
    Our starting point is the expression for $F_{\mathrm{QSVT}}$ derived in Eq.~\eqref{eq:FQSVT_mix}, which we transcribe here for convenience:
    \begin{equation}
        F_{\mathrm{QSVT}} = \dfrac{ \left( \sum_a \sqrt{p_{am}}f(\sqrt{p_{am}}) \right)^2 }{ d_R p_m \sum_a f(\sqrt{p_{am}})^2 }    .
    \end{equation}
    In order to compare multiplicative and additive errors while proving the theorem, we rewrite the transformation in an additive error form:
    \begin{equation}
        f(x) = \frac{x}{\sqrt{p^*}} + \delta'(x).
    \end{equation}
    With some straightforward expansion, we find
    \begin{equation}    \label{eq:FQSVT_mix_expansion}
        F_{\mathrm{QSVT}} = \dfrac{ 1 + 2\Delta_1 + \Delta_1^2 }{ 1 + 2\Delta_1 + \Delta_2 }   ,
    \end{equation}
    where     
    \begin{equation}
        \Delta_1 = \frac{\sqrt{p^*}}{d_R p_m} \sum_a \sqrt{p_{am}} \delta'(\sqrt{p_{am}}),
    \end{equation}
    \begin{equation}
        \Delta_2 = \frac{p^*}{d_Rp_m} \sum_a \delta'(\sqrt{p_{am}})^2   .
    \end{equation}
    Now, we use the relation $\delta'(x) = \frac{x}{\sqrt{p^*}} \delta(x)$ to obtain the following bounds for $\Delta_1$ and $\Delta_2$:
    \begin{equation}
        |\Delta_1| \leq \frac{\sqrt{p^*}}{d_R p_m} \sum_a \frac{p_{am}}{\sqrt{p^*}} \delta = \delta,
    \end{equation}
    \begin{equation}
        0 \leq \Delta_2 \leq \frac{p^*}{d_Rp_m} \sum_a \frac{p_{am}}{p^*} \delta^2 = \delta^2   .
    \end{equation}

    Viewing $F_{\mathrm{QSVT}}$ as a function of $\Delta_1$ and $\Delta_2$, we take the partial derivative with respect to $\Delta_1$ and find that $F_{\mathrm{QSVT}}$ is minimized when $\Delta_1 = -\Delta_2$. Plugging this into Eq.~\eqref{eq:FQSVT_mix_expansion} yields
    \begin{equation}
        F_{\mathrm{QSVT}} \geq \frac{1-2\Delta_2  +\Delta_2^2}{1-2\Delta_2+\Delta_2} = 1-\Delta_2 \geq 1-\delta^2   ,
    \end{equation}
    as asserted in the theorem. 

    Finally, we demonstrate why bounding the additive error alone is insufficient. Without employing the relation $\delta'(x) = \frac{x}{\sqrt{p^*}} \delta(x)$ and instead assuming $\delta'(x)\leq \delta'$, we would only obtain the following bounds on $\Delta_1$ and $\Delta_2$:
    \begin{equation}
        |\Delta_1| \leq \frac{\sqrt{p^*}}{d_Rp_m} \sum_a \sqrt{p_{am}} \delta' \leq \frac{p_{\max}}{p_m} \delta'   ,
    \end{equation}
    \begin{equation}
        0 \leq \Delta_2 \leq \frac{p^*}{p_m} \delta'^2   .
    \end{equation}
    In the above, we used the fact that $p^* \geq p_{\max} = \max_a p_{am}$, since all $p_{am}$ lie in the range from $0$ to $p^*$. Consequently, instead of having $|\Delta_2| \leq \delta'$, the bound acquires a spectrum-dependent factor of $p^*/p_m>1$. Although this factor may be modest for many distributions of $p_{am}$, it can become large if the distribution has long tails. Thus, we have shown that it is necessary to bound the multiplicative error in order to lower-bound the fidelity for an arbitrary input state.
\end{proof}

\begin{theorem} \label{thm:infidelity_tel}
    Assume the notations in Sec.~\ref{sec:teleportation}. If, over the range of $p_{am}$, the QSVT algorithm realizes the transformation
    \begin{equation}    \label{eq:inverse_w_error}
        f\left( x \right) = \frac{\sqrt{p^*}}{x} \left( 1+\delta(x) \right)    ,
    \end{equation}
    where $|\delta(x)|\leq \delta <1$, then $F_{\mathrm{QSVT}} \geq 1-\delta^2$.
\end{theorem}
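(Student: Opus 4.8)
\emph{Proof proposal.} The plan is to reduce this to the algebraic computation already carried out for Theorem~\ref{thm:infidelity_mix}, using the observation (flagged just after Eq.~\eqref{eq:FQSVT_tel}) that, even though $p_{\mathrm{QSVT}}$ differs between the mixed-state and teleportation settings, $F_{\mathrm{QSVT}}$ collapses to the same form. First I would substitute Eq.~\eqref{eq:pQSVT_tel} into Eq.~\eqref{eq:FQSVT_tel}, cancelling the common $d_R p_m$ factors, to obtain
\begin{equation}
    F_{\mathrm{QSVT}} = \frac{\left( \sum_a \sqrt{p_{am}}\, f(\sqrt{p_{am}}) \right)^2}{d_R \sum_a p_{am}\, f(\sqrt{p_{am}})^2}.
\end{equation}

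Next I would rewrite the multiplicative-error transformation \eqref{eq:inverse_w_error} in additive form, $f(x) = \sqrt{p^*}/x + \delta'(x)$ with $\delta'(x) = \frac{\sqrt{p^*}}{x}\,\delta(x)$. The crucial point is that the explicit $1/x$ in the inverse function cancels the $\sqrt{p_{am}}$ and $p_{am}$ weights: one gets $\sqrt{p_{am}}\, f(\sqrt{p_{am}}) = \sqrt{p^*} + \sqrt{p_{am}}\,\delta'(\sqrt{p_{am}})$ and $p_{am}\, f(\sqrt{p_{am}})^2 = p^* + 2\sqrt{p^*}\sqrt{p_{am}}\,\delta'(\sqrt{p_{am}}) + p_{am}\,\delta'(\sqrt{p_{am}})^2$, so numerator and denominator each carry an overall $d_R^2 p^*$ and
\begin{equation}
    F_{\mathrm{QSVT}} = \frac{1 + 2\Delta_1 + \Delta_1^2}{1 + 2\Delta_1 + \Delta_2},
\end{equation}
with $\Delta_1 = \frac{1}{d_R\sqrt{p^*}} \sum_a \sqrt{p_{am}}\,\delta'(\sqrt{p_{am}})$ and $\Delta_2 = \frac{1}{d_R p^*} \sum_a p_{am}\,\delta'(\sqrt{p_{am}})^2$. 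This is structurally identical to Eq.~\eqref{eq:FQSVT_mix_expansion}.

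From here the argument follows Theorem~\ref{thm:infidelity_mix} essentially verbatim. Applying $\delta'(x) = \frac{\sqrt{p^*}}{x}\delta(x)$ once more makes the weights disappear entirely, giving $\Delta_1 = \frac{1}{d_R}\sum_a \delta(\sqrt{p_{am}})$, hence $|\Delta_1| \leq \delta$, and $\Delta_2 = \frac{1}{d_R}\sum_a \delta(\sqrt{p_{am}})^2 \in [0,\delta^2]$. Treating $F_{\mathrm{QSVT}}$ as a function of $\Delta_1$ at fixed $\Delta_2$, I would check that its only interior critical point is $\Delta_1 = -\Delta_2$ (the other root $\Delta_1 = -1$ lies outside the feasible window since $\delta<1$), at which $F_{\mathrm{QSVT}} = 1 - \Delta_2$; as this is a minimum, $F_{\mathrm{QSVT}} \geq 1 - \Delta_2 \geq 1 - \delta^2$. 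I would close with the companion remark that a purely additive assumption $|\delta'(x)| \leq \delta'$ fails to trigger these cancellations and instead yields bounds on $\Delta_1,\Delta_2$ carrying a spectrum-dependent factor of order $p_m/p^* > 1$ (since $p^* \leq p_{\min}$ here), which becomes large for broadly spread $p_{am}$ --- exactly the reason the multiplicative error must be controlled.

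I do not expect a genuine obstacle, since the content duplicates Theorem~\ref{thm:infidelity_mix}. The two points needing a little care are (i) verifying the reduction of $F_{\mathrm{QSVT}}$ after clearing $p_{\mathrm{QSVT}}$ from Eq.~\eqref{eq:pQSVT_tel}, where the extra $p_{am}$ weight in $p_{\mathrm{QSVT}}$ is precisely what compensates the $1/x$ in $f$; and (ii) being explicit that $F_{\mathrm{QSVT}} \geq 1 - \Delta_2$ holds for every admissible value of $\Delta_2$ regardless of $\Delta_1$, so that one need not worry about $\Delta_1$ and $\Delta_2$ being coupled through the same data $\{\delta(\sqrt{p_{am}})\}_a$.
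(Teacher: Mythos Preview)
Your proposal is correct and follows essentially the same route as the paper: both reduce $F_{\mathrm{QSVT}}$ to the ratio $(1+2\Delta_1+\Delta_1^2)/(1+2\Delta_1+\Delta_2)$ with $\Delta_1=\frac{1}{d_R}\sum_a\delta(\sqrt{p_{am}})$ and $\Delta_2=\frac{1}{d_R}\sum_a\delta(\sqrt{p_{am}})^2$, then invoke the minimization-in-$\Delta_1$ argument from Theorem~\ref{thm:infidelity_mix} and the companion additive-error remark. The only cosmetic difference is that you pass through the additive form $\delta'(x)$ as an intermediate step before converting back, whereas the paper substitutes the multiplicative form \eqref{eq:inverse_w_error} directly; the computations and conclusions are identical.
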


\begin{proof}
    We begin with the expression for $F_{\mathrm{QSVT}}$ in the teleportation decoding task, which is essentially the same as in the previous proof except for the expression for $p_{\mathrm{QSVT}}$:
    \begin{equation}
        F_{\mathrm{QSVT}} = \dfrac{ \left( \sum_a \sqrt{p_{am}}f(\sqrt{p_{am}}) \right)^2 }{ d_R \sum_a p_{am} f(\sqrt{p_{am}})^2 }    .
    \end{equation}
    By inserting the expression for $f(x)$ and expanding the fidelity, we obtain
    \begin{equation}
        F_{\mathrm{QSVT}} = \dfrac{ 1 + 2\Delta_1 + \Delta_1^2 }{ 1 + 2\Delta_1 + \Delta_2 }   ,
    \end{equation}
    where
    \begin{equation}
        \Delta_1 = \frac{1}{d_R} \sum_a \delta(\sqrt{p_{am}}),
    \end{equation}
    \begin{equation}
        \Delta_2 = \frac{1}{d_R} \sum_a \delta(\sqrt{p_{am}})^2   .
    \end{equation}
    It is easy to see that $|\Delta_1| \leq \delta$ and $0\leq \Delta_2 \leq \delta^2$. By applying the same reasoning as in the proof of Theorem \ref{thm:infidelity_mix}, we conclude that $F_{\mathrm{QSVT}} \geq 1-\delta^2$.

    Finally, we consider the scenario where only the additive error is bounded, i.e., 
    \begin{equation}
        f(x) = \frac{\sqrt{p^*}}{x} + \delta'(x),\quad \text{with } |\delta'(x)|\leq\delta'.
    \end{equation}
    In this case, the bounds on $\Delta_1$ and $\Delta_2$ become
    \begin{equation}
        |\Delta_1| \leq \frac{p_m}{p^*} \delta'    ,
    \end{equation}
    \begin{equation}
        0 \leq \Delta_2 \leq \frac{p_m}{p^*} \delta'^2  .
    \end{equation}
    To ensure the inversion works for all $p_{am}$, we must have $p^* \leq p_{\min} = \min_a p_{am}$. Consequently, the term $\Delta_2$ acquires a spectrum-dependent factor of $p_m/p^*>1$, which could be large if the distribution of $p_{am}$ has long tails. Therefore, it is necessary to bound the multiplicative error.
\end{proof}

\section{Uhlmann bounds on fidelity}
\label{app:Uhlmann}

In the main text, we have seen that $\omega_R = \id_R/d_R$, dubbed the decoupling condition, appears as the condition for the existence of a deterministic algorithm for both the task of mixed-state post-selection and the task of teleportation decoding. In this Appendix, we prove the corresponding robust statements. 

To prove these theorems, our central tool is Uhlmann's theorem, the proof of which can be found in textbooks such as Refs.~\cite{Nielsen_Chuang_2010,preskill_chp_2}.

\begin{lemma}[Uhlmann \cite{Uhlmann1976273}]
    Let $\rho$ and $\sigma$ be two density matrices on $A$. Let $\ket{\rho}_{AB}$ and $\ket{\sigma}_{AC}$ be purifications of $\rho$ and $\sigma$, respectively, i.e., 
    \begin{equation}
        \rho = \tr_B \ket{\rho}\bra{\rho}_{AB}, \quad \sigma = \tr_C \ket{\sigma}\bra{\sigma}_{AC}   .
    \end{equation}
    Then
    \begin{equation}
        F(\rho, \sigma) = \max_{U_{B\to C}} \left| \braket{\sigma |_{AC} U_{B\to C}| \rho }_{AB} \right|^2  ,
    \end{equation}
    where $U_{B\to C}$ can be any isometry from $B$ to $C$. In other words, the fidelity between two states equals the maximum overlap between their purifications.
\end{lemma}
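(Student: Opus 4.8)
The plan is to reduce the maximization over isometries to a maximization over operator-norm contractions, and then recognize the result as the variational formula for a trace norm. First I would normalize the purifications. Since any two purifications of the same state on a large enough purifying factor are related by an isometry acting only on that factor, and since enlarging $B$ or $C$ (tensoring on an ancilla in a fixed state) only enlarges the set of available $U_{B\to C}$ while leaving $\rho$ and $\sigma$ untouched, it suffices to establish the identity for one convenient pair of purifications. I would use the (unnormalized) maximally entangled vector $\ket{\Omega}_{AA'}=\sum_i\ket{i}_A\ket{i}_{A'}$ and write $\ket{\rho}_{AB}=(\sqrt{\rho}\otimes V)\ket{\Omega}$ and $\ket{\sigma}_{AC}=(\sqrt{\sigma}\otimes W)\ket{\Omega}$, where $V\colon A'\to B$ and $W\colon A'\to C$ are isometries; this is precisely the statement that every purification of $\rho$ is of the form $(\sqrt{\rho}\otimes V)\ket{\Omega}$.

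Next I would compute the overlap. Using the transpose identity $(\id_A\otimes M)\ket{\Omega}=(M^{T}\otimes\id_{A'})\ket{\Omega}$ and $\bra{\Omega}(A\otimes B)\ket{\Omega}=\tr(AB^{T})$, a short manipulation gives, for any isometry $U_{B\to C}$,
\[
    \braket{\sigma|_{AC}\,U_{B\to C}\,|\rho}_{AB}=\tr\!\bigl(\sqrt{\sigma}\sqrt{\rho}\,X\bigr),\qquad X\equiv(W^{\dagger}UV)^{T}.
\]
Because $W^{\dagger}$, $U$, $V$ are all norm-one maps, $\|X\|_\infty\le 1$; conversely, since the purifying dimension may be taken arbitrarily large, every contraction $X$ with $\|X\|_\infty\le 1$ arises this way (dilate $X$ to a unitary on a larger space and absorb it into $U$). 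Hence
\[
    \max_{U_{B\to C}}\bigl|\braket{\sigma|U_{B\to C}|\rho}\bigr|^{2}
    =\max_{\|X\|_\infty\le 1}\bigl|\tr(\sqrt{\sigma}\sqrt{\rho}\,X)\bigr|^{2}
    =\bigl\|\sqrt{\sigma}\sqrt{\rho}\bigr\|_{1}^{2}
    =\bigl\|\sqrt{\rho}\sqrt{\sigma}\bigr\|_{1}^{2},
\]
where the third equality is the duality $\|Y\|_1=\sup_{\|X\|_\infty\le1}|\tr(YX)|$ (the supremum attained at $X$ a phase times the partial isometry from the polar decomposition of $\sqrt{\sigma}\sqrt{\rho}$), and the last equality uses invariance of the trace norm under Hermitian conjugation. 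Finally, $\|M\|_1=\tr\sqrt{M^\dagger M}=\tr\sqrt{MM^\dagger}$ with $M=\sqrt{\rho}\sqrt{\sigma}$ yields $\|\sqrt{\rho}\sqrt{\sigma}\|_1=\tr\sqrt{\sqrt{\rho}\sigma\sqrt{\rho}}=\sqrt{F(\rho,\sigma)}$ with the convention fixed in Appendix~\ref{sec:notation}, completing the proof.

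The main obstacle I anticipate is not the computation but the bookkeeping around the purifying spaces: one must argue cleanly that (i) restricting to a single canonical pair of purifications is without loss of generality, and (ii) the set $\{(W^{\dagger}UV)^{T}:U\ \text{an isometry}\}$ is exactly the unit ball of the operator norm on the relevant block, which requires keeping the purifying dimensions flexible and being careful that $V$ and $W$ need not be surjective. Once these dilation arguments are pinned down, the remainder is the standard trace-norm duality together with the elementary spectral identity $\|M\|_1=\tr\sqrt{M^\dagger M}=\tr\sqrt{MM^\dagger}$.
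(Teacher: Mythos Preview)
Your proof is correct and follows one of the standard textbook routes to Uhlmann's theorem: reduce to canonical purifications via the vectorization map, identify the overlap as $\tr(\sqrt{\sigma}\sqrt{\rho}\,X)$ for a contraction $X$, invoke trace-norm/operator-norm duality, and finish with the spectral identity $\|\sqrt{\rho}\sqrt{\sigma}\|_1=\tr\sqrt{\sqrt{\rho}\,\sigma\,\sqrt{\rho}}$. The bookkeeping points you flag---that passing to a single canonical pair of purifications is without loss of generality, and that the set of $(W^\dagger U V)^T$ sweeps out the full operator-norm unit ball once the purifying dimensions are unconstrained---are real but routine, and you have identified the right tool (unitary dilation of contractions) to close them.

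There is nothing to compare against in the paper itself: the lemma is stated without proof and the reader is referred to the textbooks \cite{Nielsen_Chuang_2010,preskill_chp_2}. Your argument is essentially the proof one finds in those references (in particular the Nielsen--Chuang presentation), so it is entirely in line with what the paper intends.
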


We start with the theorem concerning decodability. We begin with a tripartite pure state $\ket{\omega}_{RED}$, where $R$ represents a reference system containing encoded quantum information, and $E$ denotes an inaccessible environment. A decoder attempts to apply a decoding channel on $D$ to recover the state in $R$.

The full form of decoupling condition asserts that a decoding channel exists if $\omega_{RE} = \frac{\id_R}{d_R} \otimes \omega_E$. In our study of both mixed-state post-selection and decoding teleportation, no subsystem is erased. Instead, the state $\omega_{RD|m}$ (see Fig.~\ref{fig:tel}(a)) is prepared via unitary operations and projective measurements. Here, we may consider $E$ as nonexistent, reducing the decoupling condition to $\omega_{R|m} = \id_R/d_R$.

The ``$\Leftarrow$'' part of Theorem \ref{thm:decoupling} also appears in Ref.~\cite{hayden2008decoupling}. Below is the theorem. 



\begin{theorem} \label{thm:decoupling}
    Let $\ket{\omega}_{RED}$ be a pure state. There exists a decoding channel $\mathcal{D}_{D\to R'}$ satisfying
    \begin{equation}
        F( \mathcal{D}_{D\to R'}(\omega_{RD}), \EPR_{RR'} ) \geq 1-\varepsilon
    \end{equation}
    if and only if there exists a (possibly mixed) state $\tau_E$ on $E$ such that
    \begin{equation}
        F( \omega_{RE}, \frac{\id_R}{d_R} \otimes \tau_E ) \geq 1-\varepsilon   . 
    \end{equation}
\end{theorem}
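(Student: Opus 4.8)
\textit{Proof proposal.} The plan is to derive both implications from Uhlmann's theorem alone, using it in its easy direction (the fidelity of two states is at least the squared overlap of any pair of their purifications, taken on a common ancilla) and in its substantive direction (some pair of purifications attains the fidelity). In both implications the bridge is a single pure state $\ket{\Phi}_{RER'E'}$, where $E'$ is the ancilla of a Stinespring dilation of the decoder; the only real work is bookkeeping — keeping track of which reduced state each pure vector purifies and with respect to which ancilla.

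For the ``only if'' direction, suppose a decoder $\mathcal{D}_{D\to R'}$ achieves $F(\mathcal{D}_{D\to R'}(\omega_{RD}),\EPR_{RR'})\geq 1-\varepsilon$. I would dilate it as $\mathcal{D}_{D\to R'}(\cdot)=\tr_{E'}[W(\cdot)W^\dagger]$ for an isometry $W:D\to R'E'$, and set $\ket{\Phi}_{RER'E'}:=(\id_{RE}\otimes W)\ket{\omega}_{RED}$. A short computation using $W^\dagger W=\id_D$ (and that $W$ acts only on $D$) gives $\tr_{R'E'}\ket{\Phi}\bra{\Phi}=\omega_{RE}$ and $\tr_{EE'}\ket{\Phi}\bra{\Phi}=\mathcal{D}_{D\to R'}(\omega_{RD})$, so $\ket{\Phi}$ is simultaneously a purification of $\omega_{RE}$ on the ancilla $R'E'$ and a purification of the decoded state on the ancilla $EE'$. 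Since $\EPR_{RR'}$ is already pure, every purification of it on $EE'$ has the product form $\EPR_{RR'}\otimes\ket{\eta}_{EE'}$, so Uhlmann applied to the pair $(\mathcal{D}_{D\to R'}(\omega_{RD}),\EPR_{RR'})$ yields a pure $\ket{\eta}_{EE'}$ with $|\braket{\Phi|\zeta}|^2\geq 1-\varepsilon$, where $\ket{\zeta}:=\EPR_{RR'}\otimes\ket{\eta}_{EE'}$. Tracing $R'E'$ out of $\ket{\zeta}$ gives $\tfrac{\id_R}{d_R}\otimes\tau_E$ with $\tau_E:=\tr_{E'}\ket{\eta}\bra{\eta}_{EE'}$; hence $\ket{\Phi}$ and $\ket{\zeta}$ purify $\omega_{RE}$ and $\tfrac{\id_R}{d_R}\otimes\tau_E$ on the common ancilla $R'E'$, and the easy direction of Uhlmann gives $F(\omega_{RE},\tfrac{\id_R}{d_R}\otimes\tau_E)\geq|\braket{\Phi|\zeta}|^2\geq 1-\varepsilon$.

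For the ``if'' direction I would run the same diagram backwards. Given $\tau_E$ with $F(\omega_{RE},\tfrac{\id_R}{d_R}\otimes\tau_E)\geq 1-\varepsilon$, pick a purification $\ket{\tau}_{EE'}$ of $\tau_E$, enlarging $E'$ if needed so that $\dim D\leq\dim R'E'$. Then $\ket{\omega}_{RED}$ purifies $\omega_{RE}$ on ancilla $D$ and $\ket{\sigma}:=\EPR_{RR'}\otimes\ket{\tau}_{EE'}$ purifies $\tfrac{\id_R}{d_R}\otimes\tau_E$ on ancilla $R'E'$, so the substantive direction of Uhlmann supplies an isometry $W:D\to R'E'$ with $|\braket{\sigma|\Phi}|^2\geq 1-\varepsilon$ for $\ket{\Phi}:=(\id_{RE}\otimes W)\ket{\omega}_{RED}$. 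Define the decoder $\mathcal{D}_{D\to R'}(\cdot):=\tr_{E'}[W(\cdot)W^\dagger]$, a valid channel. Then $\tr_{EE'}\ket{\Phi}\bra{\Phi}=\mathcal{D}_{D\to R'}(\omega_{RD})$ while $\tr_{EE'}\ket{\sigma}\bra{\sigma}=\EPR_{RR'}$, so $\ket{\Phi}$ and $\ket{\sigma}$ are purifications of $\mathcal{D}_{D\to R'}(\omega_{RD})$ and $\EPR_{RR'}$ on the common ancilla $EE'$, and the easy direction of Uhlmann gives $F(\mathcal{D}_{D\to R'}(\omega_{RD}),\EPR_{RR'})\geq|\braket{\sigma|\Phi}|^2\geq 1-\varepsilon$.

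I do not expect a genuinely hard step; the points that need care are purely organizational: verifying that pushing the dilation isometry $W$ through $\ket{\omega}_{RED}$ leaves $\omega_{RE}$ untouched (this is exactly $W^\dagger W=\id_D$ with $W$ acting only on $D$), and the mild dimension matching that lets the Uhlmann isometry $W:D\to R'E'$ exist in the ``if'' direction, which I would dispatch by enlarging $E'$. I would also flag that the proof only ever uses the robust fidelity bound, never the exact decoupling equality $\omega_{RE}=\tfrac{\id_R}{d_R}\otimes\tau_E$, so Theorem~\ref{thm:decoupling} is a genuine quantitative strengthening of the $\varepsilon=0$ statement, and that the same skeleton with $E$ trivial is what underlies the Uhlmann-fidelity bounds invoked for mixed-state post-selection and teleportation decoding in the main text.
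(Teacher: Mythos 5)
Your proof is correct and follows essentially the same route as the paper's: both directions come from Uhlmann's theorem with the same choice of purifications ($\ket{\omega}_{RED}$ versus $\EPR_{RR'}\otimes\ket{\tau}_{EE'}$) and the same identification of the Uhlmann isometry $W: D\to R'E'$ with a Stinespring dilation of the decoding channel, with monotonicity of fidelity under partial trace closing each direction. Your extra bookkeeping (the explicit partial-trace identities using $W^\dagger W=\id_D$ and enlarging $E'$ to match dimensions) simply makes explicit steps the paper leaves implicit.
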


\begin{proof}
    ``$\Leftarrow$'': We start from $F( \omega_{RE}, \frac{\id_R}{d_R} \otimes \tau_E ) \geq 1-\varepsilon$. To use Uhlmann's theorem, we need to introduce purifications for $\omega_{RE}$ and $\frac{\id_R}{d_R} \otimes \tau_E$. For $\omega_{RE}$, $\ket{\omega}_{RED}$ is a purification. For $\frac{\id_R}{d_R} \otimes \tau_E$, the purification is chosen to be $\EPR_{RR'} \otimes \ket{\tau}_{EE'}$, where  $R'$ and $E'$ are identical quantum registers as $R$ and $E$, respectively, and $\ket{\tau}_{EE'}$ is a purification of $\tau_E$. So there exists an isometry $U_{D\to R'E'}$ that satisfies
    \begin{equation}
        F( U_{D\to R'E'}\ket{\omega}_{RED}, \EPR_{RR'} \otimes \ket{\tau}_{EE'} ) \geq 1-\varepsilon  .
    \end{equation}
    In fact, $U_{D\to R'E'}$ is a Stinespring dilation of the channel we are looking for. In other words, let
    \begin{equation}
        \mathcal{D}_{D\to R'}(\cdot) = \tr_{E'} \left( U_{D\to R'E'} (\cdot) U_{D\to R'E'}^\dagger \right)    .
    \end{equation}
    Tracing out $EE'$ for both states does not decrease the fidelity (which can be derived from Uhlmann's theorem). Therefore, we get
    \begin{equation}
        \begin{aligned}
            &\phantom{\geq} F(\mathcal{D}_{D\to R'}(\omega_{RD}) ,  \EPR_{RR'}) \\
            &\geq F( U_{D\to R'E'}\ket{\omega}_{RED}, \EPR_{RR'} \otimes \ket{\tau}_{EE'} ) \\
            &\geq 1-\varepsilon   .
        \end{aligned}
    \end{equation}

    ``$\Rightarrow$'': Suppose there exists a decoding channel $\mathcal{D}_{D\to R'}$ such that $F( \mathcal{D}_{D\to R'}(\omega_{RD}), \EPR_{RR'} ) \geq 1-\varepsilon$. Similar to the proof of ``$\Leftarrow$'', we introduce systems $E$ and $E'$ and a Stinespring dilation, such that $U_{D\to R'E'} \ket{\omega}_{RED}$ is a purification of $\mathcal{D}_{D\to R'}(\omega_{RD})$. For $\EPR_{RR'}$, since it is already pure, the ``purification'' of it on $RR'EE'$ must be of the form $\EPR_{RR'} \otimes \ket{\tau}_{EE'}$. Therefore, there exists a $\ket{\tau}_{EE'}$ that satisfies
    \begin{equation}
        \begin{aligned}
            &\phantom{=} F( U_{D\to R'E'} \ket{\omega}_{RED}, \EPR_{RR'} \otimes \ket{\tau}_{EE'} ) \\
            &= F(\mathcal{D}_{D\to R'}(\omega_{RD}), \EPR_{RR'}) .
        \end{aligned}
    \end{equation}
    For these two pure states, we trace out $R'E'$ and the fidelity does not decrease, i.e. (note that by doing partial trace $U_{D\to R'E'}$ is canceled), 
    \begin{equation}
        \begin{aligned}
            &\phantom{=} F( \omega_{RE}, \frac{\id_R}{d_R} \otimes \tau_E ) \\
            &\geq F( U_{D\to R'E'} \ket{\omega}_{RED}, \EPR_{RR'} \otimes \ket{\tau}_{EE'} ) \\
            &\geq 1-\varepsilon ,
        \end{aligned}
    \end{equation}
    where $\tau_E = \tr_{E'} \ket{\tau}\bra{\tau}_{EE'}$. 
\end{proof}

Regarding simulation of the post-selected mixed state, we establish the following theorem. 

\begin{theorem} \label{thm:decoupling_sim}
    Suppose $\ket{\omega}_{RS}$ and $\ket{\omega'}_{RS}$ are pure states. There exists a channel on $S$, denoted as $\mathcal{D}_S$, such that
    \begin{equation}
        F \left( \mathcal{D}_S (\omega_{RS}), \omega'_{RS} \right) \geq 1-\varepsilon   
    \end{equation} 
    if and only if
    \begin{equation}
        F(\omega_R, \omega'_R) \geq 1-\varepsilon   .
    \end{equation}   
\end{theorem}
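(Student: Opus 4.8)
The plan is to derive this statement as a specialization of Uhlmann's theorem (the Lemma above), in close analogy with the proof of Theorem~\ref{thm:decoupling}; no new machinery is needed.

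For the ``$\Leftarrow$'' direction, I would observe that $\ket{\omega}_{RS}$ and $\ket{\omega'}_{RS}$ are purifications of $\omega_R$ and $\omega'_R$, respectively, both living on the \emph{same} pair of systems $RS$. Applying Uhlmann's theorem with $A=R$ and both purifying registers taken to be $S$ gives $F(\omega_R,\omega'_R)=\max_{V}\left|\braket{\omega'|_{RS}(\id_R\otimes V)|\omega}_{RS}\right|^2$, where $V$ ranges over isometries from $S$ to $S$; in finite dimension these are exactly the unitaries on $S$. Choosing a maximizing unitary $V_S$, the hypothesis $F(\omega_R,\omega'_R)\ge 1-\varepsilon$ becomes $\left|\braket{\omega'|_{RS}(\id_R\otimes V_S)|\omega}_{RS}\right|^2\ge 1-\varepsilon$. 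Now define the unitary channel $\mathcal{D}_S(\cdot)=V_S(\cdot)V_S^\dagger$. Then $\mathcal{D}_S(\omega_{RS})$ is the pure state with state vector $(\id_R\otimes V_S)\ket{\omega}_{RS}$, so $F\left(\mathcal{D}_S(\omega_{RS}),\omega'_{RS}\right)=\left|\braket{\omega'|_{RS}(\id_R\otimes V_S)|\omega}_{RS}\right|^2\ge 1-\varepsilon$, which is the desired bound.

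For the ``$\Rightarrow$'' direction, I would use monotonicity of fidelity under quantum channels. Given a channel $\mathcal{D}_S$ on $S$ with $F(\mathcal{D}_S(\omega_{RS}),\omega'_{RS})\ge 1-\varepsilon$, note that since $\mathcal{D}_S$ is trace-preserving and acts only on $S$, the reduced state on $R$ of $\mathcal{D}_S(\omega_{RS})$ is $\omega_R$, while that of $\omega'_{RS}$ is $\omega'_R$. Applying the partial-trace channel $\tr_S$ to both arguments and invoking monotonicity of fidelity, $F(\omega_R,\omega'_R)\ge F(\mathcal{D}_S(\omega_{RS}),\omega'_{RS})\ge 1-\varepsilon$.

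There is no substantive obstacle here; the only points requiring care are (i) that the Uhlmann isometry from $S$ into itself may be taken to be a genuine unitary, so that it defines a legitimate channel on $S$, and (ii) the correct use of fidelity monotonicity under the partial-trace channel in the converse direction. Conceptually, the theorem is just Uhlmann's theorem read in the special case where the two states being compared share the same purifying system and arbitrary operations on that system are permitted; this is also why, unlike in Theorem~\ref{thm:decoupling}, a unitary already suffices and no genuinely mixed decoding channel is needed.
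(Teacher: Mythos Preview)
Your proof is correct and, for the ``$\Leftarrow$'' direction, identical to the paper's: both invoke Uhlmann's theorem to produce a unitary $V_S$ (the paper calls it $U_S$) and observe that a unitary channel already suffices.

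For the ``$\Rightarrow$'' direction your route is slightly more economical than the paper's. The paper first introduces a Stinespring dilation $U_{S\to SE}$ of $\mathcal{D}_S$, invokes Uhlmann to find a purification $\ket{\omega'}_{RS}\otimes\ket{\tau}_E$ of $\omega'_{RS}$ whose overlap with $U_{S\to SE}\ket{\omega}_{RS}$ equals $F(\mathcal{D}_S(\omega_{RS}),\omega'_{RS})$, and only then traces out $SE$ to reach $F(\omega_R,\omega'_R)$. You bypass the purification step entirely by noting that $\tr_S\mathcal{D}_S(\omega_{RS})=\omega_R$ and applying monotonicity of fidelity under the partial-trace channel directly to the (possibly mixed) state $\mathcal{D}_S(\omega_{RS})$. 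Both arguments rest on the same fact---that partial trace does not decrease fidelity---but yours avoids the detour through Stinespring and an auxiliary pure state $\ket{\tau}_E$. The paper's route has the minor virtue of keeping Uhlmann's theorem as the sole external ingredient throughout the appendix, whereas yours treats fidelity monotonicity as a standalone known result.
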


\begin{proof}
    ``$\Leftarrow$'': This is simply by one step of Uhlmann's theorem: there exists a unitary $U_{S}$ satisfying
    \begin{equation}
        F \left( U_S\ket{\omega}_{RS}, \ket{\omega'}_{RS} \right) 
        = F(\omega_R, \omega'_R)    .
    \end{equation}
    In fact, not only have we shown the existence of a channel on $S$, it can even be a unitary operator. 

    ``$\Rightarrow$'': Here, we start from $F \left( \mathcal{D}_S (\omega_{RS}), \omega'_{RS} \right)$. We introduce purifications for the two states involved. Let $U_{S\to SE}$ be a Stinespring dilation of $\mathcal{D}_S$. Then a purification for $\mathcal{D}_S (\omega_{RS})$ is $U_{S\to SE} \ket{\omega}_{RS}$, and for $\omega'_{RS}$ we use $\ket{\omega'}_{RS}\otimes\ket{\tau}_E$. Then there exists a $\ket{\tau}_E$ such that
    \begin{equation}
        F \left( U_{S\to SE} \ket{\omega}_{RS}, \ket{\omega'}_{RS} \otimes \ket{\tau}_E \right) \geq 1-\varepsilon  .
    \end{equation}
    Finally, we trace out $SE$ to get that $F(\omega_R, \omega'_R) \geq 1-\varepsilon$.
\end{proof}

To relate this theorem to our setup in Sec.~\ref{sec:mixed}, we consider $\ket{\omega}_{RS}$ to be the pre-measurement state $\ket{\Psi}_{RS} = \frac{1}{d_R} \sum_{a=1}^{d_R} \ket{a}_R \ket{\psi_a}_S$, and $\ket{\omega'}_{RS}$ to be the post-measurement state $\ket{\Psi_m}_{RS} = \frac{1}{\sqrt{d_R p_m}} \sum_{a=1}^{d_R} \sqrt{p_{am}} \ket{a}_R\ket{\psi_{am}}_S$. Additionally, by construction $\omega_R = \id_R/d_R$. It follows that the condition for the existence of a deterministic mixed state post-selection simulator is $\omega'_R = \id_R/d_R$, which is the same as the decoupling condition. But as we have seen, the underlying information-theoretic task and derivation are different.

\section{Lower bounds on QSVT success probability}
\label{app:pQSVTbound}

In this Appendix, we derive lower bounds for the success probability of the QSVT algorithm in the mixed-state post-selection task and the teleportation decoding task. 

\begin{theorem} \label{thm:tailbound_mix}
    Consider the mixed-state post-selection task in Sec.~\ref{sec:mixed} and assume its notations. Suppose the LAA by QSVT algorithm achieves fidelity $F_{\mathrm{QSVT}} \geq 1-\varepsilon$. Then the success probability of this algorithm is greater than $( 1-\varepsilon ) \alpha$, where $\alpha$ satisfies
    \begin{equation}
        \frac{1}{d_Rp_m} \sum_{p_{am}>p_m/\alpha} p_{am} \leq \varepsilon   .
    \end{equation}
\end{theorem}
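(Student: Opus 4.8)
The plan is to realize the LAA algorithm with the idealized transformation of Eq.~\eqref{eq:LAAfunc} at the specific threshold $p^\ast = p_m/\alpha$, where $\alpha$ denotes the largest value for which the stated tail inequality holds, and then to read the success probability directly off Eq.~\eqref{eq:pQSVT}. Any smaller threshold only increases $p_{\mathrm{QSVT}}$, so it suffices to bound this representative choice, and one checks separately that it is compatible with the fidelity hypothesis.

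For the success-probability bound I would proceed as follows. Substituting the form of $f$ into Eq.~\eqref{eq:pQSVT} gives $p_{\mathrm{QSVT}} = \frac{1}{d_R}\sum_a \min\!\big(p_{am}/p^\ast,\,1\big)$, since $f(\sqrt{p_{am}})^2 = \min(p_{am}/p^\ast,1)$. Every summand is nonnegative, so I discard the indices with $p_{am}>p^\ast$ and keep $\min(p_{am}/p^\ast,1)=p_{am}/p^\ast$ on the rest, obtaining $p_{\mathrm{QSVT}} \ge \frac{1}{d_R p^\ast}\sum_{p_{am}\le p^\ast} p_{am}$. Then I rewrite the hypothesis: using $\sum_a p_{am}=d_R p_m$ (the definition of $p_m$ in Sec.~\ref{sec:mixed}) and $p^\ast=p_m/\alpha$, the inequality $\frac{1}{d_R p_m}\sum_{p_{am}>p_m/\alpha} p_{am}\le\varepsilon$ is equivalent to $\sum_{p_{am}\le p^\ast} p_{am}\ge(1-\varepsilon)d_R p_m$. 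Combining the two displays yields $p_{\mathrm{QSVT}}\ge \frac{(1-\varepsilon)d_R p_m}{d_R p^\ast}=\frac{(1-\varepsilon)p_m}{p^\ast}=(1-\varepsilon)\alpha$, with strict inequality whenever some $p_{am}$ exceeds $p^\ast$ (those terms were dropped) or $\varepsilon>0$, which covers every nontrivial case; the $\delta\to 0$ idealization can only help, so the bound survives for a genuine polynomial approximation of Eq.~\eqref{eq:LAAfunc}. One also checks $\alpha\le 1/(1-\varepsilon)$ (the set $\{p_{am}\le p_m/\alpha\}$ has at most $d_R$ members, each at most $p_m/\alpha$, yet carries weight at least $(1-\varepsilon)d_R p_m$), so $(1-\varepsilon)\alpha\le 1$, consistent with $p_{\mathrm{QSVT}}\le 1$.

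Finally I would close the loop with the fidelity hypothesis. Starting from Eq.~\eqref{eq:FQSVT_mix} and using $F_{\mathrm{QSVT}}\le 1$ (Cauchy--Schwarz), $\sum_a f(\sqrt{p_{am}})^2\le\sum_a p_{am}/p^\ast$, and $\sum_a\sqrt{p_{am}}f(\sqrt{p_{am}})\ge\frac{1}{\sqrt{p^\ast}}\sum_{p_{am}\le p^\ast}p_{am}$, I get $F_{\mathrm{QSVT}}\ge\big(\tfrac{1}{d_R p_m}\sum_{p_{am}\le p^\ast}p_{am}\big)^2\ge(1-\varepsilon)^2$, so the choice $p^\ast=p_m/\alpha$ is indeed (up to the usual reparametrization) a valid instantiation of the hypothesis. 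I expect the main friction to be exactly this $\varepsilon$-bookkeeping — reconciling the $(1-\varepsilon)^2$ fidelity lower bound with the single $(1-\varepsilon)$ appearing both in the hypothesis and in the claimed success probability — which can be handled by absorbing the gap into a redefinition of $\varepsilon$ or by a sharper two-sided estimate of the Cauchy--Schwarz defect in the numerator of Eq.~\eqref{eq:FQSVT_mix}; the success-probability inequality itself is a one-line computation once the threshold $p^\ast=p_m/\alpha$ is fixed.
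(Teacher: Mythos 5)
Your success-probability bound is correct and is essentially the paper's own argument run in the opposite direction: fix the threshold $p^*=p_m/\alpha$, drop the above-threshold terms, and use $\sum_a p_{am}=d_Rp_m$ with the tail hypothesis to get $p_{\mathrm{QSVT}}\ge \frac{1}{d_Rp^*}\sum_{p_{am}\le p^*}p_{am}\ge(1-\varepsilon)p_m/p^*=(1-\varepsilon)\alpha$. The two differences are (i) you work with the capped function of Eq.~\eqref{eq:LAAfunc}, whereas the paper simplifies further to the model that is $x/\sqrt{p^*}$ below threshold and $0$ above (Eq.~\eqref{eq:LAA_func_new}); and (ii) the paper takes the fidelity hypothesis as given and \emph{derives} the tail condition: for the zeroed-out model one finds exactly $F_{\mathrm{QSVT}}=\frac{1}{d_Rp_m}\sum_{p_{am}\le p^*}p_{am}$ and $p_{\mathrm{QSVT}}=\frac{p_m}{p^*}F_{\mathrm{QSVT}}$, so $F_{\mathrm{QSVT}}\ge1-\varepsilon$ is literally equivalent to the tail condition for $\alpha=p_m/p^*$, and the conclusion follows in one line with no $\varepsilon$ lost. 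That direction also means the statement applies to whatever threshold a compliant run of the algorithm uses, rather than to one specially chosen $p^*$.

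The one genuine loose end in your write-up is the $(1-\varepsilon)^2$ fidelity estimate used to ``close the loop,'' which you propose to patch by redefining $\varepsilon$; that redefinition would degrade either the tail level or the constant in the conclusion, so it should be avoided—and it can be. The quadratic loss comes from bounding the denominator of Eq.~\eqref{eq:FQSVT_mix} by $\sum_a p_{am}/p^*$ while discarding the above-threshold terms in the numerator. Instead, compare with the zeroed model: writing $A=\sum_{p_{am}\le p^*}p_{am}$, $B=\sum_{p_{am}>p^*}\sqrt{p_{am}}$, $C=\#\{a: p_{am}>p^*\}$, the capped-function fidelity dominates the zeroed-function fidelity $A/(d_Rp_m)$ if and only if
\begin{equation}
    \frac{2AB}{\sqrt{p^*}}+B^2 \;\ge\; AC ,
\end{equation}
which holds because $B\ge C\sqrt{p^*}$. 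Hence $F_{\mathrm{QSVT}}\ge \frac{A}{d_Rp_m}\ge 1-\varepsilon$ for your choice $p^*=p_m/\alpha$, so your instantiation satisfies the hypothesis at the advertised $\varepsilon$ and the proof closes without any bookkeeping sacrifice. With that repair, your argument and the paper's are the same proof up to the choice of idealized $f$ and the direction of the implication.
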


\begin{proof}
    Here, we model the QSVT function as
    \begin{equation}    \label{eq:LAA_func_new}
        f(x) = \begin{cases}
            x/\sqrt{p^*} & |x| \leq \sqrt{p^*}   \\
            0 & \sqrt{p^*} < |x| \leq 1
        \end{cases}   .
    \end{equation}
    Similar to the discussion in Sec.~\ref{sec:mixed}, we have suppressed the polynomial approximation error to zero since it is computationally cheap. It is easy to see that Eq.~\eqref{eq:LAA_func_new} results in worse fidelity than the modeling $f(x)$ as in Eq.~\eqref{eq:LAAfunc}, since the latter gives larger overlap with the ideal state $\ket{\Psi_m}_{RS}$. But using Eq.~\eqref{eq:LAA_func_new} will greatly simplify the following analysis. 

    Inserting Eq.~\eqref{eq:LAA_func_new} into Eqs.~\eqref{eq:FQSVT_mix} and \eqref{eq:pQSVT}, we have
    \begin{equation}
        F_{\mathrm{QSVT}} 
        = \frac{1}{d_Rp_m} \sum_{p_{am}\leq p^*} p_{am} ,
    \end{equation}
    and 
    \begin{equation}
        p_{\mathrm{QSVT}} = \frac{p_m}{p^*} F_{\mathrm{QSVT}}   .
    \end{equation}

    Starting from $F_{\mathrm{QSVT}} \geq 1-\varepsilon$, we solve for allowed values of $p^*$. Since $\sum_{a} p_{am} = d_R p_m$, $p^*$ can be any value that satisfies
    \begin{equation}
        \frac{1}{d_Rp_m} \sum_{p_{am}> p^*} p_{am} \leq \varepsilon    .
    \end{equation}
    On the other hand, $p_{\mathrm{QSVT}} = \frac{p_m}{p^*} F_{\mathrm{QSVT}} \geq \frac{p_m}{p^*} (1-\varepsilon)$. Finally, we introduce $\alpha = p_m/p^*$ and this inequality becomes the lower bound as claimed in the theorem. 
\end{proof}

\begin{theorem} \label{thm:tailbound_tel}
    Consider the decoding task in Sec.~\ref{sec:teleportation} and assume the notations within. Suppose the pseudoinverse decoder achieves fidelity $F_{\mathrm{QSVT}} \geq 1-\varepsilon$. Then the success probability of this decoder is greater than $( 1-\varepsilon ) \alpha$, where $\alpha$ satisfies
    \begin{equation}    \label{eq:tailbound_tel_condition}
        \Pr\left( p_{am} < \alpha p_m \right) 
        \equiv \frac{1}{d_R} \sum_{p_{am}<\alpha p^*} 1
        \leq \varepsilon   .
    \end{equation}
\end{theorem}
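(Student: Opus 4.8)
The plan is to mirror the proof of Theorem~\ref{thm:tailbound_mix}, with the truncated \emph{linear} function there replaced by a truncated \emph{inverse} function here. Rather than analyzing the exact decoder transformation of Eq.~\eqref{eq:inv_func}, I would model the QSVT function as
\begin{equation}
    f(x) = \begin{cases}
        \sqrt{p^*}/x & \sqrt{p^*} < |x| \leq 1, \\
        0 & |x| \leq \sqrt{p^*},
    \end{cases}
\end{equation}
sending the polynomial approximation error $\delta \to 0$ as before. As in Theorem~\ref{thm:tailbound_mix}, this simplification only \emph{lowers} the decoding fidelity relative to Eq.~\eqref{eq:inv_func}: reinstating the linear ramp $x/\sqrt{p^*}$ on $|x|\leq\sqrt{p^*}$ adds matching terms to the numerator and denominator of $F_{\mathrm{QSVT}}$ in Eq.~\eqref{eq:FQSVT_tel} and, using $p_{am}\leq p^*$ on those terms, a short Cauchy--Schwarz-type estimate shows the ratio can only increase; it likewise only increases $p_{\mathrm{QSVT}}$ in Eq.~\eqref{eq:pQSVT_tel} (one is adding nonnegative terms). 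Hence a lower bound on $p_{\mathrm{QSVT}}$ obtained from the truncated model holds a fortiori for the actual decoder.

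With this model the algebra collapses. Writing $N_{>} = \#\{a : p_{am} > p^*\}$, each index above threshold contributes $\sqrt{p_{am}}\,f(\sqrt{p_{am}}) = \sqrt{p^*}$ and $p_{am}\,f(\sqrt{p_{am}})^2 = p^*$, while the rest vanish. Substituting into Eqs.~\eqref{eq:pQSVT_tel} and \eqref{eq:FQSVT_tel} gives
\begin{equation}
    p_{\mathrm{QSVT}} = \frac{p^* N_{>}}{d_R p_m}, \qquad
    F_{\mathrm{QSVT}} = \frac{p^* N_{>}^2}{d_R^2 p_m\, p_{\mathrm{QSVT}}} = \frac{N_{>}}{d_R},
\end{equation}
so $F_{\mathrm{QSVT}}$ is exactly the fraction of indices $a$ with $p_{am}$ above threshold, and $p_{\mathrm{QSVT}} = (p^*/p_m)\,F_{\mathrm{QSVT}}$.

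To finish, set $\alpha = p^*/p_m$. The hypothesis $F_{\mathrm{QSVT}} \geq 1-\varepsilon$ reads $N_{>}/d_R \geq 1-\varepsilon$, i.e. $\frac{1}{d_R}\#\{a : p_{am} < \alpha p_m\} \leq \varepsilon$, which is precisely the stated condition on $\alpha$ (the threshold $\alpha p_m$ being $p^*$); conversely, choosing $p^*$ to be the smallest value for which this holds certifies that such an $\alpha$ is attainable. Then $p_{\mathrm{QSVT}} = \alpha F_{\mathrm{QSVT}} \geq (1-\varepsilon)\alpha$, as claimed. The one step I expect to need genuine (though still elementary) care is the ``pessimistic truncation'' claim --- verifying that dropping the small-$p_{am}$ ramp cannot help the fidelity for any distribution of the $p_{am}$; the parallel assertion in the proof of Theorem~\ref{thm:tailbound_mix} is stated as obvious but does require the short bookkeeping sketched above. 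Everything else is a one-line substitution.
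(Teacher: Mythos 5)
Your proposal is correct and follows essentially the same route as the paper's proof: replace the ideal inverse function by the truncated model that vanishes below threshold, observe that $F_{\mathrm{QSVT}}$ collapses to the fraction of indices with $p_{am}$ above threshold and $p_{\mathrm{QSVT}} = (p^*/p_m)F_{\mathrm{QSVT}}$, then set $\alpha = p^*/p_m$. Your only addition is the explicit bookkeeping (via $\sum_{\text{below}} p_{am}^2 \leq p^* \sum_{\text{below}} p_{am}$) showing the truncation can only lower the fidelity and the success probability, a step the paper asserts without proof.
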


\begin{proof}
    Let the QSVT function be modeled by 
    \begin{equation}    \label{eq:inv_func_new}
        f(x) = \begin{cases}
            \sqrt{p^*}/x & \sqrt{p^*} \leq |x| \leq 1   \\
            0 & |x| < \sqrt{p^*}
        \end{cases}   
    \end{equation}
    for simplicity. Using this will invariably result in a lower fidelity than modeling $f(x)$ as Eq.~\eqref{eq:inv_func} in the main text. 

    Inserting Eq.~\eqref{eq:inv_func_new} into Eqs.~\eqref{eq:FQSVT_tel} and \eqref{eq:pQSVT_tel}, we get
    \begin{equation}
        F_{\mathrm{QSVT}} 
        = \frac{1}{d_R} \sum_{p_{am}\geq p^*} 1
        = \Pr \left( p_{am} \geq p^* \right)    ,
    \end{equation}
    and
    \begin{equation}
        p_{\mathrm{QSVT}} = \frac{p^*}{p_m} F_{\mathrm{QSVT}}   .
    \end{equation}

    Given $F_{\mathrm{QSVT}} \geq 1-\varepsilon$, we first need to solve for $p^*$. $F_{\mathrm{QSVT}}$ turns out to be exactly the complementary cumulative distribution function. Hence, $p^*$ can be any value that satisfies
    \begin{equation}
        \Pr \left( p_{am} < p^* \right) \leq \varepsilon    .
    \end{equation}
    Furthermore, the success probability is greater than $\frac{p^*}{p_m} (1-\varepsilon)$. Substituting $p^* = \alpha p_m$ proves the claim in the theorem. 
\end{proof}

From the above two theorems, we see that whether the QSVT algorithm can succeed with $O(1)$ probability crucially depends on whether there is an $O(1)$ solution of $\alpha$, which is a model-dependent problem. As a crude picture, for the mixed-state post-selection task, is it desirable that there is no $p_{am}$'s that appear at much larger value than the average $p_m$. For the teleportation decoding task, the desirable property is that only a vanishing portion of $p_{am}$'s concentrate at values much smaller than $p_m$.

The following is a corollary from Theorem~\ref{thm:tailbound_tel}, which shows that $p_{\mathrm{QSVT}}$ can be $O(1)$ when applied to approximate quantum codes. 

\begin{corollary}   \label{cor:AQEC_bound_pQSVT}
    Suppose the pseudoinverse decoder satisfies $F_{\mathrm{QSVT}}\geq 1-\varepsilon$. If the probabilities $p_{am}$ satisfy
    \begin{equation}
        \frac{1}{2d_R} \sum_{a=1}^{d_R} \left| \frac{p_{am}}{p_m} - 1 \right| \leq \varepsilon' ,
    \end{equation}
    then
    \begin{equation}
        p_{\mathrm{QSVT}} \geq (1-\varepsilon'/\varepsilon)(1-\varepsilon)   .
    \end{equation}
\end{corollary}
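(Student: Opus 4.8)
The plan is to reduce the statement directly to Theorem~\ref{thm:tailbound_tel}. That theorem already guarantees $p_{\mathrm{QSVT}} > (1-\varepsilon)\alpha$ for \emph{any} $\alpha$ satisfying the tail condition $\Pr(p_{am} < \alpha p_m) \leq \varepsilon$, so it suffices to show that the approximate-decoupling hypothesis makes $\alpha = 1 - \varepsilon'/\varepsilon$ an admissible choice, i.e.\ that it forces $\Pr(p_{am} < \alpha p_m) \leq \varepsilon$ for this value.

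First I would introduce the mean-zero quantity $X_a \coloneqq p_{am}/p_m - 1$. The normalization $\sum_a p_{am} = d_R p_m$ gives $\sum_a X_a = 0$, so the positive and negative parts of $X_a$ carry equal total mass: $\sum_a (X_a)_+ = \sum_a (-X_a)_+ = \tfrac12 \sum_a |X_a|$. Combining this with the hypothesis $\tfrac{1}{2 d_R}\sum_a |X_a| \leq \varepsilon'$, the negative part obeys $\tfrac{1}{d_R}\sum_a (1 - p_{am}/p_m)_+ \leq \varepsilon'$. This ``one-sided'' bookkeeping is the one place where a careless estimate would introduce a spurious factor of $2$, so it is worth recording explicitly; it is also the only step that is not a triviality.

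Next I would apply Markov's inequality to the nonnegative random variable $a \mapsto (1 - p_{am}/p_m)_+$ under the uniform distribution on $a \in \{1,\dots,d_R\}$: for every $t \in (0,1]$, $\Pr\bigl((1 - p_{am}/p_m)_+ \geq t\bigr) \leq \varepsilon'/t$. Since $\{p_{am} < \alpha p_m\} \subseteq \{(1 - p_{am}/p_m)_+ \geq 1-\alpha\}$, choosing $t = 1-\alpha = \varepsilon'/\varepsilon$ yields $\Pr(p_{am} < \alpha p_m) \leq \varepsilon'/(1-\alpha) = \varepsilon$, which is exactly the tail condition required by Theorem~\ref{thm:tailbound_tel}.

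Finally, feeding $\alpha = 1 - \varepsilon'/\varepsilon$ into the conclusion of that theorem gives $p_{\mathrm{QSVT}} > (1-\varepsilon)\alpha = (1-\varepsilon)(1-\varepsilon'/\varepsilon)$, as claimed. The main obstacle, as noted, is the one-sided refinement of the $\ell^1$-deviation bound; beyond that one only needs the trivial but necessary sanity check that $\varepsilon' < \varepsilon$, since otherwise $\alpha \leq 0$ and the corollary asserts nothing.
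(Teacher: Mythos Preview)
Your proposal is correct and follows essentially the same route as the paper: both reduce the corollary to Theorem~\ref{thm:tailbound_tel} by showing via Markov's inequality that $\Pr(p_{am}<\alpha p_m)\leq \varepsilon'/(1-\alpha)$, and then choosing $\alpha=1-\varepsilon'/\varepsilon$. The paper's proof compresses the Markov step into a single line, whereas you spell out the one-sided refinement (using $\sum_a X_a=0$ to pass from $\tfrac{1}{2d_R}\sum_a|X_a|\leq\varepsilon'$ to $\tfrac{1}{d_R}\sum_a(-X_a)_+\leq\varepsilon'$) that is needed to avoid a spurious factor of~$2$; this makes your version clearer but not different in substance.
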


\begin{proof}
    Using Markov's inequality, we can derive
    \begin{equation}
        \Pr(p_{am}<\alpha p^*) \leq \frac{\varepsilon'}{1-\alpha}    .
    \end{equation}
    We now use this to bound $\alpha$ in the statement of Theorem~\ref{thm:tailbound_tel}. $\alpha$ satisfying the following equality also satisfies Eq.~\eqref{eq:tailbound_tel_condition}:
    \begin{equation}
        \frac{\varepsilon'}{1-\alpha} = \varepsilon.
    \end{equation}
    It follows that $\alpha$ can be at least $1-\varepsilon'/\varepsilon$.
\end{proof}



\section{Comparison with other decoders}
\label{app:decoder}

This Appendix expands the discussion in Sec.~\ref{sec:teleportation} on the quantum teleportation and decoding protocols. Here, we provide a self-contained introduction to the generalized Yoshida--Kitaev (YK) decoder and the Petz-like decoder proposed in Ref.~\cite{utsumi2024explicit}. We then adapt these two types of decoders to the teleportation decoding task. This facilitates the comparison between the FPAA (Petz-like) decoder and the pseudoinverse decoder.

\subsection{Decoding after decoherence}

Ref.~\cite{utsumi2024explicit} focuses on decoding after a noise channel, which differs from the focus of this work: decoding after a forced measurement.
A slightly simplified setup of the decoding problem considered in Ref.~\cite{utsumi2024explicit} is as follows. Alice holds a secret state in $A$, and she also holds a reference system $R$ that is maximally entangled with $A$. Alice's state is isometrically encoded into a larger Hilbert space by initializing another system $B$ in the state $\ket{0}_B$ and applying a unitary $U$ to $AB$. Up to this point, the state of the system is denoted as
\begin{equation}    \label{eq:omegaRED}
    \omega_{RED} = 
    \vcenter{\hbox{\includegraphics{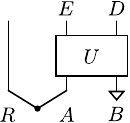}}},
\end{equation}
where $ED$ is a different bipartition of $AB$. After encoding, $E$ is erased (cf. Stinespring dilation of quantum channels). The decoder Bob attempts to recover Alice's initial state by applying quantum operations on $D$. Bob is assumed to know both $U$ and the initial state of $B$.

Although several types of decoders exist and may appear different, they can all be understood through the following common strategy. Bob prepares the same state $\omega_{R'E'D'}$ on his own quantum registers. Upon receiving $\omega_D$, he compares it with his own $\omega_{D'}$. If $\omega_{D'} \approx \omega_{D}$, then ideally $\omega_{R'}$ should be close to $\omega_R$. While this is not always guaranteed, the strategy works when the decoupling condition approximately holds, as we derive below.

A verbatim implementation of the above idea yields the generalized YK decoder. It is illustrated in the following circuit diagram:
\begin{equation}    \label{eq:YKdecoder}
    \vcenter{\hbox{\includegraphics{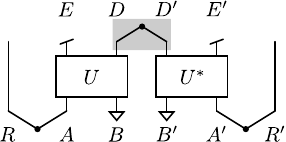}}}, 
\end{equation}
where the gray box denotes a measurement and post-selection on $DD'$ being in the EPR state. When the decoding condition is satisfied, the resulting state on $RR'$ is approximately the EPR state. The quality of this decoding protocol is characterized by the fidelity between the state in Eq.~\eqref{eq:YKdecoder} and the EPR state between $R$ and $R'$. This state must also be normalized by the post-selection success probability. Therefore, the decoding fidelity (i.e., the fidelity with $\EPR_{RR'}$) is given by
\begin{equation}
    F = \tilde{F}/p_{\text{succ}}
    = \frac{1}{d_R} e^{S^{(2)}\left( \omega_{D} \right) - S^{(2)}\left( \omega_{RD} \right)}   ,
\end{equation}
where
\begin{equation}
    \begin{aligned}
        \tilde{F} &= 
    \vcenter{\hbox{\includegraphics{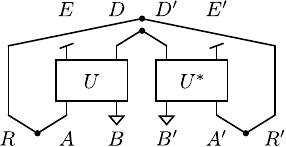}}}  \\
        &= \frac{1}{d_R d_D} P^{(2)} \left( \omega_{RD} \right)   ,
    \end{aligned}
\end{equation}
and 
\begin{equation}
    \begin{aligned}
        p_{\text{succ}} &= 
    \vcenter{\hbox{\includegraphics{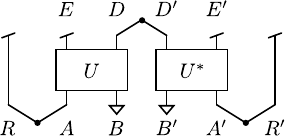}}}  \\
        &= \frac{1}{d_D} P^{(2)} \left( \omega_{D} \right)    .
    \end{aligned}
\end{equation}
Since $\omega_{RED}$ is a pure state, $F$ can alternatively be expressed as $F = \frac{1}{d_R} e^{S^{(2)}(\omega_{RE}) - S^{(2)}(\omega_{E})}$. If the decoupling condition holds, i.e., $\omega_{RE} \approx \frac{\id_R}{d_R}\otimes \omega_E$, then $S^{(2)}(\omega_{RE}) \approx \log d_R + S^{(2)}(\omega_{E})$, so $F \approx 1$.

The decoder in Eq.~\eqref{eq:YKdecoder} succeeds with probability $p_{\text{succ}}$, which is generally low. One can instead replace the post-selection on $\EPR_{DD'}$ with an FPAA algorithm. The complexity of FPAA is related to the minimum eigenvalue of $\omega_{RE}$ and the desired accuracy of the polynomial approximation, as discussed in detail in Ref.~\cite{utsumi2024explicit}.

The Petz-like decoder also enforces $\omega_D = \omega_{D'}$, subsequently anticipating that $\omega_R$ and $\omega_{R'}$ are identical; however, it does so by back-evolving $\omega_D$ and requiring that $B$ returns to its initial state $\ket{0}_B$. The Petz-like decoder is shown in the following circuit diagram:
\begin{equation}    \label{eq:Petzdecoder}
    \vcenter{\hbox{\includegraphics{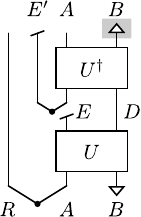}}}
\end{equation}
Conditioned on $B$ being measured in the $\ket{0}_B$ state, $A$ and $R$ are approximately in the EPR state when the decoupling condition holds. The decoding fidelity and success probability can be calculated similarly to the YK case:
\begin{equation}
    F = \tilde{F}/p_{\text{succ}}
    = \frac{1}{d_R} e^{S^{(2)}\left( \omega_{D} \right) - S^{(2)}\left( \omega_{RD} \right)}   ,
\end{equation}
where 
\begin{equation}
    \begin{aligned}
        \tilde{F} &= 
        \vcenter{\hbox{\includegraphics{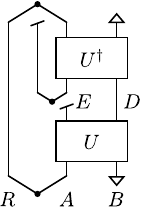}}}    \\
        &= \frac{1}{d_E} P^{(2)} \left( \omega_{RD} \right)   ,
    \end{aligned}
\end{equation}
and 
\begin{equation}
    \begin{aligned}
        p_{\text{succ}} &= 
        \vcenter{\hbox{\includegraphics{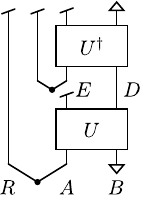}}}    \\
        &= \frac{d_R}{d_E} P^{(2)} \left( \omega_{D} \right)    .
    \end{aligned}
\end{equation}
The post-selection on $\ket{0}_B$ in the Petz-like decoder can also be replaced by FPAA. Comparing the YK and Petz decoders, we find that they yield the same fidelity. When $d_E > d_R d_D$, the YK decoder achieves a higher success probability and requires less quantum memory to implement; the opposite is true when $d_E < d_R d_D$.

\subsection{Decoding for teleportation}

The setup we study is the same as described at the beginning of Sec.~\ref{sec:teleportation}. After the encoding stage in Eq.~\eqref{eq:omegaRED}, we now consider that $E$ is projected onto a pure state $\ket{m}_E$ due to a projective measurement. We denote the post-measurement state as 
\begin{equation}    \label{eq:omegaRDm}
    \omega_{RD|m} = \frac{1}{\sqrt{p_m}}
    \vcenter{\hbox{\includegraphics{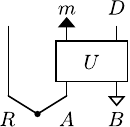}}}    ,
\end{equation}
where the black triangle represents $\bra{m}_E$, and $p_m$ is the probability of obtaining outcome $m$, introduced to normalize the state. Bob (the decoder) attempts to recover Alice's initial state by applying $m$-dependent quantum operations on $D$. This setup is structurally similar to the decoherence-decoding protocol studied in the previous subsection, the only difference being that the partial trace over $E$ is now replaced by a projective measurement. For teleportation decoding, we propose two types of decoders that are analogous to the generalized YK decoder and the Petz-like decoder in Ref.~\cite{utsumi2024explicit}. 


In the spirit of the YK decoder, we construct the following decoder:
\begin{equation}
    \vcenter{\hbox{\includegraphics{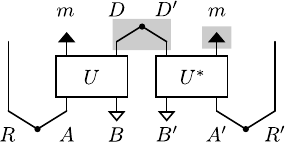}}}
\end{equation}
In words, Bob's decoding proceeds as follows. He prepares the same state $\omega_{R'D'|m}^*$ on his own quantum registers by post-selecting on $\ket{m}_{E'}$. Upon receiving $\omega_{D|m}$, he measures and post-selects $DD'$ onto the EPR state. Conditioned on the success of these post-selections (the gray boxes), $RR'$ is expected to be in the EPR state, an we will derive the condition for this to hold. The decoding fidelity and success probability are given by
\begin{equation}
    \begin{aligned}
        F &= \vcenter{\hbox{\includegraphics{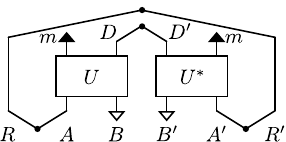}}}   \\
        &\phantom{=} \Big{/} \vcenter{\hbox{\includegraphics{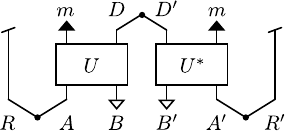}}} \\
        &= \frac{1}{d_R} e^{S^{(2)}\left( \omega_{D|m} \right) - S^{(2)}\left( \omega_{RD|m} \right)}   .
    \end{aligned}
\end{equation}
This can be further simplified because $\omega_{RD|m}$ is a pure state:
\begin{equation}    \label{eq:F_YK}
    F = \frac{1}{d_R} e^{S^{(2)}\left( \omega_{D|m} \right)} 
    = \frac{1}{d_R} e^{S^{(2)}\left( \omega_{R|m} \right)}   .
\end{equation}
It follows that $F = 1$ only when the entropy of $\omega_{R|m}$ is maximal, i.e., when $\omega_{R|m} = \frac{\id_R}{d_R}$. 
The success probability of the decoder is the probability of measuring $\ket{m}_E \EPR_{DD'} \ket{m}_{E'}$ \emph{conditioned on} $E$ being in the state $\ket{m}_E$. This is given by
\begin{equation}
    \begin{aligned}
        p_{\text{succ}} &= \frac{1}{p_m} \vcenter{\hbox{\includegraphics{YK_post_denominator.pdf}}}   \\
        &= \frac{p_m}{d_D} e^{-S^{(2)}\left( \omega_{D|m} \right)}    .
    \end{aligned}
\end{equation}

On the other hand, the teleportation Petz-like decoder is given by
\begin{equation}
    \vcenter{\hbox{\includegraphics{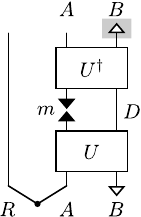}}}
\end{equation}
Again, the post-selection on $\ket{0}_B$ can be replaced by FPAA, resulting in a deterministic decoder. For this Petz decoder, the decoding fidelity shares the same expression as in Eq.~\eqref{eq:F_YK}, while the success probability is $d_R p_m e^{-S^{(2)}\left( \omega_{R|m} \right)}$.

As teleportation decoders, the Petz decoder invariably uses fewer ancilla qubits than the YK decoder. At the end of Sec.~\ref{sec:teleportation}, we compared our pseudoinverse decoder with an ``FPAA decoder''. One can concretely interpret the FPAA decoder as the Petz-like decoder constructed in this section.

\subsection{Decoders as QSVT}

In the language of QSVT, both the YK and Petz decoders implement the same functional transformation---an approximation of the sign function. Their only difference lies in the fact that they correspond to different block encodings of the same matrix, namely,
\begin{equation}
    M^\dagger = \sum_{a=1}^{d_R} \ket{a}_R \bra{\psi_{am}}_D    ,
\end{equation}
where $M$ is defined in Eq.~\eqref{eq:mix_block_enc}. Concretely, the YK decoder uses the following block encoding of $M^\dagger$, rescaled by a factor of $1/\sqrt{d_D d_R}$:
\begin{equation}
    \vcenter{\hbox{\includegraphics{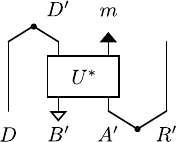}}}
\end{equation}
whereas in the Petz decoder, the block encoding of $M^\dagger$ is given by (noting that $A$ is isometric to $R$):
\begin{equation}
    \vcenter{\hbox{\includegraphics{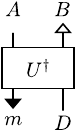}}}
\end{equation}
As a final remark, our pseudoinverse decoder in Sec.~\ref{sec:teleportation} uses the same block encoding as the Petz decoder.

\bibliography{references}

\begin{thebibliography}{69}%
\makeatletter
\providecommand \@ifxundefined [1]{%
 \@ifx{#1\undefined}
}%
\providecommand \@ifnum [1]{%
 \ifnum #1\expandafter \@firstoftwo
 \else \expandafter \@secondoftwo
 \fi
}%
\providecommand \@ifx [1]{%
 \ifx #1\expandafter \@firstoftwo
 \else \expandafter \@secondoftwo
 \fi
}%
\providecommand \natexlab [1]{#1}%
\providecommand \enquote  [1]{``#1''}%
\providecommand \bibnamefont  [1]{#1}%
\providecommand \bibfnamefont [1]{#1}%
\providecommand \citenamefont [1]{#1}%
\providecommand \href@noop [0]{\@secondoftwo}%
\providecommand \href [0]{\begingroup \@sanitize@url \@href}%
\providecommand \@href[1]{\@@startlink{#1}\@@href}%
\providecommand \@@href[1]{\endgroup#1\@@endlink}%
\providecommand \@sanitize@url [0]{\catcode `\\12\catcode `\$12\catcode `\&12\catcode `\#12\catcode `\^12\catcode `\_12\catcode `\%12\relax}%
\providecommand \@@startlink[1]{}%
\providecommand \@@endlink[0]{}%
\providecommand \url  [0]{\begingroup\@sanitize@url \@url }%
\providecommand \@url [1]{\endgroup\@href {#1}{\urlprefix }}%
\providecommand \urlprefix  [0]{URL }%
\providecommand \Eprint [0]{\href }%
\providecommand \doibase [0]{https://doi.org/}%
\providecommand \selectlanguage [0]{\@gobble}%
\providecommand \bibinfo  [0]{\@secondoftwo}%
\providecommand \bibfield  [0]{\@secondoftwo}%
\providecommand \translation [1]{[#1]}%
\providecommand \BibitemOpen [0]{}%
\providecommand \bibitemStop [0]{}%
\providecommand \bibitemNoStop [0]{.\EOS\space}%
\providecommand \EOS [0]{\spacefactor3000\relax}%
\providecommand \BibitemShut  [1]{\csname bibitem#1\endcsname}%
\let\auto@bib@innerbib\@empty
\bibitem [{\citenamefont {Brydges}\ \emph {et~al.}(2019)\citenamefont {Brydges}, \citenamefont {Elben}, \citenamefont {Jurcevic}, \citenamefont {Vermersch}, \citenamefont {Maier}, \citenamefont {Lanyon}, \citenamefont {Zoller}, \citenamefont {Blatt},\ and\ \citenamefont {Roos}}]{Brydges2019probing}%
  \BibitemOpen
  \bibfield  {author} {\bibinfo {author} {\bibfnamefont {T.}~\bibnamefont {Brydges}}, \bibinfo {author} {\bibfnamefont {A.}~\bibnamefont {Elben}}, \bibinfo {author} {\bibfnamefont {P.}~\bibnamefont {Jurcevic}}, \bibinfo {author} {\bibfnamefont {B.}~\bibnamefont {Vermersch}}, \bibinfo {author} {\bibfnamefont {C.}~\bibnamefont {Maier}}, \bibinfo {author} {\bibfnamefont {B.~P.}\ \bibnamefont {Lanyon}}, \bibinfo {author} {\bibfnamefont {P.}~\bibnamefont {Zoller}}, \bibinfo {author} {\bibfnamefont {R.}~\bibnamefont {Blatt}},\ and\ \bibinfo {author} {\bibfnamefont {C.~F.}\ \bibnamefont {Roos}},\ }\bibfield  {title} {\bibinfo {title} {Probing rényi entanglement entropy via randomized measurements},\ }\href {https://doi.org/10.1126/science.aau4963} {\bibfield  {journal} {\bibinfo  {journal} {Science}\ }\textbf {\bibinfo {volume} {364}},\ \bibinfo {pages} {260} (\bibinfo {year} {2019})},\ \Eprint {https://arxiv.org/abs/https://www.science.org/doi/pdf/10.1126/science.aau4963} {https://www.science.org/doi/pdf/10.1126/science.aau4963} \BibitemShut {NoStop}%
\bibitem [{\citenamefont {Elben}\ \emph {et~al.}(2020)\citenamefont {Elben}, \citenamefont {Vermersch}, \citenamefont {van Bijnen}, \citenamefont {Kokail}, \citenamefont {Brydges}, \citenamefont {Maier}, \citenamefont {Joshi}, \citenamefont {Blatt}, \citenamefont {Roos},\ and\ \citenamefont {Zoller}}]{PhysRevLett.124.010504}%
  \BibitemOpen
  \bibfield  {author} {\bibinfo {author} {\bibfnamefont {A.}~\bibnamefont {Elben}}, \bibinfo {author} {\bibfnamefont {B.}~\bibnamefont {Vermersch}}, \bibinfo {author} {\bibfnamefont {R.}~\bibnamefont {van Bijnen}}, \bibinfo {author} {\bibfnamefont {C.}~\bibnamefont {Kokail}}, \bibinfo {author} {\bibfnamefont {T.}~\bibnamefont {Brydges}}, \bibinfo {author} {\bibfnamefont {C.}~\bibnamefont {Maier}}, \bibinfo {author} {\bibfnamefont {M.~K.}\ \bibnamefont {Joshi}}, \bibinfo {author} {\bibfnamefont {R.}~\bibnamefont {Blatt}}, \bibinfo {author} {\bibfnamefont {C.~F.}\ \bibnamefont {Roos}},\ and\ \bibinfo {author} {\bibfnamefont {P.}~\bibnamefont {Zoller}},\ }\bibfield  {title} {\bibinfo {title} {Cross-platform verification of intermediate scale quantum devices},\ }\href {https://doi.org/10.1103/PhysRevLett.124.010504} {\bibfield  {journal} {\bibinfo  {journal} {Phys. Rev. Lett.}\ }\textbf {\bibinfo {volume} {124}},\ \bibinfo {pages} {010504} (\bibinfo {year} {2020})}\BibitemShut {NoStop}%
\bibitem [{\citenamefont {Joshi}\ \emph {et~al.}(2022)\citenamefont {Joshi}, \citenamefont {Elben}, \citenamefont {Vikram}, \citenamefont {Vermersch}, \citenamefont {Galitski},\ and\ \citenamefont {Zoller}}]{PhysRevX.12.011018}%
  \BibitemOpen
  \bibfield  {author} {\bibinfo {author} {\bibfnamefont {L.~K.}\ \bibnamefont {Joshi}}, \bibinfo {author} {\bibfnamefont {A.}~\bibnamefont {Elben}}, \bibinfo {author} {\bibfnamefont {A.}~\bibnamefont {Vikram}}, \bibinfo {author} {\bibfnamefont {B.}~\bibnamefont {Vermersch}}, \bibinfo {author} {\bibfnamefont {V.}~\bibnamefont {Galitski}},\ and\ \bibinfo {author} {\bibfnamefont {P.}~\bibnamefont {Zoller}},\ }\bibfield  {title} {\bibinfo {title} {Probing many-body quantum chaos with quantum simulators},\ }\href {https://doi.org/10.1103/PhysRevX.12.011018} {\bibfield  {journal} {\bibinfo  {journal} {Phys. Rev. X}\ }\textbf {\bibinfo {volume} {12}},\ \bibinfo {pages} {011018} (\bibinfo {year} {2022})}\BibitemShut {NoStop}%
\bibitem [{\citenamefont {Bluvstein}\ \emph {et~al.}(2024)\citenamefont {Bluvstein}, \citenamefont {Evered}, \citenamefont {Geim}, \citenamefont {Li}, \citenamefont {Zhou}, \citenamefont {Manovitz}, \citenamefont {Ebadi}, \citenamefont {Cain}, \citenamefont {Kalinowski}, \citenamefont {Hangleiter}, \citenamefont {Bonilla~Ataides}, \citenamefont {Maskara}, \citenamefont {Cong}, \citenamefont {Gao}, \citenamefont {Sales~Rodriguez}, \citenamefont {Karolyshyn}, \citenamefont {Semeghini}, \citenamefont {Gullans}, \citenamefont {Greiner}, \citenamefont {Vuleti{\'{c}}},\ and\ \citenamefont {Lukin}}]{Bluvstein2024}%
  \BibitemOpen
  \bibfield  {author} {\bibinfo {author} {\bibfnamefont {D.}~\bibnamefont {Bluvstein}}, \bibinfo {author} {\bibfnamefont {S.~J.}\ \bibnamefont {Evered}}, \bibinfo {author} {\bibfnamefont {A.~A.}\ \bibnamefont {Geim}}, \bibinfo {author} {\bibfnamefont {S.~H.}\ \bibnamefont {Li}}, \bibinfo {author} {\bibfnamefont {H.}~\bibnamefont {Zhou}}, \bibinfo {author} {\bibfnamefont {T.}~\bibnamefont {Manovitz}}, \bibinfo {author} {\bibfnamefont {S.}~\bibnamefont {Ebadi}}, \bibinfo {author} {\bibfnamefont {M.}~\bibnamefont {Cain}}, \bibinfo {author} {\bibfnamefont {M.}~\bibnamefont {Kalinowski}}, \bibinfo {author} {\bibfnamefont {D.}~\bibnamefont {Hangleiter}}, \bibinfo {author} {\bibfnamefont {J.~P.}\ \bibnamefont {Bonilla~Ataides}}, \bibinfo {author} {\bibfnamefont {N.}~\bibnamefont {Maskara}}, \bibinfo {author} {\bibfnamefont {I.}~\bibnamefont {Cong}}, \bibinfo {author} {\bibfnamefont {X.}~\bibnamefont {Gao}}, \bibinfo {author} {\bibfnamefont {P.}~\bibnamefont {Sales~Rodriguez}}, \bibinfo {author} {\bibfnamefont {T.}~\bibnamefont {Karolyshyn}}, \bibinfo {author} {\bibfnamefont {G.}~\bibnamefont {Semeghini}}, \bibinfo {author} {\bibfnamefont {M.~J.}\ \bibnamefont {Gullans}}, \bibinfo {author} {\bibfnamefont {M.}~\bibnamefont {Greiner}}, \bibinfo {author} {\bibfnamefont {V.}~\bibnamefont {Vuleti{\'{c}}}},\ and\ \bibinfo {author} {\bibfnamefont {M.~D.}\ \bibnamefont {Lukin}},\ }\bibfield  {title} {\bibinfo {title} {Logical quantum processor based on reconfigurable atom arrays},\ }\href {https://doi.org/10.1038/s41586-023-06927-3} {\bibfield  {journal} {\bibinfo  {journal} {Nature}\ }\textbf {\bibinfo {volume} {626}},\ \bibinfo {pages} {58} (\bibinfo {year} {2024})}\BibitemShut {NoStop}%
\bibitem [{\citenamefont {Brand\~ao}\ \emph {et~al.}(2021)\citenamefont {Brand\~ao}, \citenamefont {Chemissany}, \citenamefont {Hunter-Jones}, \citenamefont {Kueng},\ and\ \citenamefont {Preskill}}]{PRXQuantum.2.030316}%
  \BibitemOpen
  \bibfield  {author} {\bibinfo {author} {\bibfnamefont {F.~G.}\ \bibnamefont {Brand\~ao}}, \bibinfo {author} {\bibfnamefont {W.}~\bibnamefont {Chemissany}}, \bibinfo {author} {\bibfnamefont {N.}~\bibnamefont {Hunter-Jones}}, \bibinfo {author} {\bibfnamefont {R.}~\bibnamefont {Kueng}},\ and\ \bibinfo {author} {\bibfnamefont {J.}~\bibnamefont {Preskill}},\ }\bibfield  {title} {\bibinfo {title} {Models of quantum complexity growth},\ }\href {https://doi.org/10.1103/PRXQuantum.2.030316} {\bibfield  {journal} {\bibinfo  {journal} {PRX Quantum}\ }\textbf {\bibinfo {volume} {2}},\ \bibinfo {pages} {030316} (\bibinfo {year} {2021})}\BibitemShut {NoStop}%
\bibitem [{\citenamefont {Hayden}\ and\ \citenamefont {Preskill}(2007)}]{HaydenPreskill2007}%
  \BibitemOpen
  \bibfield  {author} {\bibinfo {author} {\bibfnamefont {P.}~\bibnamefont {Hayden}}\ and\ \bibinfo {author} {\bibfnamefont {J.}~\bibnamefont {Preskill}},\ }\bibfield  {title} {\bibinfo {title} {Black holes as mirrors: quantum information in random subsystems},\ }\href {https://doi.org/10.1088/1126-6708/2007/09/120} {\bibfield  {journal} {\bibinfo  {journal} {Journal of High Energy Physics}\ }\textbf {\bibinfo {volume} {2007}},\ \bibinfo {pages} {120} (\bibinfo {year} {2007})}\BibitemShut {NoStop}%
\bibitem [{\citenamefont {Stanford}\ and\ \citenamefont {Susskind}(2014)}]{PhysRevD.90.126007}%
  \BibitemOpen
  \bibfield  {author} {\bibinfo {author} {\bibfnamefont {D.}~\bibnamefont {Stanford}}\ and\ \bibinfo {author} {\bibfnamefont {L.}~\bibnamefont {Susskind}},\ }\bibfield  {title} {\bibinfo {title} {Complexity and shock wave geometries},\ }\href {https://doi.org/10.1103/PhysRevD.90.126007} {\bibfield  {journal} {\bibinfo  {journal} {Phys. Rev. D}\ }\textbf {\bibinfo {volume} {90}},\ \bibinfo {pages} {126007} (\bibinfo {year} {2014})}\BibitemShut {NoStop}%
\bibitem [{\citenamefont {Arute}\ \emph {et~al.}(2019)\citenamefont {Arute}, \citenamefont {Arya}, \citenamefont {Babbush}, \citenamefont {Bacon}, \citenamefont {Bardin}, \citenamefont {Barends}, \citenamefont {Biswas}, \citenamefont {Boixo}, \citenamefont {Brandao}, \citenamefont {Buell}, \citenamefont {Burkett}, \citenamefont {Chen}, \citenamefont {Chen}, \citenamefont {Chiaro}, \citenamefont {Collins}, \citenamefont {Courtney}, \citenamefont {Dunsworth}, \citenamefont {Farhi}, \citenamefont {Foxen}, \citenamefont {Fowler}, \citenamefont {Gidney}, \citenamefont {Giustina}, \citenamefont {Graff}, \citenamefont {Guerin}, \citenamefont {Habegger}, \citenamefont {Harrigan}, \citenamefont {Hartmann}, \citenamefont {Ho}, \citenamefont {Hoffmann}, \citenamefont {Huang}, \citenamefont {Humble}, \citenamefont {Isakov}, \citenamefont {Jeffrey}, \citenamefont {Jiang}, \citenamefont {Kafri}, \citenamefont {Kechedzhi}, \citenamefont {Kelly}, \citenamefont {Klimov}, \citenamefont {Knysh}, \citenamefont {Korotkov}, \citenamefont {Kostritsa}, \citenamefont {Landhuis}, \citenamefont {Lindmark}, \citenamefont {Lucero}, \citenamefont {Lyakh}, \citenamefont {Mandr{\`a}}, \citenamefont {McClean}, \citenamefont {McEwen}, \citenamefont {Megrant}, \citenamefont {Mi}, \citenamefont {Michielsen}, \citenamefont {Mohseni}, \citenamefont {Mutus}, \citenamefont {Naaman}, \citenamefont {Neeley}, \citenamefont {Neill}, \citenamefont {Niu}, \citenamefont {Ostby}, \citenamefont {Petukhov}, \citenamefont {Platt}, \citenamefont {Quintana}, \citenamefont {Rieffel}, \citenamefont {Roushan}, \citenamefont {Rubin}, \citenamefont {Sank}, \citenamefont {Satzinger}, \citenamefont {Smelyanskiy}, \citenamefont {Sung}, \citenamefont {Trevithick}, \citenamefont {Vainsencher}, \citenamefont {Villalonga}, \citenamefont {White}, \citenamefont {Yao}, \citenamefont {Yeh}, \citenamefont {Zalcman}, \citenamefont {Neven},\ and\ \citenamefont {Martinis}}]{Arute2019}%
  \BibitemOpen
  \bibfield  {author} {\bibinfo {author} {\bibfnamefont {F.}~\bibnamefont {Arute}}, \bibinfo {author} {\bibfnamefont {K.}~\bibnamefont {Arya}}, \bibinfo {author} {\bibfnamefont {R.}~\bibnamefont {Babbush}}, \bibinfo {author} {\bibfnamefont {D.}~\bibnamefont {Bacon}}, \bibinfo {author} {\bibfnamefont {J.~C.}\ \bibnamefont {Bardin}}, \bibinfo {author} {\bibfnamefont {R.}~\bibnamefont {Barends}}, \bibinfo {author} {\bibfnamefont {R.}~\bibnamefont {Biswas}}, \bibinfo {author} {\bibfnamefont {S.}~\bibnamefont {Boixo}}, \bibinfo {author} {\bibfnamefont {F.~G. S.~L.}\ \bibnamefont {Brandao}}, \bibinfo {author} {\bibfnamefont {D.~A.}\ \bibnamefont {Buell}}, \bibinfo {author} {\bibfnamefont {B.}~\bibnamefont {Burkett}}, \bibinfo {author} {\bibfnamefont {Y.}~\bibnamefont {Chen}}, \bibinfo {author} {\bibfnamefont {Z.}~\bibnamefont {Chen}}, \bibinfo {author} {\bibfnamefont {B.}~\bibnamefont {Chiaro}}, \bibinfo {author} {\bibfnamefont {R.}~\bibnamefont {Collins}}, \bibinfo {author} {\bibfnamefont {W.}~\bibnamefont {Courtney}}, \bibinfo {author} {\bibfnamefont {A.}~\bibnamefont {Dunsworth}}, \bibinfo {author} {\bibfnamefont {E.}~\bibnamefont {Farhi}}, \bibinfo {author} {\bibfnamefont {B.}~\bibnamefont {Foxen}}, \bibinfo {author} {\bibfnamefont {A.}~\bibnamefont {Fowler}}, \bibinfo {author} {\bibfnamefont {C.}~\bibnamefont {Gidney}}, \bibinfo {author} {\bibfnamefont {M.}~\bibnamefont {Giustina}}, \bibinfo {author} {\bibfnamefont {R.}~\bibnamefont {Graff}}, \bibinfo {author} {\bibfnamefont {K.}~\bibnamefont {Guerin}}, \bibinfo {author} {\bibfnamefont {S.}~\bibnamefont {Habegger}}, \bibinfo {author} {\bibfnamefont {M.~P.}\ \bibnamefont {Harrigan}}, \bibinfo {author} {\bibfnamefont {M.~J.}\ \bibnamefont {Hartmann}}, \bibinfo {author} {\bibfnamefont {A.}~\bibnamefont {Ho}}, \bibinfo {author} {\bibfnamefont {M.}~\bibnamefont {Hoffmann}}, \bibinfo {author} {\bibfnamefont {T.}~\bibnamefont {Huang}}, \bibinfo {author} {\bibfnamefont {T.~S.}\ \bibnamefont {Humble}}, \bibinfo {author} {\bibfnamefont {S.~V.}\ \bibnamefont {Isakov}}, \bibinfo {author} {\bibfnamefont {E.}~\bibnamefont {Jeffrey}}, \bibinfo {author} {\bibfnamefont {Z.}~\bibnamefont {Jiang}}, \bibinfo {author} {\bibfnamefont {D.}~\bibnamefont {Kafri}}, \bibinfo {author} {\bibfnamefont {K.}~\bibnamefont {Kechedzhi}}, \bibinfo {author} {\bibfnamefont {J.}~\bibnamefont {Kelly}}, \bibinfo {author} {\bibfnamefont {P.~V.}\ \bibnamefont {Klimov}}, \bibinfo {author} {\bibfnamefont {S.}~\bibnamefont {Knysh}}, \bibinfo {author} {\bibfnamefont {A.}~\bibnamefont {Korotkov}}, \bibinfo {author} {\bibfnamefont {F.}~\bibnamefont {Kostritsa}}, \bibinfo {author} {\bibfnamefont {D.}~\bibnamefont {Landhuis}}, \bibinfo {author} {\bibfnamefont {M.}~\bibnamefont {Lindmark}}, \bibinfo {author} {\bibfnamefont {E.}~\bibnamefont {Lucero}}, \bibinfo {author} {\bibfnamefont {D.}~\bibnamefont {Lyakh}}, \bibinfo {author} {\bibfnamefont {S.}~\bibnamefont {Mandr{\`a}}}, \bibinfo {author} {\bibfnamefont {J.~R.}\ \bibnamefont {McClean}}, \bibinfo {author} {\bibfnamefont {M.}~\bibnamefont {McEwen}}, \bibinfo {author} {\bibfnamefont {A.}~\bibnamefont {Megrant}}, \bibinfo {author} {\bibfnamefont {X.}~\bibnamefont {Mi}}, \bibinfo {author} {\bibfnamefont {K.}~\bibnamefont {Michielsen}}, \bibinfo {author} {\bibfnamefont {M.}~\bibnamefont {Mohseni}}, \bibinfo {author} {\bibfnamefont {J.}~\bibnamefont {Mutus}}, \bibinfo {author} {\bibfnamefont {O.}~\bibnamefont {Naaman}}, \bibinfo {author} {\bibfnamefont {M.}~\bibnamefont {Neeley}}, \bibinfo {author} {\bibfnamefont {C.}~\bibnamefont {Neill}}, \bibinfo {author} {\bibfnamefont {M.~Y.}\ \bibnamefont {Niu}}, \bibinfo {author} {\bibfnamefont {E.}~\bibnamefont {Ostby}}, \bibinfo {author} {\bibfnamefont {A.}~\bibnamefont {Petukhov}}, \bibinfo {author} {\bibfnamefont {J.~C.}\ \bibnamefont {Platt}}, \bibinfo {author} {\bibfnamefont {C.}~\bibnamefont {Quintana}}, \bibinfo {author} {\bibfnamefont {E.~G.}\ \bibnamefont {Rieffel}}, \bibinfo {author} {\bibfnamefont {P.}~\bibnamefont {Roushan}}, \bibinfo {author} {\bibfnamefont {N.~C.}\ \bibnamefont {Rubin}}, \bibinfo {author} {\bibfnamefont {D.}~\bibnamefont {Sank}}, \bibinfo {author} {\bibfnamefont {K.~J.}\ \bibnamefont {Satzinger}}, \bibinfo {author} {\bibfnamefont {V.}~\bibnamefont {Smelyanskiy}}, \bibinfo {author} {\bibfnamefont {K.~J.}\ \bibnamefont {Sung}}, \bibinfo {author} {\bibfnamefont {M.~D.}\ \bibnamefont {Trevithick}}, \bibinfo {author} {\bibfnamefont {A.}~\bibnamefont {Vainsencher}}, \bibinfo {author} {\bibfnamefont {B.}~\bibnamefont {Villalonga}}, \bibinfo {author} {\bibfnamefont {T.}~\bibnamefont {White}}, \bibinfo {author} {\bibfnamefont {Z.~J.}\ \bibnamefont {Yao}}, \bibinfo {author} {\bibfnamefont {P.}~\bibnamefont {Yeh}}, \bibinfo {author} {\bibfnamefont {A.}~\bibnamefont {Zalcman}}, \bibinfo {author} {\bibfnamefont {H.}~\bibnamefont {Neven}},\ and\ \bibinfo {author} {\bibfnamefont {J.~M.}\ \bibnamefont {Martinis}},\ }\bibfield  {title} {\bibinfo {title} {Quantum supremacy using a programmable superconducting processor},\ }\href {https://doi.org/10.1038/s41586-019-1666-5} {\bibfield  {journal} {\bibinfo  {journal} {Nature}\ }\textbf {\bibinfo {volume} {574}},\ \bibinfo {pages} {505} (\bibinfo {year} {2019})}\BibitemShut {NoStop}%
\bibitem [{\citenamefont {Wu}\ \emph {et~al.}(2021)\citenamefont {Wu}, \citenamefont {Bao}, \citenamefont {Cao}, \citenamefont {Chen}, \citenamefont {Chen}, \citenamefont {Chen}, \citenamefont {Chung}, \citenamefont {Deng}, \citenamefont {Du}, \citenamefont {Fan}, \citenamefont {Gong}, \citenamefont {Guo}, \citenamefont {Guo}, \citenamefont {Guo}, \citenamefont {Han}, \citenamefont {Hong}, \citenamefont {Huang}, \citenamefont {Huo}, \citenamefont {Li}, \citenamefont {Li}, \citenamefont {Li}, \citenamefont {Li}, \citenamefont {Liang}, \citenamefont {Lin}, \citenamefont {Lin}, \citenamefont {Qian}, \citenamefont {Qiao}, \citenamefont {Rong}, \citenamefont {Su}, \citenamefont {Sun}, \citenamefont {Wang}, \citenamefont {Wang}, \citenamefont {Wu}, \citenamefont {Xu}, \citenamefont {Yan}, \citenamefont {Yang}, \citenamefont {Yang}, \citenamefont {Ye}, \citenamefont {Yin}, \citenamefont {Ying}, \citenamefont {Yu}, \citenamefont {Zha}, \citenamefont {Zhang}, \citenamefont {Zhang}, \citenamefont {Zhang}, \citenamefont {Zhang}, \citenamefont {Zhao}, \citenamefont {Zhao}, \citenamefont {Zhou}, \citenamefont {Zhu}, \citenamefont {Lu}, \citenamefont {Peng}, \citenamefont {Zhu},\ and\ \citenamefont {Pan}}]{PhysRevLett.127.180501}%
  \BibitemOpen
  \bibfield  {author} {\bibinfo {author} {\bibfnamefont {Y.}~\bibnamefont {Wu}}, \bibinfo {author} {\bibfnamefont {W.-S.}\ \bibnamefont {Bao}}, \bibinfo {author} {\bibfnamefont {S.}~\bibnamefont {Cao}}, \bibinfo {author} {\bibfnamefont {F.}~\bibnamefont {Chen}}, \bibinfo {author} {\bibfnamefont {M.-C.}\ \bibnamefont {Chen}}, \bibinfo {author} {\bibfnamefont {X.}~\bibnamefont {Chen}}, \bibinfo {author} {\bibfnamefont {T.-H.}\ \bibnamefont {Chung}}, \bibinfo {author} {\bibfnamefont {H.}~\bibnamefont {Deng}}, \bibinfo {author} {\bibfnamefont {Y.}~\bibnamefont {Du}}, \bibinfo {author} {\bibfnamefont {D.}~\bibnamefont {Fan}}, \bibinfo {author} {\bibfnamefont {M.}~\bibnamefont {Gong}}, \bibinfo {author} {\bibfnamefont {C.}~\bibnamefont {Guo}}, \bibinfo {author} {\bibfnamefont {C.}~\bibnamefont {Guo}}, \bibinfo {author} {\bibfnamefont {S.}~\bibnamefont {Guo}}, \bibinfo {author} {\bibfnamefont {L.}~\bibnamefont {Han}}, \bibinfo {author} {\bibfnamefont {L.}~\bibnamefont {Hong}}, \bibinfo {author} {\bibfnamefont {H.-L.}\ \bibnamefont {Huang}}, \bibinfo {author} {\bibfnamefont {Y.-H.}\ \bibnamefont {Huo}}, \bibinfo {author} {\bibfnamefont {L.}~\bibnamefont {Li}}, \bibinfo {author} {\bibfnamefont {N.}~\bibnamefont {Li}}, \bibinfo {author} {\bibfnamefont {S.}~\bibnamefont {Li}}, \bibinfo {author} {\bibfnamefont {Y.}~\bibnamefont {Li}}, \bibinfo {author} {\bibfnamefont {F.}~\bibnamefont {Liang}}, \bibinfo {author} {\bibfnamefont {C.}~\bibnamefont {Lin}}, \bibinfo {author} {\bibfnamefont {J.}~\bibnamefont {Lin}}, \bibinfo {author} {\bibfnamefont {H.}~\bibnamefont {Qian}}, \bibinfo {author} {\bibfnamefont {D.}~\bibnamefont {Qiao}}, \bibinfo {author} {\bibfnamefont {H.}~\bibnamefont {Rong}}, \bibinfo {author} {\bibfnamefont {H.}~\bibnamefont {Su}}, \bibinfo {author} {\bibfnamefont {L.}~\bibnamefont {Sun}}, \bibinfo {author} {\bibfnamefont {L.}~\bibnamefont {Wang}}, \bibinfo {author} {\bibfnamefont {S.}~\bibnamefont {Wang}}, \bibinfo {author} {\bibfnamefont {D.}~\bibnamefont {Wu}}, \bibinfo {author} {\bibfnamefont {Y.}~\bibnamefont {Xu}}, \bibinfo {author} {\bibfnamefont {K.}~\bibnamefont {Yan}}, \bibinfo {author} {\bibfnamefont {W.}~\bibnamefont {Yang}}, \bibinfo {author} {\bibfnamefont {Y.}~\bibnamefont {Yang}}, \bibinfo {author} {\bibfnamefont {Y.}~\bibnamefont {Ye}}, \bibinfo {author} {\bibfnamefont {J.}~\bibnamefont {Yin}}, \bibinfo {author} {\bibfnamefont {C.}~\bibnamefont {Ying}}, \bibinfo {author} {\bibfnamefont {J.}~\bibnamefont {Yu}}, \bibinfo {author} {\bibfnamefont {C.}~\bibnamefont {Zha}}, \bibinfo {author} {\bibfnamefont {C.}~\bibnamefont {Zhang}}, \bibinfo {author} {\bibfnamefont {H.}~\bibnamefont {Zhang}}, \bibinfo {author} {\bibfnamefont {K.}~\bibnamefont {Zhang}}, \bibinfo {author} {\bibfnamefont {Y.}~\bibnamefont {Zhang}}, \bibinfo {author} {\bibfnamefont {H.}~\bibnamefont {Zhao}}, \bibinfo {author} {\bibfnamefont {Y.}~\bibnamefont {Zhao}}, \bibinfo {author} {\bibfnamefont {L.}~\bibnamefont {Zhou}}, \bibinfo {author} {\bibfnamefont {Q.}~\bibnamefont {Zhu}}, \bibinfo {author} {\bibfnamefont {C.-Y.}\ \bibnamefont {Lu}}, \bibinfo {author} {\bibfnamefont {C.-Z.}\ \bibnamefont {Peng}}, \bibinfo {author} {\bibfnamefont {X.}~\bibnamefont {Zhu}},\ and\ \bibinfo {author} {\bibfnamefont {J.-W.}\ \bibnamefont {Pan}},\ }\bibfield  {title} {\bibinfo {title} {Strong quantum computational advantage using a superconducting quantum processor},\ }\href {https://doi.org/10.1103/PhysRevLett.127.180501} {\bibfield  {journal} {\bibinfo  {journal} {Phys. Rev. Lett.}\ }\textbf {\bibinfo {volume} {127}},\ \bibinfo {pages} {180501} (\bibinfo {year} {2021})}\BibitemShut {NoStop}%
\bibitem [{\citenamefont {Choi}\ \emph {et~al.}(2023)\citenamefont {Choi}, \citenamefont {Shaw}, \citenamefont {Madjarov}, \citenamefont {Xie}, \citenamefont {Finkelstein}, \citenamefont {Covey}, \citenamefont {Cotler}, \citenamefont {Mark}, \citenamefont {Huang}, \citenamefont {Kale}, \citenamefont {Pichler}, \citenamefont {Brand{\~a}o}, \citenamefont {Choi},\ and\ \citenamefont {Endres}}]{Choi2023}%
  \BibitemOpen
  \bibfield  {author} {\bibinfo {author} {\bibfnamefont {J.}~\bibnamefont {Choi}}, \bibinfo {author} {\bibfnamefont {A.~L.}\ \bibnamefont {Shaw}}, \bibinfo {author} {\bibfnamefont {I.~S.}\ \bibnamefont {Madjarov}}, \bibinfo {author} {\bibfnamefont {X.}~\bibnamefont {Xie}}, \bibinfo {author} {\bibfnamefont {R.}~\bibnamefont {Finkelstein}}, \bibinfo {author} {\bibfnamefont {J.~P.}\ \bibnamefont {Covey}}, \bibinfo {author} {\bibfnamefont {J.~S.}\ \bibnamefont {Cotler}}, \bibinfo {author} {\bibfnamefont {D.~K.}\ \bibnamefont {Mark}}, \bibinfo {author} {\bibfnamefont {H.-Y.}\ \bibnamefont {Huang}}, \bibinfo {author} {\bibfnamefont {A.}~\bibnamefont {Kale}}, \bibinfo {author} {\bibfnamefont {H.}~\bibnamefont {Pichler}}, \bibinfo {author} {\bibfnamefont {F.~G. S.~L.}\ \bibnamefont {Brand{\~a}o}}, \bibinfo {author} {\bibfnamefont {S.}~\bibnamefont {Choi}},\ and\ \bibinfo {author} {\bibfnamefont {M.}~\bibnamefont {Endres}},\ }\bibfield  {title} {\bibinfo {title} {Preparing random states and benchmarking with many-body quantum chaos},\ }\href {https://doi.org/10.1038/s41586-022-05442-1} {\bibfield  {journal} {\bibinfo  {journal} {Nature}\ }\textbf {\bibinfo {volume} {613}},\ \bibinfo {pages} {468} (\bibinfo {year} {2023})}\BibitemShut {NoStop}%
\bibitem [{\citenamefont {Cotler}\ \emph {et~al.}(2023)\citenamefont {Cotler}, \citenamefont {Mark}, \citenamefont {Huang}, \citenamefont {Hern\'andez}, \citenamefont {Choi}, \citenamefont {Shaw}, \citenamefont {Endres},\ and\ \citenamefont {Choi}}]{PRXQuantum.4.010311}%
  \BibitemOpen
  \bibfield  {author} {\bibinfo {author} {\bibfnamefont {J.~S.}\ \bibnamefont {Cotler}}, \bibinfo {author} {\bibfnamefont {D.~K.}\ \bibnamefont {Mark}}, \bibinfo {author} {\bibfnamefont {H.-Y.}\ \bibnamefont {Huang}}, \bibinfo {author} {\bibfnamefont {F.}~\bibnamefont {Hern\'andez}}, \bibinfo {author} {\bibfnamefont {J.}~\bibnamefont {Choi}}, \bibinfo {author} {\bibfnamefont {A.~L.}\ \bibnamefont {Shaw}}, \bibinfo {author} {\bibfnamefont {M.}~\bibnamefont {Endres}},\ and\ \bibinfo {author} {\bibfnamefont {S.}~\bibnamefont {Choi}},\ }\bibfield  {title} {\bibinfo {title} {Emergent quantum state designs from individual many-body wave functions},\ }\href {https://doi.org/10.1103/PRXQuantum.4.010311} {\bibfield  {journal} {\bibinfo  {journal} {PRX Quantum}\ }\textbf {\bibinfo {volume} {4}},\ \bibinfo {pages} {010311} (\bibinfo {year} {2023})}\BibitemShut {NoStop}%
\bibitem [{\citenamefont {Ho}\ and\ \citenamefont {Choi}(2022)}]{PhysRevLett.128.060601}%
  \BibitemOpen
  \bibfield  {author} {\bibinfo {author} {\bibfnamefont {W.~W.}\ \bibnamefont {Ho}}\ and\ \bibinfo {author} {\bibfnamefont {S.}~\bibnamefont {Choi}},\ }\bibfield  {title} {\bibinfo {title} {Exact emergent quantum state designs from quantum chaotic dynamics},\ }\href {https://doi.org/10.1103/PhysRevLett.128.060601} {\bibfield  {journal} {\bibinfo  {journal} {Phys. Rev. Lett.}\ }\textbf {\bibinfo {volume} {128}},\ \bibinfo {pages} {060601} (\bibinfo {year} {2022})}\BibitemShut {NoStop}%
\bibitem [{\citenamefont {Ippoliti}\ and\ \citenamefont {Ho}(2022)}]{Ippoliti2022solvablemodelofdeep}%
  \BibitemOpen
  \bibfield  {author} {\bibinfo {author} {\bibfnamefont {M.}~\bibnamefont {Ippoliti}}\ and\ \bibinfo {author} {\bibfnamefont {W.~W.}\ \bibnamefont {Ho}},\ }\bibfield  {title} {\bibinfo {title} {Solvable model of deep thermalization with distinct design times},\ }\href {https://doi.org/10.22331/q-2022-12-29-886} {\bibfield  {journal} {\bibinfo  {journal} {{Quantum}}\ }\textbf {\bibinfo {volume} {6}},\ \bibinfo {pages} {886} (\bibinfo {year} {2022})}\BibitemShut {NoStop}%
\bibitem [{\citenamefont {Skinner}\ \emph {et~al.}(2019)\citenamefont {Skinner}, \citenamefont {Ruhman},\ and\ \citenamefont {Nahum}}]{PhysRevX.9.031009}%
  \BibitemOpen
  \bibfield  {author} {\bibinfo {author} {\bibfnamefont {B.}~\bibnamefont {Skinner}}, \bibinfo {author} {\bibfnamefont {J.}~\bibnamefont {Ruhman}},\ and\ \bibinfo {author} {\bibfnamefont {A.}~\bibnamefont {Nahum}},\ }\bibfield  {title} {\bibinfo {title} {Measurement-induced phase transitions in the dynamics of entanglement},\ }\href {https://doi.org/10.1103/PhysRevX.9.031009} {\bibfield  {journal} {\bibinfo  {journal} {Phys. Rev. X}\ }\textbf {\bibinfo {volume} {9}},\ \bibinfo {pages} {031009} (\bibinfo {year} {2019})}\BibitemShut {NoStop}%
\bibitem [{\citenamefont {Chan}\ \emph {et~al.}(2019)\citenamefont {Chan}, \citenamefont {Nandkishore}, \citenamefont {Pretko},\ and\ \citenamefont {Smith}}]{PhysRevB.99.224307}%
  \BibitemOpen
  \bibfield  {author} {\bibinfo {author} {\bibfnamefont {A.}~\bibnamefont {Chan}}, \bibinfo {author} {\bibfnamefont {R.~M.}\ \bibnamefont {Nandkishore}}, \bibinfo {author} {\bibfnamefont {M.}~\bibnamefont {Pretko}},\ and\ \bibinfo {author} {\bibfnamefont {G.}~\bibnamefont {Smith}},\ }\bibfield  {title} {\bibinfo {title} {Unitary-projective entanglement dynamics},\ }\href {https://doi.org/10.1103/PhysRevB.99.224307} {\bibfield  {journal} {\bibinfo  {journal} {Phys. Rev. B}\ }\textbf {\bibinfo {volume} {99}},\ \bibinfo {pages} {224307} (\bibinfo {year} {2019})}\BibitemShut {NoStop}%
\bibitem [{\citenamefont {Choi}\ \emph {et~al.}(2020)\citenamefont {Choi}, \citenamefont {Bao}, \citenamefont {Qi},\ and\ \citenamefont {Altman}}]{PhysRevLett.125.030505}%
  \BibitemOpen
  \bibfield  {author} {\bibinfo {author} {\bibfnamefont {S.}~\bibnamefont {Choi}}, \bibinfo {author} {\bibfnamefont {Y.}~\bibnamefont {Bao}}, \bibinfo {author} {\bibfnamefont {X.-L.}\ \bibnamefont {Qi}},\ and\ \bibinfo {author} {\bibfnamefont {E.}~\bibnamefont {Altman}},\ }\bibfield  {title} {\bibinfo {title} {Quantum error correction in scrambling dynamics and measurement-induced phase transition},\ }\href {https://doi.org/10.1103/PhysRevLett.125.030505} {\bibfield  {journal} {\bibinfo  {journal} {Phys. Rev. Lett.}\ }\textbf {\bibinfo {volume} {125}},\ \bibinfo {pages} {030505} (\bibinfo {year} {2020})}\BibitemShut {NoStop}%
\bibitem [{\citenamefont {Gullans}\ and\ \citenamefont {Huse}(2020{\natexlab{a}})}]{PhysRevX.10.041020}%
  \BibitemOpen
  \bibfield  {author} {\bibinfo {author} {\bibfnamefont {M.~J.}\ \bibnamefont {Gullans}}\ and\ \bibinfo {author} {\bibfnamefont {D.~A.}\ \bibnamefont {Huse}},\ }\bibfield  {title} {\bibinfo {title} {Dynamical purification phase transition induced by quantum measurements},\ }\href {https://doi.org/10.1103/PhysRevX.10.041020} {\bibfield  {journal} {\bibinfo  {journal} {Phys. Rev. X}\ }\textbf {\bibinfo {volume} {10}},\ \bibinfo {pages} {041020} (\bibinfo {year} {2020}{\natexlab{a}})}\BibitemShut {NoStop}%
\bibitem [{\citenamefont {Jian}\ \emph {et~al.}(2020)\citenamefont {Jian}, \citenamefont {You}, \citenamefont {Vasseur},\ and\ \citenamefont {Ludwig}}]{PhysRevB.101.104302}%
  \BibitemOpen
  \bibfield  {author} {\bibinfo {author} {\bibfnamefont {C.-M.}\ \bibnamefont {Jian}}, \bibinfo {author} {\bibfnamefont {Y.-Z.}\ \bibnamefont {You}}, \bibinfo {author} {\bibfnamefont {R.}~\bibnamefont {Vasseur}},\ and\ \bibinfo {author} {\bibfnamefont {A.~W.~W.}\ \bibnamefont {Ludwig}},\ }\bibfield  {title} {\bibinfo {title} {Measurement-induced criticality in random quantum circuits},\ }\href {https://doi.org/10.1103/PhysRevB.101.104302} {\bibfield  {journal} {\bibinfo  {journal} {Phys. Rev. B}\ }\textbf {\bibinfo {volume} {101}},\ \bibinfo {pages} {104302} (\bibinfo {year} {2020})}\BibitemShut {NoStop}%
\bibitem [{\citenamefont {Bao}\ \emph {et~al.}(2020)\citenamefont {Bao}, \citenamefont {Choi},\ and\ \citenamefont {Altman}}]{PhysRevB.101.104301}%
  \BibitemOpen
  \bibfield  {author} {\bibinfo {author} {\bibfnamefont {Y.}~\bibnamefont {Bao}}, \bibinfo {author} {\bibfnamefont {S.}~\bibnamefont {Choi}},\ and\ \bibinfo {author} {\bibfnamefont {E.}~\bibnamefont {Altman}},\ }\bibfield  {title} {\bibinfo {title} {Theory of the phase transition in random unitary circuits with measurements},\ }\href {https://doi.org/10.1103/PhysRevB.101.104301} {\bibfield  {journal} {\bibinfo  {journal} {Phys. Rev. B}\ }\textbf {\bibinfo {volume} {101}},\ \bibinfo {pages} {104301} (\bibinfo {year} {2020})}\BibitemShut {NoStop}%
\bibitem [{\citenamefont {Gullans}\ and\ \citenamefont {Huse}(2020{\natexlab{b}})}]{PhysRevLett.125.070606}%
  \BibitemOpen
  \bibfield  {author} {\bibinfo {author} {\bibfnamefont {M.~J.}\ \bibnamefont {Gullans}}\ and\ \bibinfo {author} {\bibfnamefont {D.~A.}\ \bibnamefont {Huse}},\ }\bibfield  {title} {\bibinfo {title} {Scalable probes of measurement-induced criticality},\ }\href {https://doi.org/10.1103/PhysRevLett.125.070606} {\bibfield  {journal} {\bibinfo  {journal} {Phys. Rev. Lett.}\ }\textbf {\bibinfo {volume} {125}},\ \bibinfo {pages} {070606} (\bibinfo {year} {2020}{\natexlab{b}})}\BibitemShut {NoStop}%
\bibitem [{\citenamefont {Li}\ \emph {et~al.}(2023)\citenamefont {Li}, \citenamefont {Zou}, \citenamefont {Glorioso}, \citenamefont {Altman},\ and\ \citenamefont {Fisher}}]{PhysRevLett.130.220404}%
  \BibitemOpen
  \bibfield  {author} {\bibinfo {author} {\bibfnamefont {Y.}~\bibnamefont {Li}}, \bibinfo {author} {\bibfnamefont {Y.}~\bibnamefont {Zou}}, \bibinfo {author} {\bibfnamefont {P.}~\bibnamefont {Glorioso}}, \bibinfo {author} {\bibfnamefont {E.}~\bibnamefont {Altman}},\ and\ \bibinfo {author} {\bibfnamefont {M.~P.~A.}\ \bibnamefont {Fisher}},\ }\bibfield  {title} {\bibinfo {title} {Cross entropy benchmark for measurement-induced phase transitions},\ }\href {https://doi.org/10.1103/PhysRevLett.130.220404} {\bibfield  {journal} {\bibinfo  {journal} {Phys. Rev. Lett.}\ }\textbf {\bibinfo {volume} {130}},\ \bibinfo {pages} {220404} (\bibinfo {year} {2023})}\BibitemShut {NoStop}%
\bibitem [{\citenamefont {Garratt}\ and\ \citenamefont {Altman}(2024)}]{PRXQuantum.5.030311}%
  \BibitemOpen
  \bibfield  {author} {\bibinfo {author} {\bibfnamefont {S.~J.}\ \bibnamefont {Garratt}}\ and\ \bibinfo {author} {\bibfnamefont {E.}~\bibnamefont {Altman}},\ }\bibfield  {title} {\bibinfo {title} {Probing postmeasurement entanglement without postselection},\ }\href {https://doi.org/10.1103/PRXQuantum.5.030311} {\bibfield  {journal} {\bibinfo  {journal} {PRX Quantum}\ }\textbf {\bibinfo {volume} {5}},\ \bibinfo {pages} {030311} (\bibinfo {year} {2024})}\BibitemShut {NoStop}%
\bibitem [{\citenamefont {Noel}\ \emph {et~al.}(2022)\citenamefont {Noel}, \citenamefont {Niroula}, \citenamefont {Zhu}, \citenamefont {Risinger}, \citenamefont {Egan}, \citenamefont {Biswas}, \citenamefont {Cetina}, \citenamefont {Gorshkov}, \citenamefont {Gullans}, \citenamefont {Huse},\ and\ \citenamefont {Monroe}}]{Noel2022}%
  \BibitemOpen
  \bibfield  {author} {\bibinfo {author} {\bibfnamefont {C.}~\bibnamefont {Noel}}, \bibinfo {author} {\bibfnamefont {P.}~\bibnamefont {Niroula}}, \bibinfo {author} {\bibfnamefont {D.}~\bibnamefont {Zhu}}, \bibinfo {author} {\bibfnamefont {A.}~\bibnamefont {Risinger}}, \bibinfo {author} {\bibfnamefont {L.}~\bibnamefont {Egan}}, \bibinfo {author} {\bibfnamefont {D.}~\bibnamefont {Biswas}}, \bibinfo {author} {\bibfnamefont {M.}~\bibnamefont {Cetina}}, \bibinfo {author} {\bibfnamefont {A.~V.}\ \bibnamefont {Gorshkov}}, \bibinfo {author} {\bibfnamefont {M.~J.}\ \bibnamefont {Gullans}}, \bibinfo {author} {\bibfnamefont {D.~A.}\ \bibnamefont {Huse}},\ and\ \bibinfo {author} {\bibfnamefont {C.}~\bibnamefont {Monroe}},\ }\bibfield  {title} {\bibinfo {title} {Measurement-induced quantum phases realized in a trapped-ion quantum computer},\ }\href {https://doi.org/10.1038/s41567-022-01619-7} {\bibfield  {journal} {\bibinfo  {journal} {Nature Physics}\ }\textbf {\bibinfo {volume} {18}},\ \bibinfo {pages} {760} (\bibinfo {year} {2022})}\BibitemShut {NoStop}%
\bibitem [{\citenamefont {Hoke}\ \emph {et~al.}(2023)\citenamefont {Hoke}, \citenamefont {Ippoliti}, \citenamefont {Rosenberg}, \citenamefont {Abanin}, \citenamefont {Acharya}, \citenamefont {Andersen}, \citenamefont {Ansmann}, \citenamefont {Arute}, \citenamefont {Arya}, \citenamefont {Asfaw}, \citenamefont {Atalaya}, \citenamefont {Bardin}, \citenamefont {Bengtsson}, \citenamefont {Bortoli}, \citenamefont {Bourassa}, \citenamefont {Bovaird}, \citenamefont {Brill}, \citenamefont {Broughton}, \citenamefont {Buckley}, \citenamefont {Buell}, \citenamefont {Burger}, \citenamefont {Burkett}, \citenamefont {Bushnell}, \citenamefont {Chen}, \citenamefont {Chiaro}, \citenamefont {Chik}, \citenamefont {Cogan}, \citenamefont {Collins}, \citenamefont {Conner}, \citenamefont {Courtney}, \citenamefont {Crook}, \citenamefont {Curtin}, \citenamefont {Dau}, \citenamefont {Debroy}, \citenamefont {Del Toro~Barba}, \citenamefont {Demura}, \citenamefont {Di~Paolo}, \citenamefont {Drozdov}, \citenamefont {Dunsworth}, \citenamefont {Eppens}, \citenamefont {Erickson}, \citenamefont {Farhi}, \citenamefont {Fatemi}, \citenamefont {Ferreira}, \citenamefont {Burgos}, \citenamefont {Forati}, \citenamefont {Fowler}, \citenamefont {Foxen}, \citenamefont {Giang}, \citenamefont {Gidney}, \citenamefont {Gilboa}, \citenamefont {Giustina}, \citenamefont {Gosula}, \citenamefont {Gross}, \citenamefont {Habegger}, \citenamefont {Hamilton}, \citenamefont {Hansen}, \citenamefont {Harrigan}, \citenamefont {Harrington}, \citenamefont {Heu}, \citenamefont {Hoffmann}, \citenamefont {Hong}, \citenamefont {Huang}, \citenamefont {Huff}, \citenamefont {Huggins}, \citenamefont {Isakov}, \citenamefont {Iveland}, \citenamefont {Jeffrey}, \citenamefont {Jiang}, \citenamefont {Jones}, \citenamefont {Juhas}, \citenamefont {Kafri}, \citenamefont {Kechedzhi}, \citenamefont {Khattar}, \citenamefont {Khezri}, \citenamefont {Kieferov{\'a}}, \citenamefont {Kim}, \citenamefont {Kitaev}, \citenamefont {Klimov}, \citenamefont {Klots}, \citenamefont {Korotkov}, \citenamefont {Kostritsa}, \citenamefont {Kreikebaum}, \citenamefont {Landhuis}, \citenamefont {Laptev}, \citenamefont {Lau}, \citenamefont {Laws}, \citenamefont {Lee}, \citenamefont {Lee}, \citenamefont {Lensky}, \citenamefont {Lester}, \citenamefont {Lill}, \citenamefont {Liu}, \citenamefont {Locharla}, \citenamefont {Martin}, \citenamefont {McClean}, \citenamefont {McEwen}, \citenamefont {Miao}, \citenamefont {Mieszala}, \citenamefont {Montazeri}, \citenamefont {Morvan}, \citenamefont {Movassagh}, \citenamefont {Mruczkiewicz}, \citenamefont {Neeley}, \citenamefont {Neill}, \citenamefont {Nersisyan}, \citenamefont {Newman}, \citenamefont {Ng}, \citenamefont {Nguyen}, \citenamefont {Nguyen}, \citenamefont {Niu}, \citenamefont {O'Brien}, \citenamefont {Omonije}, \citenamefont {Opremcak}, \citenamefont {Petukhov}, \citenamefont {Potter}, \citenamefont {Pryadko}, \citenamefont {Quintana}, \citenamefont {Rocque}, \citenamefont {Rubin}, \citenamefont {Saei}, \citenamefont {Sank}, \citenamefont {Sankaragomathi}, \citenamefont {Satzinger}, \citenamefont {Schurkus}, \citenamefont {Schuster}, \citenamefont {Shearn}, \citenamefont {Shorter}, \citenamefont {Shutty}, \citenamefont {Shvarts}, \citenamefont {Skruzny}, \citenamefont {Smith}, \citenamefont {Somma}, \citenamefont {Sterling}, \citenamefont {Strain}, \citenamefont {Szalay}, \citenamefont {Torres}, \citenamefont {Vidal}, \citenamefont {Villalonga}, \citenamefont {Heidweiller}, \citenamefont {White}, \citenamefont {Woo}, \citenamefont {Xing}, \citenamefont {Yao}, \citenamefont {Yeh}, \citenamefont {Yoo}, \citenamefont {Young}, \citenamefont {Zalcman}, \citenamefont {Zhang}, \citenamefont {Zhu}, \citenamefont {Zobrist}, \citenamefont {Neven}, \citenamefont {Babbush}, \citenamefont {Bacon}, \citenamefont {Boixo}, \citenamefont {Hilton}, \citenamefont {Lucero}, \citenamefont {Megrant}, \citenamefont {Kelly}, \citenamefont {Chen}, \citenamefont {Smelyanskiy}, \citenamefont {Mi}, \citenamefont {Khemani}, \citenamefont {Roushan}, \citenamefont {AI},\ and\ \citenamefont {{Collaborators}}}]{Hoke2023}%
  \BibitemOpen
  \bibfield  {author} {\bibinfo {author} {\bibfnamefont {J.~C.}\ \bibnamefont {Hoke}}, \bibinfo {author} {\bibfnamefont {M.}~\bibnamefont {Ippoliti}}, \bibinfo {author} {\bibfnamefont {E.}~\bibnamefont {Rosenberg}}, \bibinfo {author} {\bibfnamefont {D.}~\bibnamefont {Abanin}}, \bibinfo {author} {\bibfnamefont {R.}~\bibnamefont {Acharya}}, \bibinfo {author} {\bibfnamefont {T.~I.}\ \bibnamefont {Andersen}}, \bibinfo {author} {\bibfnamefont {M.}~\bibnamefont {Ansmann}}, \bibinfo {author} {\bibfnamefont {F.}~\bibnamefont {Arute}}, \bibinfo {author} {\bibfnamefont {K.}~\bibnamefont {Arya}}, \bibinfo {author} {\bibfnamefont {A.}~\bibnamefont {Asfaw}}, \bibinfo {author} {\bibfnamefont {J.}~\bibnamefont {Atalaya}}, \bibinfo {author} {\bibfnamefont {J.~C.}\ \bibnamefont {Bardin}}, \bibinfo {author} {\bibfnamefont {A.}~\bibnamefont {Bengtsson}}, \bibinfo {author} {\bibfnamefont {G.}~\bibnamefont {Bortoli}}, \bibinfo {author} {\bibfnamefont {A.}~\bibnamefont {Bourassa}}, \bibinfo {author} {\bibfnamefont {J.}~\bibnamefont {Bovaird}}, \bibinfo {author} {\bibfnamefont {L.}~\bibnamefont {Brill}}, \bibinfo {author} {\bibfnamefont {M.}~\bibnamefont {Broughton}}, \bibinfo {author} {\bibfnamefont {B.~B.}\ \bibnamefont {Buckley}}, \bibinfo {author} {\bibfnamefont {D.~A.}\ \bibnamefont {Buell}}, \bibinfo {author} {\bibfnamefont {T.}~\bibnamefont {Burger}}, \bibinfo {author} {\bibfnamefont {B.}~\bibnamefont {Burkett}}, \bibinfo {author} {\bibfnamefont {N.}~\bibnamefont {Bushnell}}, \bibinfo {author} {\bibfnamefont {Z.}~\bibnamefont {Chen}}, \bibinfo {author} {\bibfnamefont {B.}~\bibnamefont {Chiaro}}, \bibinfo {author} {\bibfnamefont {D.}~\bibnamefont {Chik}}, \bibinfo {author} {\bibfnamefont {J.}~\bibnamefont {Cogan}}, \bibinfo {author} {\bibfnamefont {R.}~\bibnamefont {Collins}}, \bibinfo {author} {\bibfnamefont {P.}~\bibnamefont {Conner}}, \bibinfo {author} {\bibfnamefont {W.}~\bibnamefont {Courtney}}, \bibinfo {author} {\bibfnamefont {A.~L.}\ \bibnamefont {Crook}}, \bibinfo {author} {\bibfnamefont {B.}~\bibnamefont {Curtin}}, \bibinfo {author} {\bibfnamefont {A.~G.}\ \bibnamefont {Dau}}, \bibinfo {author} {\bibfnamefont {D.~M.}\ \bibnamefont {Debroy}}, \bibinfo {author} {\bibfnamefont {A.}~\bibnamefont {Del Toro~Barba}}, \bibinfo {author} {\bibfnamefont {S.}~\bibnamefont {Demura}}, \bibinfo {author} {\bibfnamefont {A.}~\bibnamefont {Di~Paolo}}, \bibinfo {author} {\bibfnamefont {I.~K.}\ \bibnamefont {Drozdov}}, \bibinfo {author} {\bibfnamefont {A.}~\bibnamefont {Dunsworth}}, \bibinfo {author} {\bibfnamefont {D.}~\bibnamefont {Eppens}}, \bibinfo {author} {\bibfnamefont {C.}~\bibnamefont {Erickson}}, \bibinfo {author} {\bibfnamefont {E.}~\bibnamefont {Farhi}}, \bibinfo {author} {\bibfnamefont {R.}~\bibnamefont {Fatemi}}, \bibinfo {author} {\bibfnamefont {V.~S.}\ \bibnamefont {Ferreira}}, \bibinfo {author} {\bibfnamefont {L.~F.}\ \bibnamefont {Burgos}}, \bibinfo {author} {\bibfnamefont {E.}~\bibnamefont {Forati}}, \bibinfo {author} {\bibfnamefont {A.~G.}\ \bibnamefont {Fowler}}, \bibinfo {author} {\bibfnamefont {B.}~\bibnamefont {Foxen}}, \bibinfo {author} {\bibfnamefont {W.}~\bibnamefont {Giang}}, \bibinfo {author} {\bibfnamefont {C.}~\bibnamefont {Gidney}}, \bibinfo {author} {\bibfnamefont {D.}~\bibnamefont {Gilboa}}, \bibinfo {author} {\bibfnamefont {M.}~\bibnamefont {Giustina}}, \bibinfo {author} {\bibfnamefont {R.}~\bibnamefont {Gosula}}, \bibinfo {author} {\bibfnamefont {J.~A.}\ \bibnamefont {Gross}}, \bibinfo {author} {\bibfnamefont {S.}~\bibnamefont {Habegger}}, \bibinfo {author} {\bibfnamefont {M.~C.}\ \bibnamefont {Hamilton}}, \bibinfo {author} {\bibfnamefont {M.}~\bibnamefont {Hansen}}, \bibinfo {author} {\bibfnamefont {M.~P.}\ \bibnamefont {Harrigan}}, \bibinfo {author} {\bibfnamefont {S.~D.}\ \bibnamefont {Harrington}}, \bibinfo {author} {\bibfnamefont {P.}~\bibnamefont {Heu}}, \bibinfo {author} {\bibfnamefont {M.~R.}\ \bibnamefont {Hoffmann}}, \bibinfo {author} {\bibfnamefont {S.}~\bibnamefont {Hong}}, \bibinfo {author} {\bibfnamefont {T.}~\bibnamefont {Huang}}, \bibinfo {author} {\bibfnamefont {A.}~\bibnamefont {Huff}}, \bibinfo {author} {\bibfnamefont {W.~J.}\ \bibnamefont {Huggins}}, \bibinfo {author} {\bibfnamefont {S.~V.}\ \bibnamefont {Isakov}}, \bibinfo {author} {\bibfnamefont {J.}~\bibnamefont {Iveland}}, \bibinfo {author} {\bibfnamefont {E.}~\bibnamefont {Jeffrey}}, \bibinfo {author} {\bibfnamefont {Z.}~\bibnamefont {Jiang}}, \bibinfo {author} {\bibfnamefont {C.}~\bibnamefont {Jones}}, \bibinfo {author} {\bibfnamefont {P.}~\bibnamefont {Juhas}}, \bibinfo {author} {\bibfnamefont {D.}~\bibnamefont {Kafri}}, \bibinfo {author} {\bibfnamefont {K.}~\bibnamefont {Kechedzhi}}, \bibinfo {author} {\bibfnamefont {T.}~\bibnamefont {Khattar}}, \bibinfo {author} {\bibfnamefont {M.}~\bibnamefont {Khezri}}, \bibinfo {author} {\bibfnamefont {M.}~\bibnamefont {Kieferov{\'a}}}, \bibinfo {author} {\bibfnamefont {S.}~\bibnamefont {Kim}}, \bibinfo {author} {\bibfnamefont {A.}~\bibnamefont {Kitaev}}, \bibinfo {author} {\bibfnamefont {P.~V.}\ \bibnamefont {Klimov}}, \bibinfo {author} {\bibfnamefont {A.~R.}\ \bibnamefont {Klots}}, \bibinfo {author} {\bibfnamefont {A.~N.}\ \bibnamefont {Korotkov}}, \bibinfo {author} {\bibfnamefont {F.}~\bibnamefont {Kostritsa}}, \bibinfo {author} {\bibfnamefont {J.~M.}\ \bibnamefont {Kreikebaum}}, \bibinfo {author} {\bibfnamefont {D.}~\bibnamefont {Landhuis}}, \bibinfo {author} {\bibfnamefont {P.}~\bibnamefont {Laptev}}, \bibinfo {author} {\bibfnamefont {K.-M.}\ \bibnamefont {Lau}}, \bibinfo {author} {\bibfnamefont {L.}~\bibnamefont {Laws}}, \bibinfo {author} {\bibfnamefont {J.}~\bibnamefont {Lee}}, \bibinfo {author} {\bibfnamefont {K.~W.}\ \bibnamefont {Lee}}, \bibinfo {author} {\bibfnamefont {Y.~D.}\ \bibnamefont {Lensky}}, \bibinfo {author} {\bibfnamefont {B.~J.}\ \bibnamefont {Lester}}, \bibinfo {author} {\bibfnamefont {A.~T.}\ \bibnamefont {Lill}}, \bibinfo {author} {\bibfnamefont {W.}~\bibnamefont {Liu}}, \bibinfo {author} {\bibfnamefont {A.}~\bibnamefont {Locharla}}, \bibinfo {author} {\bibfnamefont {O.}~\bibnamefont {Martin}}, \bibinfo {author} {\bibfnamefont {J.~R.}\ \bibnamefont {McClean}}, \bibinfo {author} {\bibfnamefont {M.}~\bibnamefont {McEwen}}, \bibinfo {author} {\bibfnamefont {K.~C.}\ \bibnamefont {Miao}}, \bibinfo {author} {\bibfnamefont {A.}~\bibnamefont {Mieszala}}, \bibinfo {author} {\bibfnamefont {S.}~\bibnamefont {Montazeri}}, \bibinfo {author} {\bibfnamefont {A.}~\bibnamefont {Morvan}}, \bibinfo {author} {\bibfnamefont {R.}~\bibnamefont {Movassagh}}, \bibinfo {author} {\bibfnamefont {W.}~\bibnamefont {Mruczkiewicz}}, \bibinfo {author} {\bibfnamefont {M.}~\bibnamefont {Neeley}}, \bibinfo {author} {\bibfnamefont {C.}~\bibnamefont {Neill}}, \bibinfo {author} {\bibfnamefont {A.}~\bibnamefont {Nersisyan}}, \bibinfo {author} {\bibfnamefont {M.}~\bibnamefont {Newman}}, \bibinfo {author} {\bibfnamefont {J.~H.}\ \bibnamefont {Ng}}, \bibinfo {author} {\bibfnamefont {A.}~\bibnamefont {Nguyen}}, \bibinfo {author} {\bibfnamefont {M.}~\bibnamefont {Nguyen}}, \bibinfo {author} {\bibfnamefont {M.~Y.}\ \bibnamefont {Niu}}, \bibinfo {author} {\bibfnamefont {T.~E.}\ \bibnamefont {O'Brien}}, \bibinfo {author} {\bibfnamefont {S.}~\bibnamefont {Omonije}}, \bibinfo {author} {\bibfnamefont {A.}~\bibnamefont {Opremcak}}, \bibinfo {author} {\bibfnamefont {A.}~\bibnamefont {Petukhov}}, \bibinfo {author} {\bibfnamefont {R.}~\bibnamefont {Potter}}, \bibinfo {author} {\bibfnamefont {L.~P.}\ \bibnamefont {Pryadko}}, \bibinfo {author} {\bibfnamefont {C.}~\bibnamefont {Quintana}}, \bibinfo {author} {\bibfnamefont {C.}~\bibnamefont {Rocque}}, \bibinfo {author} {\bibfnamefont {N.~C.}\ \bibnamefont {Rubin}}, \bibinfo {author} {\bibfnamefont {N.}~\bibnamefont {Saei}}, \bibinfo {author} {\bibfnamefont {D.}~\bibnamefont {Sank}}, \bibinfo {author} {\bibfnamefont {K.}~\bibnamefont {Sankaragomathi}}, \bibinfo {author} {\bibfnamefont {K.~J.}\ \bibnamefont {Satzinger}}, \bibinfo {author} {\bibfnamefont {H.~F.}\ \bibnamefont {Schurkus}}, \bibinfo {author} {\bibfnamefont {C.}~\bibnamefont {Schuster}}, \bibinfo {author} {\bibfnamefont {M.~J.}\ \bibnamefont {Shearn}}, \bibinfo {author} {\bibfnamefont {A.}~\bibnamefont {Shorter}}, \bibinfo {author} {\bibfnamefont {N.}~\bibnamefont {Shutty}}, \bibinfo {author} {\bibfnamefont {V.}~\bibnamefont {Shvarts}}, \bibinfo {author} {\bibfnamefont {J.}~\bibnamefont {Skruzny}}, \bibinfo {author} {\bibfnamefont {W.~C.}\ \bibnamefont {Smith}}, \bibinfo {author} {\bibfnamefont {R.}~\bibnamefont {Somma}}, \bibinfo {author} {\bibfnamefont {G.}~\bibnamefont {Sterling}}, \bibinfo {author} {\bibfnamefont {D.}~\bibnamefont {Strain}}, \bibinfo {author} {\bibfnamefont {M.}~\bibnamefont {Szalay}}, \bibinfo {author} {\bibfnamefont {A.}~\bibnamefont {Torres}}, \bibinfo {author} {\bibfnamefont {G.}~\bibnamefont {Vidal}}, \bibinfo {author} {\bibfnamefont {B.}~\bibnamefont {Villalonga}}, \bibinfo {author} {\bibfnamefont {C.~V.}\ \bibnamefont {Heidweiller}}, \bibinfo {author} {\bibfnamefont {T.}~\bibnamefont {White}}, \bibinfo {author} {\bibfnamefont {B.~W.~K.}\ \bibnamefont {Woo}}, \bibinfo {author} {\bibfnamefont {C.}~\bibnamefont {Xing}}, \bibinfo {author} {\bibfnamefont {Z.~J.}\ \bibnamefont {Yao}}, \bibinfo {author} {\bibfnamefont {P.}~\bibnamefont {Yeh}}, \bibinfo {author} {\bibfnamefont {J.}~\bibnamefont {Yoo}}, \bibinfo {author} {\bibfnamefont {G.}~\bibnamefont {Young}}, \bibinfo {author} {\bibfnamefont {A.}~\bibnamefont {Zalcman}}, \bibinfo {author} {\bibfnamefont {Y.}~\bibnamefont {Zhang}}, \bibinfo {author} {\bibfnamefont {N.}~\bibnamefont {Zhu}}, \bibinfo {author} {\bibfnamefont {N.}~\bibnamefont {Zobrist}}, \bibinfo {author} {\bibfnamefont {H.}~\bibnamefont {Neven}}, \bibinfo {author} {\bibfnamefont {R.}~\bibnamefont {Babbush}}, \bibinfo {author} {\bibfnamefont {D.}~\bibnamefont {Bacon}}, \bibinfo {author} {\bibfnamefont {S.}~\bibnamefont {Boixo}}, \bibinfo {author} {\bibfnamefont {J.}~\bibnamefont {Hilton}}, \bibinfo {author} {\bibfnamefont {E.}~\bibnamefont {Lucero}}, \bibinfo {author}
  {\bibfnamefont {A.}~\bibnamefont {Megrant}}, \bibinfo {author} {\bibfnamefont {J.}~\bibnamefont {Kelly}}, \bibinfo {author} {\bibfnamefont {Y.}~\bibnamefont {Chen}}, \bibinfo {author} {\bibfnamefont {V.}~\bibnamefont {Smelyanskiy}}, \bibinfo {author} {\bibfnamefont {X.}~\bibnamefont {Mi}}, \bibinfo {author} {\bibfnamefont {V.}~\bibnamefont {Khemani}}, \bibinfo {author} {\bibfnamefont {P.}~\bibnamefont {Roushan}}, \bibinfo {author} {\bibfnamefont {G.~Q.}\ \bibnamefont {AI}},\ and\ \bibinfo {author} {\bibnamefont {{Collaborators}}},\ }\bibfield  {title} {\bibinfo {title} {Measurement-induced entanglement and teleportation on a noisy quantum processor},\ }\href {https://doi.org/10.1038/s41586-023-06505-7} {\bibfield  {journal} {\bibinfo  {journal} {Nature}\ }\textbf {\bibinfo {volume} {622}},\ \bibinfo {pages} {481} (\bibinfo {year} {2023})}\BibitemShut {NoStop}%
\bibitem [{\citenamefont {Kamakari}\ \emph {et~al.}(2024)\citenamefont {Kamakari}, \citenamefont {Sun}, \citenamefont {Li}, \citenamefont {Thio}, \citenamefont {Gujarati}, \citenamefont {Fisher}, \citenamefont {Motta},\ and\ \citenamefont {Minnich}}]{kamakari2024scalablecrossentropy}%
  \BibitemOpen
  \bibfield  {author} {\bibinfo {author} {\bibfnamefont {H.}~\bibnamefont {Kamakari}}, \bibinfo {author} {\bibfnamefont {J.}~\bibnamefont {Sun}}, \bibinfo {author} {\bibfnamefont {Y.}~\bibnamefont {Li}}, \bibinfo {author} {\bibfnamefont {J.~J.}\ \bibnamefont {Thio}}, \bibinfo {author} {\bibfnamefont {T.~P.}\ \bibnamefont {Gujarati}}, \bibinfo {author} {\bibfnamefont {M.~P.~A.}\ \bibnamefont {Fisher}}, \bibinfo {author} {\bibfnamefont {M.}~\bibnamefont {Motta}},\ and\ \bibinfo {author} {\bibfnamefont {A.~J.}\ \bibnamefont {Minnich}},\ }\href {https://arxiv.org/abs/2403.00938} {\bibinfo {title} {Experimental demonstration of scalable cross-entropy benchmarking to detect measurement-induced phase transitions on a superconducting quantum processor}} (\bibinfo {year} {2024}),\ \Eprint {https://arxiv.org/abs/2403.00938} {arXiv:2403.00938 [quant-ph]} \BibitemShut {NoStop}%
\bibitem [{\citenamefont {Grover}(1996)}]{grover1996fast}%
  \BibitemOpen
  \bibfield  {author} {\bibinfo {author} {\bibfnamefont {L.~K.}\ \bibnamefont {Grover}},\ }\href@noop {} {\bibinfo {title} {A fast quantum mechanical algorithm for database search}} (\bibinfo {year} {1996}),\ \Eprint {https://arxiv.org/abs/quant-ph/9605043} {arXiv:quant-ph/9605043 [quant-ph]} \BibitemShut {NoStop}%
\bibitem [{\citenamefont {Yoder}\ \emph {et~al.}(2014)\citenamefont {Yoder}, \citenamefont {Low},\ and\ \citenamefont {Chuang}}]{PhysRevLett.113.210501}%
  \BibitemOpen
  \bibfield  {author} {\bibinfo {author} {\bibfnamefont {T.~J.}\ \bibnamefont {Yoder}}, \bibinfo {author} {\bibfnamefont {G.~H.}\ \bibnamefont {Low}},\ and\ \bibinfo {author} {\bibfnamefont {I.~L.}\ \bibnamefont {Chuang}},\ }\bibfield  {title} {\bibinfo {title} {Fixed-point quantum search with an optimal number of queries},\ }\href {https://doi.org/10.1103/PhysRevLett.113.210501} {\bibfield  {journal} {\bibinfo  {journal} {Phys. Rev. Lett.}\ }\textbf {\bibinfo {volume} {113}},\ \bibinfo {pages} {210501} (\bibinfo {year} {2014})}\BibitemShut {NoStop}%
\bibitem [{\citenamefont {Gily\'{e}n}\ \emph {et~al.}(2019)\citenamefont {Gily\'{e}n}, \citenamefont {Su}, \citenamefont {Low},\ and\ \citenamefont {Wiebe}}]{GilyenQSVT}%
  \BibitemOpen
  \bibfield  {author} {\bibinfo {author} {\bibfnamefont {A.}~\bibnamefont {Gily\'{e}n}}, \bibinfo {author} {\bibfnamefont {Y.}~\bibnamefont {Su}}, \bibinfo {author} {\bibfnamefont {G.~H.}\ \bibnamefont {Low}},\ and\ \bibinfo {author} {\bibfnamefont {N.}~\bibnamefont {Wiebe}},\ }\bibfield  {title} {\bibinfo {title} {Quantum singular value transformation and beyond: exponential improvements for quantum matrix arithmetics},\ }in\ \href {https://doi.org/10.1145/3313276.3316366} {\emph {\bibinfo {booktitle} {Proceedings of the 51st Annual ACM SIGACT Symposium on Theory of Computing}}},\ \bibinfo {series and number} {STOC 2019}\ (\bibinfo  {publisher} {Association for Computing Machinery},\ \bibinfo {address} {New York, NY, USA},\ \bibinfo {year} {2019})\ p.\ \bibinfo {pages} {193–204}\BibitemShut {NoStop}%
\bibitem [{\citenamefont {Martyn}\ \emph {et~al.}(2021)\citenamefont {Martyn}, \citenamefont {Rossi}, \citenamefont {Tan},\ and\ \citenamefont {Chuang}}]{PRXQuantum.2.040203}%
  \BibitemOpen
  \bibfield  {author} {\bibinfo {author} {\bibfnamefont {J.~M.}\ \bibnamefont {Martyn}}, \bibinfo {author} {\bibfnamefont {Z.~M.}\ \bibnamefont {Rossi}}, \bibinfo {author} {\bibfnamefont {A.~K.}\ \bibnamefont {Tan}},\ and\ \bibinfo {author} {\bibfnamefont {I.~L.}\ \bibnamefont {Chuang}},\ }\bibfield  {title} {\bibinfo {title} {Grand unification of quantum algorithms},\ }\href {https://doi.org/10.1103/PRXQuantum.2.040203} {\bibfield  {journal} {\bibinfo  {journal} {PRX Quantum}\ }\textbf {\bibinfo {volume} {2}},\ \bibinfo {pages} {040203} (\bibinfo {year} {2021})}\BibitemShut {NoStop}%
\bibitem [{\citenamefont {Bennett}\ \emph {et~al.}(1997)\citenamefont {Bennett}, \citenamefont {Bernstein}, \citenamefont {Brassard},\ and\ \citenamefont {Vazirani}}]{BBBV1997}%
  \BibitemOpen
  \bibfield  {author} {\bibinfo {author} {\bibfnamefont {C.~H.}\ \bibnamefont {Bennett}}, \bibinfo {author} {\bibfnamefont {E.}~\bibnamefont {Bernstein}}, \bibinfo {author} {\bibfnamefont {G.}~\bibnamefont {Brassard}},\ and\ \bibinfo {author} {\bibfnamefont {U.}~\bibnamefont {Vazirani}},\ }\bibfield  {title} {\bibinfo {title} {Strengths and weaknesses of quantum computing},\ }\href {https://doi.org/10.1137/s0097539796300933} {\bibfield  {journal} {\bibinfo  {journal} {SIAM Journal on Computing}\ }\textbf {\bibinfo {volume} {26}},\ \bibinfo {pages} {1510–1523} (\bibinfo {year} {1997})}\BibitemShut {NoStop}%
\bibitem [{\citenamefont {Bennett}\ \emph {et~al.}(1993)\citenamefont {Bennett}, \citenamefont {Brassard}, \citenamefont {Cr\'epeau}, \citenamefont {Jozsa}, \citenamefont {Peres},\ and\ \citenamefont {Wootters}}]{PhysRevLett.70.1895}%
  \BibitemOpen
  \bibfield  {author} {\bibinfo {author} {\bibfnamefont {C.~H.}\ \bibnamefont {Bennett}}, \bibinfo {author} {\bibfnamefont {G.}~\bibnamefont {Brassard}}, \bibinfo {author} {\bibfnamefont {C.}~\bibnamefont {Cr\'epeau}}, \bibinfo {author} {\bibfnamefont {R.}~\bibnamefont {Jozsa}}, \bibinfo {author} {\bibfnamefont {A.}~\bibnamefont {Peres}},\ and\ \bibinfo {author} {\bibfnamefont {W.~K.}\ \bibnamefont {Wootters}},\ }\bibfield  {title} {\bibinfo {title} {Teleporting an unknown quantum state via dual classical and einstein-podolsky-rosen channels},\ }\href {https://doi.org/10.1103/PhysRevLett.70.1895} {\bibfield  {journal} {\bibinfo  {journal} {Phys. Rev. Lett.}\ }\textbf {\bibinfo {volume} {70}},\ \bibinfo {pages} {1895} (\bibinfo {year} {1993})}\BibitemShut {NoStop}%
\bibitem [{\citenamefont {Bao}\ \emph {et~al.}(2024)\citenamefont {Bao}, \citenamefont {Block},\ and\ \citenamefont {Altman}}]{PhysRevLett.132.030401}%
  \BibitemOpen
  \bibfield  {author} {\bibinfo {author} {\bibfnamefont {Y.}~\bibnamefont {Bao}}, \bibinfo {author} {\bibfnamefont {M.}~\bibnamefont {Block}},\ and\ \bibinfo {author} {\bibfnamefont {E.}~\bibnamefont {Altman}},\ }\bibfield  {title} {\bibinfo {title} {Finite-time teleportation phase transition in random quantum circuits},\ }\href {https://doi.org/10.1103/PhysRevLett.132.030401} {\bibfield  {journal} {\bibinfo  {journal} {Phys. Rev. Lett.}\ }\textbf {\bibinfo {volume} {132}},\ \bibinfo {pages} {030401} (\bibinfo {year} {2024})}\BibitemShut {NoStop}%
\bibitem [{\citenamefont {Antonini}\ \emph {et~al.}(2022)\citenamefont {Antonini}, \citenamefont {Bentsen}, \citenamefont {Cao}, \citenamefont {Harper}, \citenamefont {Jian},\ and\ \citenamefont {Swingle}}]{Antonini2022}%
  \BibitemOpen
  \bibfield  {author} {\bibinfo {author} {\bibfnamefont {S.}~\bibnamefont {Antonini}}, \bibinfo {author} {\bibfnamefont {G.}~\bibnamefont {Bentsen}}, \bibinfo {author} {\bibfnamefont {C.}~\bibnamefont {Cao}}, \bibinfo {author} {\bibfnamefont {J.}~\bibnamefont {Harper}}, \bibinfo {author} {\bibfnamefont {S.-K.}\ \bibnamefont {Jian}},\ and\ \bibinfo {author} {\bibfnamefont {B.}~\bibnamefont {Swingle}},\ }\bibfield  {title} {\bibinfo {title} {Holographic measurement and bulk teleportation},\ }\href {https://doi.org/10.1007/JHEP12(2022)124} {\bibfield  {journal} {\bibinfo  {journal} {Journal of High Energy Physics}\ }\textbf {\bibinfo {volume} {2022}},\ \bibinfo {pages} {124} (\bibinfo {year} {2022})}\BibitemShut {NoStop}%
\bibitem [{\citenamefont {Hayden}\ \emph {et~al.}(2008)\citenamefont {Hayden}, \citenamefont {Horodecki}, \citenamefont {Winter},\ and\ \citenamefont {Yard}}]{hayden2008decoupling}%
  \BibitemOpen
  \bibfield  {author} {\bibinfo {author} {\bibfnamefont {P.}~\bibnamefont {Hayden}}, \bibinfo {author} {\bibfnamefont {M.}~\bibnamefont {Horodecki}}, \bibinfo {author} {\bibfnamefont {A.}~\bibnamefont {Winter}},\ and\ \bibinfo {author} {\bibfnamefont {J.}~\bibnamefont {Yard}},\ }\bibfield  {title} {\bibinfo {title} {A decoupling approach to the quantum capacity},\ }\href {https://doi.org/10.1142/S1230161208000043} {\bibfield  {journal} {\bibinfo  {journal} {Open Systems \& Information Dynamics}\ }\textbf {\bibinfo {volume} {15}},\ \bibinfo {pages} {7} (\bibinfo {year} {2008})},\ \Eprint {https://arxiv.org/abs/https://doi.org/10.1142/S1230161208000043} {https://doi.org/10.1142/S1230161208000043} \BibitemShut {NoStop}%
\bibitem [{\citenamefont {Boixo}\ \emph {et~al.}(2018)\citenamefont {Boixo}, \citenamefont {Isakov}, \citenamefont {Smelyanskiy}, \citenamefont {Babbush}, \citenamefont {Ding}, \citenamefont {Jiang}, \citenamefont {Bremner}, \citenamefont {Martinis},\ and\ \citenamefont {Neven}}]{Boixo2018}%
  \BibitemOpen
  \bibfield  {author} {\bibinfo {author} {\bibfnamefont {S.}~\bibnamefont {Boixo}}, \bibinfo {author} {\bibfnamefont {S.~V.}\ \bibnamefont {Isakov}}, \bibinfo {author} {\bibfnamefont {V.~N.}\ \bibnamefont {Smelyanskiy}}, \bibinfo {author} {\bibfnamefont {R.}~\bibnamefont {Babbush}}, \bibinfo {author} {\bibfnamefont {N.}~\bibnamefont {Ding}}, \bibinfo {author} {\bibfnamefont {Z.}~\bibnamefont {Jiang}}, \bibinfo {author} {\bibfnamefont {M.~J.}\ \bibnamefont {Bremner}}, \bibinfo {author} {\bibfnamefont {J.~M.}\ \bibnamefont {Martinis}},\ and\ \bibinfo {author} {\bibfnamefont {H.}~\bibnamefont {Neven}},\ }\bibfield  {title} {\bibinfo {title} {Characterizing quantum supremacy in near-term devices},\ }\href {https://doi.org/10.1038/s41567-018-0124-x} {\bibfield  {journal} {\bibinfo  {journal} {Nature Physics}\ }\textbf {\bibinfo {volume} {14}},\ \bibinfo {pages} {595} (\bibinfo {year} {2018})}\BibitemShut {NoStop}%
\bibitem [{\citenamefont {Hangleiter}\ and\ \citenamefont {Eisert}(2023)}]{RevModPhys.95.035001}%
  \BibitemOpen
  \bibfield  {author} {\bibinfo {author} {\bibfnamefont {D.}~\bibnamefont {Hangleiter}}\ and\ \bibinfo {author} {\bibfnamefont {J.}~\bibnamefont {Eisert}},\ }\bibfield  {title} {\bibinfo {title} {Computational advantage of quantum random sampling},\ }\href {https://doi.org/10.1103/RevModPhys.95.035001} {\bibfield  {journal} {\bibinfo  {journal} {Rev. Mod. Phys.}\ }\textbf {\bibinfo {volume} {95}},\ \bibinfo {pages} {035001} (\bibinfo {year} {2023})}\BibitemShut {NoStop}%
\bibitem [{\citenamefont {\textit{et al.}}(2025)}]{Kim2025talk}%
  \BibitemOpen
  \bibfield  {author} {\bibinfo {author} {\bibfnamefont {E.-A.~K.}\ \bibnamefont {\textit{et al.}}},\ }\href@noop {} {\bibinfo {title} {To appear}} (\bibinfo {year} {2025}),\ \bibinfo {note} {slides available at \url{https://drive.google.com/file/d/1e8ylNadjgGqNETmtecw1Fqe6EY4OybE0/view}}\BibitemShut {NoStop}%
\bibitem [{\citenamefont {AI}\ and\ \citenamefont {collaborators}(2025)}]{You2025toappear}%
  \BibitemOpen
  \bibfield  {author} {\bibinfo {author} {\bibfnamefont {G.~Q.}\ \bibnamefont {AI}}\ and\ \bibinfo {author} {\bibnamefont {collaborators}},\ }\href@noop {} {\bibinfo {title} {To appear}} (\bibinfo {year} {2025})\BibitemShut {NoStop}%
\bibitem [{\citenamefont {Huang}\ \emph {et~al.}(2020)\citenamefont {Huang}, \citenamefont {Kueng},\ and\ \citenamefont {Preskill}}]{Huang2020}%
  \BibitemOpen
  \bibfield  {author} {\bibinfo {author} {\bibfnamefont {H.-Y.}\ \bibnamefont {Huang}}, \bibinfo {author} {\bibfnamefont {R.}~\bibnamefont {Kueng}},\ and\ \bibinfo {author} {\bibfnamefont {J.}~\bibnamefont {Preskill}},\ }\bibfield  {title} {\bibinfo {title} {Predicting many properties of a quantum system from very few measurements},\ }\href {https://doi.org/10.1038/s41567-020-0932-7} {\bibfield  {journal} {\bibinfo  {journal} {Nature Physics}\ }\textbf {\bibinfo {volume} {16}},\ \bibinfo {pages} {1050} (\bibinfo {year} {2020})}\BibitemShut {NoStop}%
\bibitem [{\citenamefont {Ippoliti}\ and\ \citenamefont {Khemani}(2021)}]{PhysRevLett.126.060501}%
  \BibitemOpen
  \bibfield  {author} {\bibinfo {author} {\bibfnamefont {M.}~\bibnamefont {Ippoliti}}\ and\ \bibinfo {author} {\bibfnamefont {V.}~\bibnamefont {Khemani}},\ }\bibfield  {title} {\bibinfo {title} {Postselection-free entanglement dynamics via spacetime duality},\ }\href {https://doi.org/10.1103/PhysRevLett.126.060501} {\bibfield  {journal} {\bibinfo  {journal} {Phys. Rev. Lett.}\ }\textbf {\bibinfo {volume} {126}},\ \bibinfo {pages} {060501} (\bibinfo {year} {2021})}\BibitemShut {NoStop}%
\bibitem [{\citenamefont {Brassard}(1997)}]{Brassard1997}%
  \BibitemOpen
  \bibfield  {author} {\bibinfo {author} {\bibfnamefont {G.}~\bibnamefont {Brassard}},\ }\bibfield  {title} {\bibinfo {title} {Searching a quantum phone book},\ }\href {https://doi.org/10.1126/science.275.5300.627} {\bibfield  {journal} {\bibinfo  {journal} {Science}\ }\textbf {\bibinfo {volume} {275}},\ \bibinfo {pages} {627} (\bibinfo {year} {1997})},\ \Eprint {https://arxiv.org/abs/https://www.science.org/doi/pdf/10.1126/science.275.5300.627} {https://www.science.org/doi/pdf/10.1126/science.275.5300.627} \BibitemShut {NoStop}%
\bibitem [{\citenamefont {Dong}\ \emph {et~al.}(2021)\citenamefont {Dong}, \citenamefont {Meng}, \citenamefont {Whaley},\ and\ \citenamefont {Lin}}]{Dong_2021}%
  \BibitemOpen
  \bibfield  {author} {\bibinfo {author} {\bibfnamefont {Y.}~\bibnamefont {Dong}}, \bibinfo {author} {\bibfnamefont {X.}~\bibnamefont {Meng}}, \bibinfo {author} {\bibfnamefont {K.~B.}\ \bibnamefont {Whaley}},\ and\ \bibinfo {author} {\bibfnamefont {L.}~\bibnamefont {Lin}},\ }\bibfield  {title} {\bibinfo {title} {Efficient phase-factor evaluation in quantum signal processing},\ }\bibfield  {journal} {\bibinfo  {journal} {Physical Review A}\ }\textbf {\bibinfo {volume} {103}},\ \href {https://doi.org/10.1103/physreva.103.042419} {10.1103/physreva.103.042419} (\bibinfo {year} {2021})\BibitemShut {NoStop}%
\bibitem [{Note1()}]{Note1}%
  \BibitemOpen
  \bibinfo {note} {As a remark, in Fig.~\ref {fig:post_to_AA}(c) and all subsequent QSVT implementations, we realize an odd degree-$d$ polynomial but use only $(d-1)$ applications of $(\Pi _0)_\phi $ and $(\Pi _m)_\phi $. This is because the final $(\Pi _m)_\phi $ acts like a global phase within the $\Pi _m$ subspace and can thus be omitted.}\BibitemShut {Stop}%
\bibitem [{\citenamefont {Low}(2017)}]{Low2017QSP}%
  \BibitemOpen
  \bibfield  {author} {\bibinfo {author} {\bibfnamefont {G.~H.}\ \bibnamefont {Low}},\ }\emph {\bibinfo {title} {Quantum Signal Processing by Single-Qubit Dynamics}},\ \href@noop {} {\bibinfo {type} {Ph.d. thesis}},\ \bibinfo  {school} {Massachusetts Institute of Technology, Department of Physics} (\bibinfo {year} {2017})\BibitemShut {NoStop}%
\bibitem [{\citenamefont {Buhrman}\ \emph {et~al.}(2001)\citenamefont {Buhrman}, \citenamefont {Cleve}, \citenamefont {Watrous},\ and\ \citenamefont {de~Wolf}}]{PhysRevLett.87.167902}%
  \BibitemOpen
  \bibfield  {author} {\bibinfo {author} {\bibfnamefont {H.}~\bibnamefont {Buhrman}}, \bibinfo {author} {\bibfnamefont {R.}~\bibnamefont {Cleve}}, \bibinfo {author} {\bibfnamefont {J.}~\bibnamefont {Watrous}},\ and\ \bibinfo {author} {\bibfnamefont {R.}~\bibnamefont {de~Wolf}},\ }\bibfield  {title} {\bibinfo {title} {Quantum fingerprinting},\ }\href {https://doi.org/10.1103/PhysRevLett.87.167902} {\bibfield  {journal} {\bibinfo  {journal} {Phys. Rev. Lett.}\ }\textbf {\bibinfo {volume} {87}},\ \bibinfo {pages} {167902} (\bibinfo {year} {2001})}\BibitemShut {NoStop}%
\bibitem [{\citenamefont {van Enk}\ and\ \citenamefont {Beenakker}(2012)}]{PhysRevLett.108.110503}%
  \BibitemOpen
  \bibfield  {author} {\bibinfo {author} {\bibfnamefont {S.~J.}\ \bibnamefont {van Enk}}\ and\ \bibinfo {author} {\bibfnamefont {C.~W.~J.}\ \bibnamefont {Beenakker}},\ }\bibfield  {title} {\bibinfo {title} {Measuring $\mathrm{Tr}{\ensuremath{\rho}}^{n}$ on single copies of $\ensuremath{\rho}$ using random measurements},\ }\href {https://doi.org/10.1103/PhysRevLett.108.110503} {\bibfield  {journal} {\bibinfo  {journal} {Phys. Rev. Lett.}\ }\textbf {\bibinfo {volume} {108}},\ \bibinfo {pages} {110503} (\bibinfo {year} {2012})}\BibitemShut {NoStop}%
\bibitem [{\citenamefont {Elben}\ \emph {et~al.}(2023)\citenamefont {Elben}, \citenamefont {Flammia}, \citenamefont {Huang}, \citenamefont {Kueng}, \citenamefont {Preskill}, \citenamefont {Vermersch},\ and\ \citenamefont {Zoller}}]{Elben2023toolbox}%
  \BibitemOpen
  \bibfield  {author} {\bibinfo {author} {\bibfnamefont {A.}~\bibnamefont {Elben}}, \bibinfo {author} {\bibfnamefont {S.~T.}\ \bibnamefont {Flammia}}, \bibinfo {author} {\bibfnamefont {H.-Y.}\ \bibnamefont {Huang}}, \bibinfo {author} {\bibfnamefont {R.}~\bibnamefont {Kueng}}, \bibinfo {author} {\bibfnamefont {J.}~\bibnamefont {Preskill}}, \bibinfo {author} {\bibfnamefont {B.}~\bibnamefont {Vermersch}},\ and\ \bibinfo {author} {\bibfnamefont {P.}~\bibnamefont {Zoller}},\ }\bibfield  {title} {\bibinfo {title} {The randomized measurement toolbox},\ }\href {https://doi.org/10.1038/s42254-022-00535-2} {\bibfield  {journal} {\bibinfo  {journal} {Nature Reviews Physics}\ }\textbf {\bibinfo {volume} {5}},\ \bibinfo {pages} {9} (\bibinfo {year} {2023})}\BibitemShut {NoStop}%
\bibitem [{Note2()}]{Note2}%
  \BibitemOpen
  \bibinfo {note} {In general, if the experimental protocol $\protect \hat {\protect \mathcal {O}}$ requires coherent measurements across all copies, then we must maintain $k$ copies in $k$ identical quantum registers. However, if no coherent measurement is required---as in the case of randomized or adaptive measurements---then one quantum register can be reused to prepare $\mathinner {|{\psi _m}\rangle }$ multiple times.}\BibitemShut {Stop}%
\bibitem [{\citenamefont {Low}\ and\ \citenamefont {Wiebe}(2019)}]{low2019interactionpicture}%
  \BibitemOpen
  \bibfield  {author} {\bibinfo {author} {\bibfnamefont {G.~H.}\ \bibnamefont {Low}}\ and\ \bibinfo {author} {\bibfnamefont {N.}~\bibnamefont {Wiebe}},\ }\href {https://arxiv.org/abs/1805.00675} {\bibinfo {title} {Hamiltonian simulation in the interaction picture}} (\bibinfo {year} {2019}),\ \Eprint {https://arxiv.org/abs/1805.00675} {arXiv:1805.00675 [quant-ph]} \BibitemShut {NoStop}%
\bibitem [{\citenamefont {Fang}\ \emph {et~al.}(2023)\citenamefont {Fang}, \citenamefont {Lin},\ and\ \citenamefont {Tong}}]{Fang_2023}%
  \BibitemOpen
  \bibfield  {author} {\bibinfo {author} {\bibfnamefont {D.}~\bibnamefont {Fang}}, \bibinfo {author} {\bibfnamefont {L.}~\bibnamefont {Lin}},\ and\ \bibinfo {author} {\bibfnamefont {Y.}~\bibnamefont {Tong}},\ }\bibfield  {title} {\bibinfo {title} {Time-marching based quantum solvers for time-dependent linear differential equations},\ }\href {https://doi.org/10.22331/q-2023-03-20-955} {\bibfield  {journal} {\bibinfo  {journal} {Quantum}\ }\textbf {\bibinfo {volume} {7}},\ \bibinfo {pages} {955} (\bibinfo {year} {2023})}\BibitemShut {NoStop}%
\bibitem [{\citenamefont {Low}\ and\ \citenamefont {Chuang}(2017)}]{low2017hamiltoniansimulationuniformspectral}%
  \BibitemOpen
  \bibfield  {author} {\bibinfo {author} {\bibfnamefont {G.~H.}\ \bibnamefont {Low}}\ and\ \bibinfo {author} {\bibfnamefont {I.~L.}\ \bibnamefont {Chuang}},\ }\href {https://arxiv.org/abs/1707.05391} {\bibinfo {title} {Hamiltonian simulation by uniform spectral amplification}} (\bibinfo {year} {2017}),\ \Eprint {https://arxiv.org/abs/1707.05391} {arXiv:1707.05391 [quant-ph]} \BibitemShut {NoStop}%
\bibitem [{\citenamefont {Utsumi}\ and\ \citenamefont {Nakata}(2024)}]{utsumi2024explicit}%
  \BibitemOpen
  \bibfield  {author} {\bibinfo {author} {\bibfnamefont {T.}~\bibnamefont {Utsumi}}\ and\ \bibinfo {author} {\bibfnamefont {Y.}~\bibnamefont {Nakata}},\ }\href {https://arxiv.org/abs/2405.06051} {\bibinfo {title} {Explicit decoders using fixed-point amplitude amplification based on qsvt}} (\bibinfo {year} {2024}),\ \Eprint {https://arxiv.org/abs/2405.06051} {arXiv:2405.06051 [quant-ph]} \BibitemShut {NoStop}%
\bibitem [{\citenamefont {Yoshida}\ and\ \citenamefont {Kitaev}(2017)}]{yoshida2017efficient}%
  \BibitemOpen
  \bibfield  {author} {\bibinfo {author} {\bibfnamefont {B.}~\bibnamefont {Yoshida}}\ and\ \bibinfo {author} {\bibfnamefont {A.}~\bibnamefont {Kitaev}},\ }\href@noop {} {\bibinfo {title} {Efficient decoding for the hayden-preskill protocol}} (\bibinfo {year} {2017}),\ \Eprint {https://arxiv.org/abs/1710.03363} {arXiv:1710.03363 [hep-th]} \BibitemShut {NoStop}%
\bibitem [{\citenamefont {Briegel}\ \emph {et~al.}(2009)\citenamefont {Briegel}, \citenamefont {Browne}, \citenamefont {D{\"u}r}, \citenamefont {Raussendorf},\ and\ \citenamefont {Van~den Nest}}]{Briegel2009}%
  \BibitemOpen
  \bibfield  {author} {\bibinfo {author} {\bibfnamefont {H.~J.}\ \bibnamefont {Briegel}}, \bibinfo {author} {\bibfnamefont {D.~E.}\ \bibnamefont {Browne}}, \bibinfo {author} {\bibfnamefont {W.}~\bibnamefont {D{\"u}r}}, \bibinfo {author} {\bibfnamefont {R.}~\bibnamefont {Raussendorf}},\ and\ \bibinfo {author} {\bibfnamefont {M.}~\bibnamefont {Van~den Nest}},\ }\bibfield  {title} {\bibinfo {title} {Measurement-based quantum computation},\ }\href {https://doi.org/10.1038/nphys1157} {\bibfield  {journal} {\bibinfo  {journal} {Nature Physics}\ }\textbf {\bibinfo {volume} {5}},\ \bibinfo {pages} {19} (\bibinfo {year} {2009})}\BibitemShut {NoStop}%
\bibitem [{\citenamefont {Wei}(2021)}]{Wei_2021}%
  \BibitemOpen
  \bibfield  {author} {\bibinfo {author} {\bibfnamefont {T.-C.}\ \bibnamefont {Wei}},\ }\href {https://doi.org/10.1093/acrefore/9780190871994.013.31} {\bibinfo {title} {Measurement-based quantum computation}} (\bibinfo {year} {2021})\BibitemShut {NoStop}%
\bibitem [{\citenamefont {Yoshida}(2022)}]{yoshida2022recovery}%
  \BibitemOpen
  \bibfield  {author} {\bibinfo {author} {\bibfnamefont {B.}~\bibnamefont {Yoshida}},\ }\href {https://arxiv.org/abs/2106.15628} {\bibinfo {title} {Recovery algorithms for clifford hayden-preskill problem}} (\bibinfo {year} {2022}),\ \Eprint {https://arxiv.org/abs/2106.15628} {arXiv:2106.15628 [quant-ph]} \BibitemShut {NoStop}%
\bibitem [{\citenamefont {Bostanci}\ \emph {et~al.}(2023)\citenamefont {Bostanci}, \citenamefont {Efron}, \citenamefont {Metger}, \citenamefont {Poremba}, \citenamefont {Qian},\ and\ \citenamefont {Yuen}}]{bostanci2023unitary}%
  \BibitemOpen
  \bibfield  {author} {\bibinfo {author} {\bibfnamefont {J.}~\bibnamefont {Bostanci}}, \bibinfo {author} {\bibfnamefont {Y.}~\bibnamefont {Efron}}, \bibinfo {author} {\bibfnamefont {T.}~\bibnamefont {Metger}}, \bibinfo {author} {\bibfnamefont {A.}~\bibnamefont {Poremba}}, \bibinfo {author} {\bibfnamefont {L.}~\bibnamefont {Qian}},\ and\ \bibinfo {author} {\bibfnamefont {H.}~\bibnamefont {Yuen}},\ }\href {https://arxiv.org/abs/2306.13073} {\bibinfo {title} {Unitary complexity and the uhlmann transformation problem}} (\bibinfo {year} {2023}),\ \Eprint {https://arxiv.org/abs/2306.13073} {arXiv:2306.13073 [quant-ph]} \BibitemShut {NoStop}%
\bibitem [{\citenamefont {Horowitz}\ and\ \citenamefont {Maldacena}(2004)}]{Horowitz_2004}%
  \BibitemOpen
  \bibfield  {author} {\bibinfo {author} {\bibfnamefont {G.~T.}\ \bibnamefont {Horowitz}}\ and\ \bibinfo {author} {\bibfnamefont {J.}~\bibnamefont {Maldacena}},\ }\bibfield  {title} {\bibinfo {title} {The black hole final state},\ }\href {https://doi.org/10.1088/1126-6708/2004/02/008} {\bibfield  {journal} {\bibinfo  {journal} {Journal of High Energy Physics}\ }\textbf {\bibinfo {volume} {2004}},\ \bibinfo {pages} {008–008} (\bibinfo {year} {2004})}\BibitemShut {NoStop}%
\bibitem [{\citenamefont {Brown}\ \emph {et~al.}(2019)\citenamefont {Brown}, \citenamefont {Gharibyan}, \citenamefont {Penington},\ and\ \citenamefont {Susskind}}]{brown2019pythonslunch}%
  \BibitemOpen
  \bibfield  {author} {\bibinfo {author} {\bibfnamefont {A.~R.}\ \bibnamefont {Brown}}, \bibinfo {author} {\bibfnamefont {H.}~\bibnamefont {Gharibyan}}, \bibinfo {author} {\bibfnamefont {G.}~\bibnamefont {Penington}},\ and\ \bibinfo {author} {\bibfnamefont {L.}~\bibnamefont {Susskind}},\ }\href {https://arxiv.org/abs/1912.00228} {\bibinfo {title} {The python's lunch: geometric obstructions to decoding hawking radiation}} (\bibinfo {year} {2019}),\ \Eprint {https://arxiv.org/abs/1912.00228} {arXiv:1912.00228 [hep-th]} \BibitemShut {NoStop}%
\bibitem [{\citenamefont {Preskill}(2018)}]{Preskill2018qc}%
  \BibitemOpen
  \bibfield  {author} {\bibinfo {author} {\bibfnamefont {J.}~\bibnamefont {Preskill}},\ }\bibfield  {title} {\bibinfo {title} {Quantum {C}omputing in the {NISQ} era and beyond},\ }\href {https://doi.org/10.22331/q-2018-08-06-79} {\bibfield  {journal} {\bibinfo  {journal} {{Quantum}}\ }\textbf {\bibinfo {volume} {2}},\ \bibinfo {pages} {79} (\bibinfo {year} {2018})}\BibitemShut {NoStop}%
\bibitem [{\citenamefont {Chen}\ \emph {et~al.}(2023)\citenamefont {Chen}, \citenamefont {Cotler}, \citenamefont {Huang},\ and\ \citenamefont {Li}}]{Chen2023}%
  \BibitemOpen
  \bibfield  {author} {\bibinfo {author} {\bibfnamefont {S.}~\bibnamefont {Chen}}, \bibinfo {author} {\bibfnamefont {J.}~\bibnamefont {Cotler}}, \bibinfo {author} {\bibfnamefont {H.-Y.}\ \bibnamefont {Huang}},\ and\ \bibinfo {author} {\bibfnamefont {J.}~\bibnamefont {Li}},\ }\bibfield  {title} {\bibinfo {title} {The complexity of nisq},\ }\href {https://doi.org/10.1038/s41467-023-41217-6} {\bibfield  {journal} {\bibinfo  {journal} {Nature Communications}\ }\textbf {\bibinfo {volume} {14}},\ \bibinfo {pages} {6001} (\bibinfo {year} {2023})}\BibitemShut {NoStop}%
\bibitem [{\citenamefont {Hangleiter}\ \emph {et~al.}(2018)\citenamefont {Hangleiter}, \citenamefont {Bermejo-Vega}, \citenamefont {Schwarz},\ and\ \citenamefont {Eisert}}]{Hangleiter2018anticoncentration}%
  \BibitemOpen
  \bibfield  {author} {\bibinfo {author} {\bibfnamefont {D.}~\bibnamefont {Hangleiter}}, \bibinfo {author} {\bibfnamefont {J.}~\bibnamefont {Bermejo-Vega}}, \bibinfo {author} {\bibfnamefont {M.}~\bibnamefont {Schwarz}},\ and\ \bibinfo {author} {\bibfnamefont {J.}~\bibnamefont {Eisert}},\ }\bibfield  {title} {\bibinfo {title} {Anticoncentration theorems for schemes showing a quantum speedup},\ }\href {https://doi.org/10.22331/q-2018-05-22-65} {\bibfield  {journal} {\bibinfo  {journal} {{Quantum}}\ }\textbf {\bibinfo {volume} {2}},\ \bibinfo {pages} {65} (\bibinfo {year} {2018})}\BibitemShut {NoStop}%
\bibitem [{\citenamefont {Dalzell}\ \emph {et~al.}(2022)\citenamefont {Dalzell}, \citenamefont {Hunter-Jones},\ and\ \citenamefont {Brand\~ao}}]{PRXQuantum.3.010333}%
  \BibitemOpen
  \bibfield  {author} {\bibinfo {author} {\bibfnamefont {A.~M.}\ \bibnamefont {Dalzell}}, \bibinfo {author} {\bibfnamefont {N.}~\bibnamefont {Hunter-Jones}},\ and\ \bibinfo {author} {\bibfnamefont {F.~G. S.~L.}\ \bibnamefont {Brand\~ao}},\ }\bibfield  {title} {\bibinfo {title} {Random quantum circuits anticoncentrate in log depth},\ }\href {https://doi.org/10.1103/PRXQuantum.3.010333} {\bibfield  {journal} {\bibinfo  {journal} {PRX Quantum}\ }\textbf {\bibinfo {volume} {3}},\ \bibinfo {pages} {010333} (\bibinfo {year} {2022})}\BibitemShut {NoStop}%
\bibitem [{\citenamefont {Harrow}\ \emph {et~al.}(2009)\citenamefont {Harrow}, \citenamefont {Hassidim},\ and\ \citenamefont {Lloyd}}]{HHL2009}%
  \BibitemOpen
  \bibfield  {author} {\bibinfo {author} {\bibfnamefont {A.~W.}\ \bibnamefont {Harrow}}, \bibinfo {author} {\bibfnamefont {A.}~\bibnamefont {Hassidim}},\ and\ \bibinfo {author} {\bibfnamefont {S.}~\bibnamefont {Lloyd}},\ }\bibfield  {title} {\bibinfo {title} {Quantum algorithm for linear systems of equations},\ }\bibfield  {journal} {\bibinfo  {journal} {Physical Review Letters}\ }\textbf {\bibinfo {volume} {103}},\ \href {https://doi.org/10.1103/physrevlett.103.150502} {10.1103/physrevlett.103.150502} (\bibinfo {year} {2009})\BibitemShut {NoStop}%
\bibitem [{\citenamefont {Costa}\ \emph {et~al.}(2021)\citenamefont {Costa}, \citenamefont {An}, \citenamefont {Sanders}, \citenamefont {Su}, \citenamefont {Babbush},\ and\ \citenamefont {Berry}}]{costa2021optimalscalingquantumlinear}%
  \BibitemOpen
  \bibfield  {author} {\bibinfo {author} {\bibfnamefont {P.~C.~S.}\ \bibnamefont {Costa}}, \bibinfo {author} {\bibfnamefont {D.}~\bibnamefont {An}}, \bibinfo {author} {\bibfnamefont {Y.~R.}\ \bibnamefont {Sanders}}, \bibinfo {author} {\bibfnamefont {Y.}~\bibnamefont {Su}}, \bibinfo {author} {\bibfnamefont {R.}~\bibnamefont {Babbush}},\ and\ \bibinfo {author} {\bibfnamefont {D.~W.}\ \bibnamefont {Berry}},\ }\href {https://arxiv.org/abs/2111.08152} {\bibinfo {title} {Optimal scaling quantum linear systems solver via discrete adiabatic theorem}} (\bibinfo {year} {2021}),\ \Eprint {https://arxiv.org/abs/2111.08152} {arXiv:2111.08152 [quant-ph]} \BibitemShut {NoStop}%
\bibitem [{\citenamefont {Low}\ and\ \citenamefont {Su}(2024)}]{low2024quantumlinear}%
  \BibitemOpen
  \bibfield  {author} {\bibinfo {author} {\bibfnamefont {G.~H.}\ \bibnamefont {Low}}\ and\ \bibinfo {author} {\bibfnamefont {Y.}~\bibnamefont {Su}},\ }\href {https://arxiv.org/abs/2410.18178} {\bibinfo {title} {Quantum linear system algorithm with optimal queries to initial state preparation}} (\bibinfo {year} {2024}),\ \Eprint {https://arxiv.org/abs/2410.18178} {arXiv:2410.18178 [quant-ph]} \BibitemShut {NoStop}%
\bibitem [{\citenamefont {Nielsen}\ and\ \citenamefont {Chuang}(2010)}]{Nielsen_Chuang_2010}%
  \BibitemOpen
  \bibfield  {author} {\bibinfo {author} {\bibfnamefont {M.~A.}\ \bibnamefont {Nielsen}}\ and\ \bibinfo {author} {\bibfnamefont {I.~L.}\ \bibnamefont {Chuang}},\ }\href@noop {} {\emph {\bibinfo {title} {Quantum Computation and Quantum Information: 10th Anniversary Edition}}}\ (\bibinfo  {publisher} {Cambridge University Press},\ \bibinfo {year} {2010})\BibitemShut {NoStop}%
\bibitem [{\citenamefont {Preskill}(2015)}]{preskill_chp_2}%
  \BibitemOpen
  \bibfield  {author} {\bibinfo {author} {\bibfnamefont {J.}~\bibnamefont {Preskill}},\ }\href {https://www.preskill.caltech.edu/ph219/chap2_15.pdf} {\bibinfo {title} {Lecture notes for ph219/cs219: Quantum computation, chapter 2}} (\bibinfo {year} {2015}),\ \bibinfo {note} {accessed: 2025-01-31}\BibitemShut {NoStop}%
\bibitem [{\citenamefont {Uhlmann}(1976)}]{Uhlmann1976273}%
  \BibitemOpen
  \bibfield  {author} {\bibinfo {author} {\bibfnamefont {A.}~\bibnamefont {Uhlmann}},\ }\bibfield  {title} {\bibinfo {title} {The “transition probability” in the state space of a *-algebra},\ }\href {https://doi.org/https://doi.org/10.1016/0034-4877(76)90060-4} {\bibfield  {journal} {\bibinfo  {journal} {Reports on Mathematical Physics}\ }\textbf {\bibinfo {volume} {9}},\ \bibinfo {pages} {273} (\bibinfo {year} {1976})}\BibitemShut {NoStop}%
\end{thebibliography}%

\end{document}